\def\llncs{0}
\def\fullpage{1}
\def\anonymous{0}
\def\authnote{0}
\def\notxfont{0}
\def\submission{0}%when submit 30page version to conference, it is 1. For full version, it is 0.

\ifnum\submission=1
\def\llncs{1}
\fi

\ifnum\llncs=1
\documentclass[envcountsect,a4paper,runningheads,10pt]{llncs}%default is 10pt
\else
	\documentclass[letterpaper,hmargin=1.05in,vmargin=1.05in,11pt]{article}
			\ifnum\fullpage=1
		\usepackage{fullpage}
		\fi
\fi

%=========  Preamble Part ===========
\usepackage[%
%  dvipdfmx, 
  colorlinks=true,
  citecolor=blue,
  pagebackref=true
]{hyperref}

\usepackage{amsmath, amsfonts, amssymb, mathtools,amscd,wrapfig,xspace}

\usepackage{amsthm}

\usepackage{lmodern}
\usepackage[T1]{fontenc}
\usepackage[utf8]{inputenc}

\usepackage{etex}

\usepackage{arydshln} % In order to use \hdashline
\usepackage{url}
\usepackage{ifthen}
\usepackage{bm}
\usepackage{multirow}
\usepackage[dvips]{graphicx}
\usepackage[usenames]{color}
\usepackage{xcolor,colortbl} % Leave out in case the usepackage cannot be found. Not critical
\usepackage{threeparttable}
\usepackage{comment}
\usepackage{paralist,verbatim}
\usepackage{cases}
\usepackage{booktabs}
\usepackage{braket}
\usepackage{cancel} 
\usepackage{ascmac} 
\usepackage{framed}
\usepackage{authblk}
\usepackage{pifont}
\usepackage{qcircuit}
\usepackage{tikz}
\usetikzlibrary{cd}
%\usepackage{tikz}
%\usetikzlibrary{quantikz}
%\usepackage{autonum}
%\usepackage[dvipsnames]{xcolor}
\definecolor{darkblue}{rgb}{0,0,0.6}
\definecolor{darkgreen}{rgb}{0,0.5,0}
\definecolor{maroon}{rgb}{0.5,0.1,0.1}
\definecolor{dpurple}{rgb}{0.2,0,0.65}

\usepackage[capitalise,noabbrev]{cleveref}
\usepackage[absolute]{textpos}
\usepackage[final]{microtype}
\usepackage[absolute]{textpos}
\usepackage{everypage}
\DeclareMathAlphabet{\mathpzc}{OT1}{pzc}{m}{it}

\ifnum\submission=1
     \crefname{appendix}{Suppl}{Supplements} 
\fi
\usepackage{algorithmic}
\usepackage{algorithm}
\usepackage{here}

\usepackage[inline]{enumitem}

\usepackage[normalem]{ulem}

\usepackage{thmtools}
\usepackage{thm-restate}

%%%%%%%%%%%%%%% Proof environment in non-LNCS style %%%%%%%%%%%%%%%%%%%

\newtheoremstyle{thicktheorem}%
{\topsep}
{\topsep}
{\itshape}{}%
{\bfseries}%
{.}
{ }%
{\thmname{#1}\thmnumber{ #2}%
		\thmnote{ (#3)}%
}

\newtheoremstyle{remark}%name
{\topsep}
{\topsep}
	{}%body font
	{}%indent amount
	{}%theorem head font
	{.}%punctuation after theorem head
	{ }%space after theorem head
	{\textit{\thmname{#1}}\thmnumber{ #2}%theorem head specs
			\thmnote{ (#3)}%
	}

\ifnum\llncs=0

	\theoremstyle{thicktheorem}
	\newtheorem{theorem}{Theorem}[section]
	\newtheorem{lemma}[theorem]{Lemma}
	\newtheorem{corollary}[theorem]{Corollary}
	\newtheorem{proposition}[theorem]{Proposition}
	\newtheorem{definition}[theorem]{Definition}
	
    \newtheorem{construction}[theorem]{Construction}

	\theoremstyle{remark}
	\newtheorem{claim}[theorem]{Claim}
	\newtheorem{remark}[theorem]{Remark}

    \crefname{claim}{Claim}{Claims}

\else
    %\newtheorem{MyClaim}[theorem]{Claim}
    %\crefname{MyClaim}{Claim}{Claims}

\fi
%%%%%%%%%%%%%%%%%%%%%%%%%%%%%%%%%%%%%%%%%%%%%%%%%%%%%%%%%%%%%%%%%%%%%%%%%%%%

	\crefname{theorem}{Theorem}{Theorems}
	\crefname{assumption}{Assumption}{Assumptions}
	\crefname{construction}{Construction}{Constructions}
	\crefname{corollary}{Corollary}{Corollaries}
	\crefname{conjecture}{Conjecture}{Conjectures}
	\crefname{definition}{Definition}{Definitions}
	\crefname{example}{Example}{Examples}
	\crefname{experiment}{Experiment}{Experiments}
	\crefname{counterexample}{Counterexample}{Counterexamples}
	\crefname{lemma}{Lemma}{Lemmata}
	\crefname{observation}{Observation}{Observations}
	\crefname{proposition}{Proposition}{Propositions}
	\crefname{remark}{Remark}{Remarks}
	\crefname{claim}{Claim}{Claims}
	\crefname{fact}{Fact}{Facts}
	\crefname{note}{Note}{Notes}

\ifnum\llncs=1
 \crefname{appendix}{App.}{Appendices}
 \crefname{section}{Sec.}{Sections}
\else
\fi

\ifnum\llncs=1
\pagestyle{plain}
\renewcommand*{\backref}[1]{}
\else
	\renewcommand*{\backref}[1]{(Cited on page~#1.)}
	\ifnum\notxfont=1
	\else
		\usepackage{newtxtext}
	\fi
\fi

\ifnum\authnote=0  %%%%% Remove comments %%%%%
\newcommand{\mor}[1]{}
\newcommand{\shogo}[1]{}
\newcommand{\takashi}[1]{}
\newcommand{\fuyuki}[1]{}
\newcommand{\amit}[1]{}
\newcommand{\tamer}[1]{}

\else
\newcommand{\mor}[1]{$\ll$\textsf{\color{red} Tomoyuki: { #1}}$\gg$}
\newcommand{\takashi}[1]{$\ll$\textsf{\color{orange} Takashi: { #1}}$\gg$}
\newcommand{\shogo}[1]{$\ll$\textsf{\color{darkgreen} Shogo: { #1}}$\gg$}
\newcommand{\fuyuki}[1]{$\ll$\textsf{\color{darkblue} Fuyuki: { #1}}$\gg$}

\newcommand{\amit}[1]{$\ll$\textsf{\color{violet} Amit: { #1}}$\gg$}
\newcommand{\tamer}[1]{$\ll$\textsf{\color{magenta} Tamer: { #1}}$\gg$}

\DeclareRobustCommand{\Erase}{\bgroup\markoverwith{\textcolor{red}{\rule[.5ex]{2pt}{0.4pt}}}\ULon}

 % command to highlight revision
\fi

\ifnum\llncs=1

\spnewtheorem{construction}[theorem]{Construction}{\bfseries}{\itshape}

\fi

\newcommand{\proj}[1]{\ket{#1}\!\bra{#1}}
\newcommand{\ip}[2]{\langle{#1}|{#2}\rangle}
\newcommand{\samp}{\mathpzc{Samp}}
\newcommand{\ver}{\mathpzc{Ver}}
\newcommand{\kgen}{\mathpzc{KeyGen}}
\newcommand{\mint}{\mathpzc{Mint}}
\newcommand{\verify}{\mathpzc{Verify}}

\newcommand{\RR}{\mathbb{R}}
\newcommand{\NN}{\mathbb{N}}
\newcommand{\CC}{\mathbb{C}}
\newcommand{\QQ}{\mathbb{Q}}

\newcommand{\Tr}{\mathrm{Tr}}

%%%%names for commitment schemes

%%%Macros for complexity classes

%%%Macros for PRS

%%%Macros for Pseudo random function%%%

%%%Macros for Garbling Scheme%%%

%\newcommand{\GC}{\mathsf{GC}}

%%% Macros for Certified Deletion %%%

%%% Macros for OW-CPA PKE

%%% Macros for NCE

%%% Macros for IO

%%% Macros for NIZK

%%% Macros for NTCF

%%% Macros for WE 

%%% Macros for OSS 

%%% Theorem environments END

%%% operators

\newcommand{\ceil}[1]{\left\lceil{#1}\right\rceil}

\newcommand{\cA}{\mathcal{A}}
\newcommand{\cB}{\mathcal{B}}

\newcommand{\cO}{\mathcal{O}}

%%%%% hat Greek letter %%%%%%

% \newcommand{\tlx}{\widetilde{x}}
% \def\tlr{\widetilde{r}}
% \def\tlR{\widetilde{R}}
% \def\tlK{\widetilde{K}}
% \def\tlG{\widetilde{G}}
% \def\tlY{\widetilde{Y}}
% \def\tlP{\tilde{P}}
% \def\tlQ{\tilde{Q}}
% \def\tlU{\widetilde{U}}
% \def\tlV{\tilde{V}}
% \def\tlC{\widetilde{C}}
% \def\tly{\tilde{y}}
% \def\tlu{\widetilde{u}}
% \def\tlu{\widetilde{u}}
% \def\tlf{\widetilde{f}}

\def\makeuppercase#1{
\expandafter\newcommand\csname tl#1\endcsname{\widetilde{#1}}
}

\def\makelowercase#1{
\expandafter\newcommand\csname tl#1\endcsname{\widetilde{#1}}
}

%%% Sets

\newcommand{\reduction}{\mathcal{R}}

%%% spaces

%registers 

%%% parameters

\newcommand{\secp}{\lambda}
\renewcommand{\sec}{\secp}

%%% Parties

%%% abbreviated primitive name
%%% security definitions

%%% assumptions
%%% standard assumptions

%%% Games, Experiments
% \newcommand{\sfreal}[2]{\mathsf{Real}^{#1}\textrm{-}\mathsf{#2}}
% \newcommand{\sfsim}[2]{\mathsf{Sim}^{#1}\textrm{-}\mathsf{#2}}

%Keys, messages, ciphertext, signatures

%%%%%%%%%%%%%%%%%%%%%%%%%%%%
% Utilities
%%%%%%%%%%%%%%%%%%%%%%%%%%%%

\newcommand{\oracle}{\mathcal{O}}

\newcommand*{\algo}[1]{\ensuremath{\mathsf{#1}}}

%{\mathsf{int}(#1)}

%\newcommand*{\ahyph}{\attack{\mathchar`-}}
%\newcommand*{\security}[1]{\ensuremath{\textsc{#1}}}
%\newcommand*{\shyph}{\security{-}}
%\newcommand{\prob}[2][]{\mathcmd{\Pr\left[{#2}\if!#1!\else\;\middle\vert\;{#1}\fi\right]}} % probability

%%%%%%%%%%%%%%%%%%%%%%%%%%%%%%%%%%%%%%%%%%%%%%%%%%%%%%%%%%%%%%%%%%%%%%%%%%%

\newenvironment{boxfig}[2]{\begin{figure}[#1]\fbox{\begin{minipage}{0.97\linewidth}
                        \vspace{0.2em}
                        \makebox[0.025\linewidth]{}
                        \begin{minipage}{0.95\linewidth}
            {{
                        #2 }}
                        \end{minipage}
                        \vspace{0.2em}
                        \end{minipage}}}{\end{figure}}

%Example:
%\protocol{Header}{Caption}{label}{the protocol}

%%%%%%%%%%%%%%%%%%%%%%%%%%%%%%%%%%%%%%%%%%%%%%%%%%%%%%%%%%%%%%%%%%%%%%%%%%%

\newcommand{\bit}{\{0,1\}}

%%% Font Style

%%\newcommand{\hyphen}{\textrm{-}}

%%% Functions

%%% special messages

%%% Algorithms

\newcommand{\Ver}{\algo{Ver}}

\newcommand{\TD}{\algo{TD}}

%% All-but-one reduction and watermarking

%%% Indistinguishability

%\newcommand{\check}{\stackrel{{?}{=}}

\newcommand{\negl}{{\mathsf{negl}}}

%\newcommand{\adva}[2]{\mathsf{Adv}_{#1}^{\mathrm{#2}}}

%%----------------------

%%-------------------------------

%%%%%%%%%%%%%%%%%%%%%%% text macros

%mathop
%%%%%%%%%%%%%%%%%%%%%%% general useful macros

\newcommand{\poly}{{\mathrm{poly}}}

\newcommand{\bin}{\{0,1\}}

% \newcommand{\pr}[1]{\Pr\left[#1\right]}

%\newcommand{\getsr}{\mathbin{\stackrel{\mbox{\tiny R}}{\gets}}}

% \newcommand{\Exp}{\mathop{\mathrm{E}\displaylimits}}
% \newcommand{\Var}{\mathop{\mathrm{Var}\displaylimits}}

%command for algorithmic

%macro for tikzcd

\usetikzlibrary{decorations.markings,calc}
\tikzset{
  cross/.style={
    postaction={decorate,decoration={markings,
    mark=at position 0.45 with {\draw[-,line width=1pt] (-10pt,-10pt) -- (10pt,10pt);\draw[-,line width=1pt] (-10pt,10pt) -- (10pt,-10pt);}}}
  }
}

\makeatletter
\DeclareRobustCommand
  \myvdots{\vbox{\baselineskip4\p@ \lineskiplimit\z@
    \hbox{.}\hbox{.}\hbox{.}}}
\makeatother

\newcommand{\Span}{\mathsf{Span}}

\newcommand{\G}{\mathpzc{G}}
\renewcommand{\S}{\mathpzc{S}}
\newcommand{\V}{\mathpzc{V}}

\newcommand{\qefid}{\text{QEFID}\xspace}
\newcommand{\uncl}{\text{UCSGs}\xspace}
\newcommand{\struncl}{\text{strong-UCSGs}\xspace}

\newcommand{\estver}{\badv_{\V}}
\newcommand{\estvm}{\badv_{{\VM}}}
\newcommand{\VM}{\mathpzc{VM}}
\newcommand{\adv}{\mathcal{A}}
\newcommand{\badv}{\mathcal{B}}

\newcommand{\SD}{\mathsf{SD}}

\newcommand{\tomography}{\mathpzc{GentleSearch}}

\newcommand{\Expct}{\mathbb{E}}

\newcommand{\keyspace}{\mathcal{K}}
\renewcommand{\braket}[2]{\langle #1|#2\rangle}
\newcommand{\ketbra}[2]{\ket{#1}\bra{#2}}
\newcommand{\reflect}{R}
\newcommand{\symd}{\mathsf{\lor^{\ell+1}\CC^N}}

\newcommand{\PP}{\textbf{PP}\xspace}

\newcommand{\EE}{\mathbb{E}}
\newcommand{\qpspace}{\textbf{QPSPACE}\xspace}

\renewcommand{\QQ}{\mathbf{M}}

\newcommand{\submitorfull}[2]{

\ifnum\submission=1

    #1
    
\else

    \input{proofs/#2}

\fi

}

\title{
%Efficiently Verifiable One-wayness is Strictly Stronger than Inefficiently Verifiable One-wayness in a Quantum World
A New World in the Depths of Microcrypt:\\ Separating OWSGs and 
Quantum Money from \qefid 
}

%\author{}
%\institute{}
\ifnum\anonymous=1
\ifnum\llncs=1
\author{\empty}\institute{\empty}
\else
\author{}
\fi
\else
%
%  For camera ready version.
%
\ifnum\llncs=1
\author{
Amit Behera\inst{1}\and Giulio Malavolta\inst{2}\and Tomoyuki Morimae\inst{3}\and Tamer Mour\inst{2}\and Takashi Yamakawa\inst{4,3} 
}
\institute{
Ben-Gurion University, Israel \and
Bocconi University, Italy\and
	Yukawa Institute for Theoretical Physics, Kyoto University, Kyoto, Japan \and NTT Social Informatics Laboratories, Tokyo, Japan 
}
\else
%
%   For full/eprint version, etc.
%
\author[1]{Amit Behera}
\author[2]{Giulio Malavolta}
\author[3]{Tomoyuki Morimae}
\author[4]{Tamer Mour}
\author[5,3]{Takashi Yamakawa}
\affil[1]{{\small Department of Computer Science, Ben-Gurion University of the Negev, Beersheba, Israel}\authorcr{\small behera@post.bgu.ac.il
}}
\affil[2]{{\small Bocconi University, Milan, Italy}\authorcr{\small giulio.malavolta@hotmail.it}}
\affil[3]{{\small Yukawa Institute for Theoretical Physics, Kyoto University, Kyoto, Japan}\authorcr{\small tomoyuki.morimae@yukawa.kyoto-u.ac.jp}}
\affil[4]{{\small Bocconi University, Milan, Italy}\authorcr{\small tamer.mour@unibocconi.it}}
\affil[5]{{\small NTT Social Informatics Laboratories, Tokyo, Japan}\authorcr{\small 
takashi.yamakawa@ntt.com}}

\fi %%%%% END OF LNCS branch
\fi

\date{}

\begin{document}

\maketitle

\begin{abstract}
While in classical cryptography one-way functions (OWFs) are widely regarded as the ``minimal assumption'', the situation in quantum cryptography is less clear. Recent works have put forward two concurrent candidates for the minimal assumption in quantum cryptography: One-way state generators (OWSGs), postulating the existence of a hard \emph{search} problem with an efficient verification algorithm, and EFI pairs, postulating the existence of a hard \emph{distinguishing} problem.
%\tamer{i added ``hard''} 
Two recent papers [Khurana and Tomer STOC'24; Batra and Jain FOCS'24] showed that OWSGs imply EFI pairs, but the
reverse direction remained open. 

In this work, we give strong evidence that the opposite direction does not hold: We show that there is a quantum unitary oracle relative to which EFI pairs exist but OWSGs do not. In fact, we show a slightly stronger statement that holds also for EFI pairs that output classical bits (QEFID pairs).

\iffalse
From this result, we obtain the following new implications:
\begin{itemize}
    \item We show that relative to the same oracle, QEFID pairs exist but efficiently-verifiable one-way puzzles do not, solving a problem left open in [Chung, Goldin, and Gray Crypto'24].
    \item We show that relative to the same oracle, one-way puzzles exist but OWSGs do not.\tamer{why are we presenting the above two points separately? Is it not just a quantum oracle where \qefid and owPuzz exist but OWSGs and ev-owPuzz don't exist? A unified statement makes it clearer that we are suggesting there is a ``separate world in the depth of microcrypt''}
\end{itemize}

\tamer{what about the following formulation of the above paragraph?}
\fi

As a consequence, we separate, via our oracle, \qefid pairs and one-way puzzles from OWSGs and several other Microcrypt primitives, including efficiently verifiable one-way puzzles and unclonable state generators. In particular, this solves a problem left open in [Chung, Goldin, and Gray Crypto'24].

%\tamer{i removed the following paragraph from the list of implications above because it is not}
Using similar techniques, we also establish a fully black-box separation (which is slightly weaker than an oracle separation) between private-key quantum money schemes and QEFID pairs.% \takashi{Strictly speaking, this is incorrect since we only show statistical (query-bounded) attack, and thus we don't show that (computationally-secure) private-key quantum money schemes do not exist relative to the oracle.}\tamer{fixed}

One conceptual implication of our work is that the existence of an efficient verification algorithm may lead to qualitatively stronger primitives in quantum cryptography. %\tamer{it is not clear how efficient verifiability connects to the results (EFI=IV-OWSG is not mentioned)}
%\gm{Verifiability is briefly mentioned above, but I am open to suggestions :)}

% What we actually show is stonger than that: QEFID pairs exist but OWSGs do not.
% Here QEFID pairs are a variant of EFI pairs where outputs are not quantum states but classical bit strings.
% From this result, we also obtain several significant corollaries.
% First, we show that relative to the oracle, QEFID pairs exist but efficiently-verifiable one-way puzzles (EV-OWPuzzs)
% do not, which solves the open problem of [Chung, Goldin, and Gray Crypto'24].
% Second, we show that OWPuzzs exist but OWSGs do not,
% which separates two different notions of quantum onw-wayness.
% Previous oracle separations use common
% Haar random unitaries or states models, but we introduce a new
% quantum oracle to show our result.
% The new technique can also be used to separate quantum money 
% from fundamental primitives.
% In particular, we show that 
% relative to the oracle, QEFID pairs exist but private-key quantum money schemes do not.
\end{abstract}

\if0
\begin{abstract}
Finding the minimum assumption in quantum cryptography is a well-studied problem for which the two leading candidates are one-way state generators (OWSGs) and EFI pairs. 
Recently, [Khurana and Tomer, STOC'23] showed that OWSGs with pure outputs imply EFI pairs, 
and later, an inefficiently verifiable version of OWSGs was shown to be equivalent to EFI pairs [Batra and Jain, FOCS'24].
However, 
the question whether EFI pairs imply (efficiently verifiable) OWSGs is left open.
%\mor{the relationship is umbigious. How about the following?  ...the question whether EFI pairs imply (efficiently verifiable) OWSGs is left open.}

Classically, one-way functions, the one-wayness of which are efficiently verifiable, and EFID pairs are equivalent, but a similar result seems unlikely to hold between OWSGs and EFI pairs. We formalize this intuition by showing an oracle separation between OWSGs and EFI pairs. We construct an oracular world where OWSGs do not exist, but a quantum-generated EFID (\qefid) pairs exist that satisfy indistinguishability against arbitrary adversary making sub-exponential queries to the oracles. Our construction of \qefid pairs is pseudorandom, i.e., one of the distributions is the uniform distribution. We also show a statistical but query-efficient attack on any private-key quantum money scheme in this world, thereby separating private-key quantum money from \qefid pairs. Since our construction of EFI is a \qefid pair\mor{I do not understand this sentence. QEFID is not EFID}\amit{I had forgotten to replace EFID with qefid in the abstract. Is it fine now?} and the fact that Efficiently Verifiable One-Way Puzzle (EV-OWPuzz) implies OWSG, we also get a separation between EFID pairs and EV-OWPuzzs, which resolves a crucial open question in the QCCC model (Quantum Computation with Classical Communication) of cryptography as raised by Chung et al. 2024. Our techniques are conceptually different from prior works as we consider a new oracular world that does not involve common Haar random states or unitaries and hence can be of independent interest.\amit{Are you guys fine with the last line?}
\end{abstract}
\fi

\ifnum\submission=1
\else
\clearpage
\newpage
\setcounter{tocdepth}{2}
\tableofcontents
\fi
\newpage

\iffalse
\fi

\section{Introduction}

In cryptography, a central question is to determine the minimal assumptions needed to construct various cryptographic primitives. In classical cryptography, the existence of one-way functions (OWFs) is regarded as the \emph{minimal} computational assumption: On the one hand, OWFs imply the existence of a broad range of basic cryptographic primitives~\cite{STOC:LubRac86,FOCS:ImpLub89,STOC:ImpLevLub89}, including pseudorandom generators (PRGs)~\cite{HILL99}, pseudorandom functions (PRFs)~\cite{JACM:GolGolMic86}, commitments~\cite{C:Naor89}, symmetric key encryption (SKE), and digital signatures~\cite{STOC:Rompel90}. On the other hand, OWFs are implied by essentially any non-trivial cryptographic primitive with computational security. This equivalence gives strong evidence that OWFs may be the most basic building block in the classical cryptographic landscape.

In contrast, the situation in quantum cryptography appears to be fundamentally different. Morimae and Yamakawa~\cite{C:MorYam22} and Ananth, Qian, and Yuen \cite{C:AnaQiaYue22} independently initiated the study of quantum cryptography from assumptions potentially weaker than OWFs. Specifically, these works demonstrate constructions of various quantum cryptographic primitives, including commitments, SKE, and digital signatures, from pseudorandom state generators (PRSGs)~\cite{C:JiLiuSon18}, a quantum analog of PRGs. While PRSGs can be constructed from OWFs, they are unlikely to imply OWFs~\cite{Kre21,STOC:KQST23,STOC:LomMaWri24}. We use the term \emph{Microcrypt} to denote the set of cryptographic primitives that are potentially weaker than OWFs. %Unfortunately, despite the plethora of works, we have a limited understanding of the relation between different cryptographic primitives in Microcrypt. 

Despite the large body of work in the subject, our understanding of the relation between different cryptographic primitives in Microcrypt is still limited. In particular, it is currently considered an open problem to determine what is the \emph{minimal} assumption in quantum cryptography.

%These researches opened the door to exploring the minimal assumptions in quantum cryptography.
%\amit{I changed the last paragraph.}
\paragraph{EFI pairs  and one-way state generators.}
In this vein, Brakerski, Canetti, and Qian~\cite{ITCS:BraCanQia23} introduced the concept of EFI pairs, which are efficiently samplable pairs of quantum states that are statistically far but computationally indistinguishable. They showed that the existence of EFI pairs is equivalent to that of many quantum cryptographic primitives, such as commitments, zero-knowledge proofs, and multi-party computation. 
In a parallel line of work, Morimae and Yamakawa~\cite{C:MorYam22,TQC:MorYam24} proposed one-way state generators (OWSGs), which are a quantum analogue of OWFs, as another potential minimal assumption in quantum cryptography.
They demonstrated that the existence of OWSGs is implied by the existence of many primitives, including (pure-state) private-key quantum money schemes, 
SKE, and digital signatures. Roughly, a OWSG is a quantum polynomial-time (QPT) algorithm that maps a classical string $x$ to a (possibly mixed) quantum state $\psi_x$ satisfying the following requirements:\footnote{
OWSGs are first defined in ~\cite{C:MorYam22} for the case of pure state outputs and then generalized to the case of mixed state outputs in \cite{TQC:MorYam24}. 
In this work, OWSGs mean the mixed state output version, unless otherwise specified. 
} 
\begin{itemize}
\item {\bf Efficient verifiability:} There is a QPT verification algorithm $\V$ that accepts $(x,\psi_x)$ with an overwhelming probability over uniform $x$.
\item {\bf One-wayness:} Given polynomially many copies of $\psi_x$, no QPT adversary can find $x'$ such that $\V(x',\psi_x)$ accepts, except with negligible probability.
\end{itemize}
We can think of EFI pairs and OWSGs as postulating two different types of hardness, that in the classical world are equivalent~\cite{Gol90}: EFI pairs postulate hardness of decision, namely that of a distinguishing problem. On the other hand, OWSGs postulate hardness of search, namely that of inverting an easy-to-compute function, where solutions are efficiently verifiable. Classically, these two hardness notions are equivalent due to the Goldreich-Levin theorem \cite{GL89}. Furthermore, in the classical world the existence of an efficient verification algorithm is without loss of generality, since one can verify the validity of a pair $(x, f(x))$, by simply recomputing $f$.
% as postulating two different properties of a cryptographic primitive, that in the classical world are true without loss of generality: EFI pairs postulate the hardness of a \emph{distinguishing} problem, which classically is as hard as the search one, because of the Goldreich-Levin theorem \cite{GL89}. Instead, OWSGs postulate the existence of an efficient verification algorithm, which classically always exists, since one can verify the validity of a pair $(x, f(x))$, by simply recomputing $f$. 
%\tamer{this feels a bit like comparing oranges to apples: ``EFI pairs postulate some hardness notion''(a hardness property) and ``OWSG postulate efficient verification''(a structure property)? where is the hardness in OWSG?}\tamer{also, how is hardness of distinguishing ``w.l.o.g.'' in the classical world?} \tamer{What about the following: We can ... as postulating two different types of hardness, that in the classical world are equivalent: EFI pairs postulate hardness of decision, namely that of a distinguishing problem. On the other hand, OWSGs postulate hardness of search, namely that of inverting an easy-to-compute function, where solutions are efficiently verifiable (while efficient verification is inherent in OWFs, it is not the case for OWSGs). Classically, these two hardness notions are equivalent due to the Goldreich-Levin theorem.} \gm{Gave it another shot}

A recent breakthrough by Khurana and Tomer~\cite{STOC:KhuTom24} connected these two primitives by showing that OWSGs with pure state outputs imply EFI pairs. Soon after that, Batra and Jain~\cite{BJ24} generalized their result to the case of mixed state outputs. In fact, they show equivalence between  EFI pairs and inefficiently-verifiable OWSGs (IV-OWSGs), which are a weaker variant of OWSGs where the verification algorithm is allowed to run in unbounded-time.\footnote{
Batra and Jain~\cite{BJ24} refer to IV-OWSGs as statistically-verifiable OWSGs (sv-OWSGs).
The term ``IV-OWSG'' was introduced by 
Malavolta, Morimae, Walter, and Yamakawa \cite{MMWY24} who  concurrently showed a similar result but with an exponential security loss.} 
As a consequence, we now understand that most quantum cryptographic primitives imply EFI pairs (or, equivalently, IV-OWSGs) suggesting that they currently represent the minimal assumptions in quantum cryptography. However, at present, nothing is known about the \emph{reverse direction}: Could it be that EFI pairs imply OWSGs, or is it the case that EFI pairs characterize a separate world in the depths of Microcrypt?
%\tamer{add ``Or is it the case that EFI pairs characterize a separate world inside Microcrypt, where OWSGs do not necessarily exist?''} \tamer{<-- an attempt to connect to the title} \gm{Check now}
Note that, by \cite{BJ24}, this is equivalent to asking whether IV-OWSGs imply OWSGs. In other words:
\begin{quote}
 \centering
 \emph{Does efficient verification come ``for free'' in quantum cryptography?}
\end{quote}
The objective of our work is to make progress on this question. As we shall discuss next, the phenomenon of efficient vs inefficient verification is quite common in quantum cryptography, and many of the known open problems revolve around this question.

% This body of research indicates that EFI pairs are necessary for most quantum cryptographic primitives. However, it remains unclear whether they are sufficient. While EFI pairs are known to imply some of the quantum cryptographic primitives as shown in \cite{ITCS:BraCanQia23}, many other primitives have yet to be constructed from EFI pairs. Notably, the following question remains open:
% \begin{center}
% {\it Do EFI pairs imply OWSGs?}
% \end{center}

\paragraph{The QCCC model.} A well-studied model in the literature is the setting of quantum computation with classical communication (QCCC), where quantum computations are performed locally, but all communication is restricted to be classical. In this model, \cite{STOC:KhuTom24} postulated the existence of a primitive, called one-way puzzles (OWPuzzs), which directly generalizes OWFs. A OWPuzz consists of a QPT sampling algorithm $\S$ and an unbounded-time verification algorithm $\V$, where $\S$ produces a pair of classical strings: a key $k$ and a puzzle $s$ such that $\V(k,s)$ accepts. The one-wayness requires that no QPT adversary, given $s$, can find a key $k'$ such that $\V(k',s)$ accepts, except with negligible probability. In \cite{STOC:KhuTom24} it is shown that OWPuzzs are implied by many primitives in the QCCC model, including public and symmetric key encryption, digital signatures, and commitments. 

This idea was further developed in \cite{C:ChuGolGra24}, who introduced an efficiently verifiable variant of OWPuzzs (EV-OWPuzzs) 
where $\V$ is a QPT algorithm, and showed that many (but not all) of the aforementioned primitives in the QCCC model also imply EV-OWPuzzs. For instance, the QCCC version of EFI pairs ($\qefid$ pairs), where the distributions are classical bits, are not known to imply EV-OWPuzzs. This question was left open in \cite{C:ChuGolGra24}. 

The techniques developed in \cite{C:ChuGolGra24}, which are in turn based on \cite{Kre21}, crucially rely on the fact that EV-OWPuzzs capture the hardness of problems with classical inputs and classical outputs, but do not seem to extend to the context of \emph{quantum} inputs (and classical outputs), such as $\qefid$ pairs.

\paragraph{Unclonability.} A defining property of quantum mechanics is that of \emph{unclonability of quantum states}. In cryptography, this property is used in the context of (private-key) quantum money schemes \cite{Wiesner83,C:JiLiuSon18}. A private-key quantum money scheme allows one to mint banknotes $\$_k$ in the form of quantum state, and it should be hard to clone banknotes, i.e., to create more than $t$ valid banknotes, given $\$_k^{\otimes t}$. Importantly, given the secret key $k$, one can efficiently verify the validity of a banknote state.

Although quantum money schemes are a central cryptographic primitive, that arguably started the field of quantum cryptography, very little is known on the relation with other Microcrypt primitives. For the special case of pure money states, it is known that quantum money schemes imply OWSGs~\cite{TQC:MorYam24}, but no general relation is known to hold either way.

\begin{figure}
    \centering
\begin{tikzpicture}[
  node distance=2cm and 3cm,
  primitive/.style={draw, rounded corners, align=center, font=\small},
  ref/.style={font=\scriptsize, midway, sloped, above},
  ref-nonsloped/.style={font=\scriptsize},
  scale = 0.8
]

% Nodes
\node[primitive] (unclstates) at (0,2) {UCSG};
\node[primitive] (pureOWSG) at (0,0) {Pure OWSG};
\node[primitive] (OWSG) at (0,-1.5) {OWSG};
\node[primitive] (QMoney) at (3,-1.2) {QMoney};
\node[primitive] (IVOWSG) at (0,-5) {IV-OWSG};
\node[primitive] (EVOWPuzz) at (7,0) {EV-OWPuzz};
\node[primitive] (OWPuzz) at (7,-4) {OWPuzz};
\node[primitive] (QEFID) at (3,-3) {QEFID};
\node[primitive] (nuQEFID) at (3,-4) {nuQEFID};
\node[primitive] (EFI) at (3,-5) {EFI};

% Labels for categories
\node[font=\small] at (-3,-1) {\bf Efficiently Verifiable};
\node[font=\small] at (-3,-3.5) {\bf Inefficiently Verifiable};

% Arrows
\draw[color=red, ->] (unclstates) -- (pureOWSG) node[ref-nonsloped, right, midway] {\begin{tabular}{c} Thm. \\ \ref{thm:owsg-unclonable-states} \end{tabular}};
\draw[->] (pureOWSG) -- (OWSG);
\draw[->] (pureOWSG) to [bend left=30] node[ref] {\cite{STOC:KhuTom24}} (OWPuzz);
\draw[->] (OWSG) -- (IVOWSG);
\draw[->] (EVOWPuzz) to [bend right=20] node[ref] {\cite{C:ChuGolGra24}} (OWSG);
\draw[->] ($(EVOWPuzz.south) + (-0.3,0)$) -- ($(OWPuzz.north) + (-0.3,0)$);
\draw[->] (QEFID) -- (nuQEFID);
\draw[->] (nuQEFID) -- (EFI);
\draw[<->] (OWPuzz) -- (nuQEFID) node[ref] {\cite{STOC:KhuTom24,C:ChuGolGra24}};
\draw[<->] (IVOWSG) -- (EFI) node[ref] {\cite{BJ24}};
\draw[color=red, ->] (QEFID) to node [sloped] {\(\not\)}node[ref,above=1mm] {Thm. \ref{thm:efid-vs-owsg}} (OWSG);
\draw[color=red, ->] (QEFID) to node [sloped] {\(\not\)}node[ref-nonsloped, right] {\begin{tabular}{c} Thm. \\ \ref{thm:efid-vs-money} \end{tabular}} (QMoney);
\draw[->] ($(OWPuzz.north) + (0.3,0)$) to node [sloped] {\(\not\)}
node[ref,above=0.5mm]{\cite{C:ChuGolGra24}}
($(EVOWPuzz.south) + (0.3,0)$);

% Dotted line separating categories
\draw[dotted] (-4.6,-2.5) -- (8,-2.5);

\end{tikzpicture}
\caption{Implications among Microcrypt primitives, where pure OWSG means OWSGs with pure state outputs and nu{\qefid} means non-uniform {\qefid} pairs. We regard EFI pairs and (nu){\qefid} pairs as inefficiently verifiable due to their equivalence (up to non-uniformity) to IV-OWSGs and OWPuzzs, respectively. 
Arrows without references indicate trivial implications.
\label{fig:eff-ineff-onewayness+main_results}}
\end{figure}
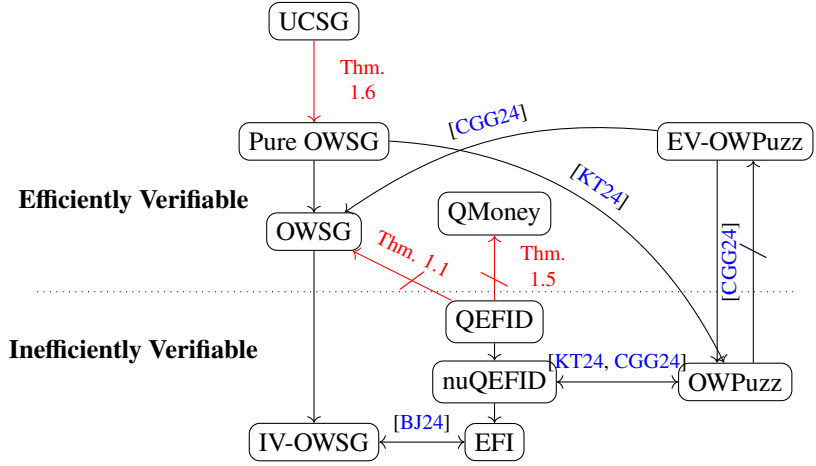

\subsection{Our Results}
In this paper, we make progress on the above questions, and we show strong evidence that there is a qualitative difference between efficient verification and one-wayness in quantum cryptography. We provide a schematic overview of our results in \Cref{fig:eff-ineff-onewayness+main_results}, and we discuss our main theorems in the following.

Our main result shows a separation between OWSGs and EFI pairs (\Cref{thrm:efi-exist,thm:break-OWSG}). In fact, our main theorem is slightly more general.

\begin{theorem}[OWSGs vs $\qefid$ Pairs]\label{thm:efid-vs-owsg}
    There exists a unitary oracle relative to which exponentially-hard $\qefid$ pairs exist but OWSGs do not.
\end{theorem}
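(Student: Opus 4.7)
The plan is to construct a unitary oracle $\mathcal{O} = (F, \Pi)$ with two distinct components: a random function $F \colon \{0,1\}^n \to \{0,1\}^{3n}$ (realized as a standard unitary), which will power the $\qefid$ construction, and an auxiliary quantum oracle $\Pi$ designed so that, when combined with the \emph{efficient verification} guaranteed by OWSGs, it refutes every candidate OWSG relative to $\mathcal{O}$.

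For the $\qefid$ pair, I would take $D_0$ to be uniform on $\{0,1\}^{3n}$ and $D_1$ to sample $x \gets \{0,1\}^n$ and output $F(x)$. Statistical distance follows from the fact that $D_1$ is supported on a set of size at most $2^n$ inside $\{0,1\}^{3n}$, while both distributions are efficiently samplable with oracle access to $F$. For the OWSG attack, I would design $\Pi$ to take as input a description of a quantum circuit $C$ together with a (multi-copy) quantum register $\rho$, and to return a classical string $x'$ that (approximately) maximizes $\Pr[C(x',\rho)=1]$, whenever a reasonable witness exists. Breaking any candidate OWSG $(\S,\V)$ is then immediate: on copies of $\psi_x$, invoke $\Pi$ with the verification circuit $\V(\cdot,\cdot)$ and the available copies; efficient verifiability guarantees that the true $x$ produces large acceptance, and $\Pi$ returns \emph{some} $x'$ of comparable quality, which by definition breaks one-wayness.

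The main obstacle, and the heart of the proof, is establishing that $\Pi$ does not help a distinguisher break $\qefid$ security. The key intuition is that $\Pi$ is a \emph{search} oracle, while distinguishing $D_0$ from $D_1$ is a \emph{decision} task that has no short certificate relative to $F$ alone (the only certificate for $y \in \mathrm{Im}(F)$ is a preimage, which is itself hard to find). I would prove this via a simulation argument: for any adversary making $T = 2^{o(n)}$ total queries, every $\Pi$-query can be answered by a simulator that runs a bounded quantum search routine augmented with compressed-oracle bookkeeping of $F$, thereby answering the query while making only polynomially more $F$-queries. This transforms the adversary into one that makes only $F$-queries, reducing $\qefid$-distinguishing to finding a preimage of a uniformly random element under a random function, which is ruled out by a BBBV-type lower bound at sub-exponential query complexity.

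The delicate step, which I expect to be the technical bottleneck, is handling adversarial $\Pi$-queries on states that depend adaptively on previous $F$- and $\Pi$-queries: one must ensure that the simulator's extracted witnesses are consistent with the true oracle $F$ and do not inadvertently leak the support of $D_1$. Ruling out such leakage will likely dictate the precise scoping of $\Pi$ (e.g., restricting circuit size, the number of copies of $\rho$, or the format of witnesses), and will in turn fix the exact sub-exponential parameter of the separation. Once this consistency is established, combining the $\qefid$ construction from $F$ with the universal attack from $\Pi$ yields the claimed oracle separation.
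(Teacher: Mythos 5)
Your construction of the \qefid pair is exactly the one this separation cannot afford: $D_1$ samples a classical seed $x\gets\bin^n$ and outputs $F(x)$, so relative to $F$ it is a (quantum-evaluatable) pseudorandom generator in the keyed, derandomizable regime. Such a PRG is itself an efficiently verifiable OWSG: take $\G(x)=\ket{F(x)}$ and let $\V(x',y)$ accept iff $F(x')=y$, which costs one query to $F$. Any oracle $\Pi$ strong enough to break \emph{every} OWSG relative to $(F,\Pi)$ must therefore, given $F(x)$ for random $x$, return some $x'$ with $F(x')=F(x)$. A \qefid distinguisher can then feed its classical sample $y$ (of which it can prepare as many copies as it likes) to that OWSG attack, obtain a candidate preimage $x'$, and accept iff $F(x')=y$: this accepts with high probability when $y\gets D_1$ and essentially never when $y$ is uniform on $\bin^{3n}$, since a uniform $y$ lies in $\mathrm{Im}(F)$ with probability at most $2^{-2n}$. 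So the ``main obstacle'' you flag is not a technical bottleneck to be engineered around by scoping $\Pi$; it is a contradiction. Your intuition that a preimage is not a usable certificate is backwards: a preimage is precisely a short, efficiently verifiable witness for $y\in\mathrm{Im}(F)$, and $\Pi$ is by design a machine for finding efficiently verifiable witnesses. No compressed-oracle simulation can rescue this, because the adversary's single $\Pi$-query genuinely solves the search problem whose answer decides membership in $\mathrm{Im}(F)$.

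The paper's proof is built around avoiding exactly this trap: the \qefid sampler must be \emph{underandomizable}, i.e., its output must not be a deterministic function of any classical seed. Concretely, the hardness oracle is the controlled reflection about a random subset state $\ket{S-}$ (with $S\subset\bin^\sec$ of size $2^{\sec/2}$); $D_0$ prepares $\ket{S}$ via one oracle call and \emph{measures} it to obtain a random $s\in S$, so there is no seed to recover and no efficiently checkable witness that $s\in S$ (deciding $s\in S$ \emph{is} the distinguishing problem). The OWSG-breaking oracle is a plain $\qpspace$ unitary, independent of the reflection oracle, used to run gentle search over verification keys after the verifier's queries to the reflection oracle are replaced, via the Ji--Liu--Song reflection-emulation technique, by polynomially many copies of $\ket{S-}$ that the attacker generates itself by projection. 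Security of the \qefid pair is then information-theoretic: any adversary making subexponentially many queries to the reflection oracle, even with unbounded runtime and unbounded $\qpspace$ access, can be emulated by one holding polynomially many copies of $\ket{S-}$, and a trace-distance bound shows those copies together with a sample reveal essentially nothing about $S$. If you want to salvage your outline, the first thing that must change is the source of randomness for $D_1$.
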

$\qefid$ pairs are a variant of EFI pairs where the outputs are not quantum states but classical bit strings (note that the generation algorithm is still QPT). Since $\qefid$ pairs trivially imply EFI pairs, we have the following corollary, which solves the first open question negatively.
\begin{corollary}[OWSGs vs EFI pairs]
    There exists a unitary oracle relative to which exponentially-hard EFI pairs exist but OWSGs do not.
\end{corollary}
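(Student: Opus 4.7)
The plan is to construct a single unitary oracle $\mathcal{O}$ relative to which both (i) exponentially-hard $\qefid$ pairs exist and (ii) no OWSG is secure. These two parts correspond to the lemmas \ref{thrm:efi-exist} and \ref{thm:break-OWSG} cited in the introduction, and combining them immediately yields the theorem. I would design $\mathcal{O}$ as two components: a ``hardness'' component $\mathcal{O}_{\mathsf{hard}}$ that equips the honest sampler with enough structured pseudorandomness to realize a $\qefid$ pair of classical-bit distributions (one of which is the uniform distribution, matching the pseudorandom flavor hinted in the introduction), and a ``breaking'' component $\mathcal{O}_{\mathsf{break}}$ tailored to invert any efficiently verifiable state generator. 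Concretely, $\mathcal{O}_{\mathsf{break}}$ will be a (stateless, but carefully specified) oracle that on input a description of a QPT pair $(\StateGen,\V)$ together with $\poly(\lambda)$ copies of a state $\psi_x$, returns some $x'$ with $\Pr[\V(x',\psi_x)=\top]\geq 1-\negl(\lambda)$.

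The OWSG-breaking direction (ii) is then immediate: the adversary against any candidate $(\StateGen,\V)$ simply queries $\mathcal{O}_{\mathsf{break}}$ with the circuit descriptions and the $\poly(\lambda)$ copies of $\psi_x$ provided by the OWSG experiment, and forwards the returned $x'$. The interesting direction is (i). I would define the $\qefid$ pair purely in terms of $\mathcal{O}_{\mathsf{hard}}$ and prove indistinguishability against QPT distinguishers that have access to $\mathcal{O}_{\mathsf{hard}}$ alone. The essential step is then to show that granting the distinguisher access to $\mathcal{O}_{\mathsf{break}}$ does not increase its advantage. The strategy here is simulation/compression: since every $\mathcal{O}_{\mathsf{break}}$ reply is a function only of a QPT circuit description and $\poly(\lambda)$ copies of the queried state, an $\mathcal{O}_{\mathsf{hard}}$-only distinguisher can internally reproduce these answers (or a sufficient approximation of them) by running the adversary on many fresh samples from the challenge distribution, so that $\mathcal{O}_{\mathsf{break}}$-queries can be answered without any additional $\mathcal{O}_{\mathsf{hard}}$-access beyond that already used to sample the challenge.

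The main obstacle is making this compatibility argument go through. The tension is that $\mathcal{O}_{\mathsf{break}}$ must be universal enough to invert every efficient preimage-to-state map with an efficient verifier, yet restricted enough that it cannot be coerced into leaking secrets of $\mathcal{O}_{\mathsf{hard}}$ to a distinguisher. I expect the fix to come from two ingredients. First, a careful specification of $\mathcal{O}_{\mathsf{break}}$ (e.g.\ returning a canonical preimage taken from a fixed external randomness tape, and answering only on well-formed queries whose associated verifier has sufficiently high completeness) so that its replies depend only on locally efficiently computable information. Second, a quantum query-efficiency argument---likely in the spirit of the compressed-oracle or ``pathway'' techniques used in prior Microcrypt separations---that converts any $\mathcal{O}_{\mathsf{break}}$-aided distinguisher into a pure-$\mathcal{O}_{\mathsf{hard}}$ distinguisher with only polynomial blowup in query complexity. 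The exponential-hardness parameter of the $\qefid$ pair is what provides the slack to absorb this blowup, which is why the theorem gives \emph{exponentially}-hard $\qefid$ rather than merely negligible-hard.
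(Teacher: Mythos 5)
Your high-level plan---a two-component oracle, one for hardness and one for breaking OWSGs---matches the skeleton of the paper's argument, and the observation that the corollary reduces to the $\qefid$-vs-OWSG separation (since $\qefid$ pairs trivially imply EFI pairs) is correct. But the specific route you propose for realizing $\mathcal{O}_{\mathsf{break}}$ and for arguing compatibility with the $\qefid$ pair has genuine gaps, and the "fix" you anticipate is not the one that actually works.

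The central problem is your choice of $\mathcal{O}_{\mathsf{break}}$ as a tailored ``universal OWSG inverter'': an oracle that, given a description of $(\StateGen,\V)$ and copies of $\psi_x$, returns an accepting preimage. First, because candidate OWSGs in this oracle world are allowed to query $\mathcal{O}_{\mathsf{hard}}$, the inverter must be able to evaluate $\mathcal{O}_{\mathsf{hard}}$-dependent circuits; you never resolve how it does this without itself becoming an $\mathcal{O}_{\mathsf{hard}}$-dependent object, which is precisely where the danger of leakage lies. Second, your proposed simulation---``run the adversary on many fresh samples from the challenge distribution'' so that $\mathcal{O}_{\mathsf{break}}$-queries can be answered---does not track: queries to $\mathcal{O}_{\mathsf{break}}$ are arbitrary circuit/state pairs chosen adversarially, not samples from the $\qefid$ challenge distribution, and there is no reason an $\mathcal{O}_{\mathsf{hard}}$-only distinguisher can locally reproduce the inverter's answers (which may require unbounded computation and adaptivity on $\mathcal{O}_{\mathsf{hard}}$). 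Third, as specified $\mathcal{O}_{\mathsf{break}}$ is not a unitary (it outputs a classical string from a quantum input), and the promise ``there exists an accepting $x'$'' need not hold for malformed queries, so the oracle's behavior is underdetermined exactly on the inputs an adversary would use to probe it. Finally, the compressed-oracle/pathway techniques you suggest are designed for classically specified random oracles with superposition access, whereas $\mathcal{O}_{\mathsf{hard}}$ here must be a genuinely quantum (reflection) oracle; those techniques do not apply.

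The paper's resolution is structurally different and is worth internalizing. The breaking oracle is a bare $\qpspace$ machine oracle with \emph{no} access to $\mathcal{O}_{\mathsf{hard}}$; the $\mathcal{O}_{\mathsf{hard}}$ oracle is a controlled reflection about a random subset state $\ket{S-}$. The two oracles are reconciled by \emph{reflection emulation} (\cref{thrm:reflect-jls}): any polynomial-query algorithm using the reflection oracle can be simulated, without the oracle, given $\poly$ copies of $\ket{S-}$; these copies can in turn be produced from $\mathcal{O}_{\mathsf{hard}}$ by a one-query projection trick (\cref{prop:project-via-reflect}, \cref{cor:phase-estimation}). Thus the OWSG verifier is ``de-oracled'' into a $\qpspace$-aided POVM on $\phi \otimes \ket{S-}^{\otimes \poly}$, and gentle search (\cref{lem:gentle-search}) via the plain $\qpspace$ oracle finds an accepting key. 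On the other side, the same emulation shows that any distinguisher with access to $(\mathcal{O}_{\mathsf{hard}},\qpspace)$ and bounded $\mathcal{O}_{\mathsf{hard}}$-queries reduces to an unbounded algorithm holding $\poly$ copies of $\ket{S-}$, against which the $\qefid$ pair is information-theoretically secure (\cref{lem:EFID-trace-distance}). This is the concrete mechanism that realizes the ``tension'' you identified; without something playing the role of reflection emulation, the two halves of the argument do not fit together. (Also a small point: the exponential hardness of the $\qefid$ pair is a bonus; the OWSG separation only needs polynomial-query security, so it is not what absorbs any blowup in the argument.)
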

Moreover, our main theorem leads to further implications, which we hereby summarize \ifnum\submission=1
(details are given in \Cref{sec:further-implications}).
\else
.
\fi

Since EV-OWPuzzs imply OWSGs \cite[Theorem 32, Section 10]{C:ChuGolGra24}, 
we obtain a separation between the analogous primitives of EFI pairs and OWSGs, in the QCCC model. This gives a negative answer to a possibility raised in \cite{C:ChuGolGra24}.
\begin{corollary}[EV-OWPuzzs vs $\qefid$ pairs]\label{cor:EFID-EV-OWPuzz}
    There exists a unitary oracle relative to which $\qefid$ pairs exist, but EV-OWPuzzs do not.
\end{corollary}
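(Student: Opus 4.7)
The plan is to derive \Cref{cor:EFID-EV-OWPuzz} as a direct consequence of \Cref{thm:efid-vs-owsg} combined with the known implication that EV-OWPuzzs imply OWSGs.

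First, I would invoke \Cref{thm:efid-vs-owsg} to fix a unitary oracle $\mathcal{O}$ relative to which exponentially-hard $\qefid$ pairs exist but OWSGs do not. Half of the corollary is then immediate: the $\qefid$ pairs guaranteed by \Cref{thm:efid-vs-owsg} continue to exist relative to $\mathcal{O}$, so we only need to rule out EV-OWPuzzs in the same relativized world.

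Next I would invoke the result of Chung, Goldin, and Gray \cite[Theorem 32, Section 10]{C:ChuGolGra24}, which gives a black-box construction of an OWSG from any EV-OWPuzz, with a black-box security reduction. Because this construction and its reduction only access the EV-OWPuzz's sampler and verifier as oracles, the implication relativizes in the presence of any additional oracle $\mathcal{O}$. Taking the contrapositive: since OWSGs do not exist relative to $\mathcal{O}$ by \Cref{thm:efid-vs-owsg}, EV-OWPuzzs cannot exist relative to $\mathcal{O}$ either. Combined with the existence of $\qefid$ pairs relative to $\mathcal{O}$, this yields the claimed separation.

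The main (and essentially only) obstacle is verifying that the CGG implication is genuinely relativizing. This amounts to inspecting their reduction to confirm that the EV-OWPuzz algorithms and the putative OWSG adversary are treated strictly as oracles, and that no non-relativizing primitive is invoked in a way that depends on the standard model. Since the reduction is essentially a syntactic recasting—viewing the pair $(k,s)$ output by the EV-OWPuzz sampler as a (key, classical state) pair of an OWSG and re-using the same verifier—this check is routine, but I would state it explicitly to make the oracle-relative conclusion rigorous.
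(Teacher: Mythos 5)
Your proposal is correct and matches the paper's own argument: the paper likewise combines \Cref{thm:efid-vs-owsg} (via \Cref{thrm:efi-exist,thm:break-OWSG}) with the black-box implication that EV-OWPuzzs imply OWSGs from \cite[Theorem 32, Section 10]{C:ChuGolGra24} (stated as \Cref{thm:EV-OWPuzz-implies-OWSG}). Your extra remark about explicitly checking that the CGG reduction relativizes is a reasonable precaution that the paper handles implicitly by asserting the implication is black-box.
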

Finally, since $\qefid$ pairs imply OWPuzzs~\cite[Lemma 8]{C:ChuGolGra24}, our main theorem also separates OWSGs from OWPuzzs. 
\begin{corollary}[OWPuzzs vs OWSGs]\label{cor:owpuzz-owsg}
    There exists a unitary oracle relative to which OWPuzzs exist, but OWSGs do not.
\end{corollary}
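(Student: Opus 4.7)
The plan is to obtain \Cref{cor:owpuzz-owsg} as a direct composition of \Cref{thm:efid-vs-owsg} with the known implication that $\qefid$ pairs imply OWPuzzs. Let $\mathcal{O}$ denote the unitary oracle produced by \Cref{thm:efid-vs-owsg}; relative to $\mathcal{O}$, we have (i) an exponentially-hard $\qefid$ pair and (ii) the non-existence of OWSGs against QPT adversaries with oracle access to $\mathcal{O}$.

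The key step is then to invoke \cite[Lemma 8]{C:ChuGolGra24}, which constructs a OWPuzz from any $\qefid$ pair. First I would check that this construction is fully black-box / relativizing in the sense that the OWPuzz sampler and verifier invoke the $\qefid$ generator only as an oracle (together with access to any oracle the $\qefid$ generator itself uses), and the one-wayness reduction similarly converts an adversary against the OWPuzz into a distinguisher against the $\qefid$ pair using only black-box oracle calls. Given this (and inspecting \cite{C:ChuGolGra24}, the construction is indeed of this shape), the implication lifts verbatim to the relativized world: plugging in the $\qefid$ pair from \Cref{thm:efid-vs-owsg} yields a OWPuzz whose sampler and verifier are QPT algorithms with access to $\mathcal{O}$, and the security reduction rules out any QPT $\mathcal{O}$-adversary. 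Combined with clause (ii), this immediately gives the corollary.

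The only potential obstacle is the relativization check in the previous step: one must ensure that (a) $\qefid$ pairs remain $\qefid$ pairs in the presence of $\mathcal{O}$ (which is immediate, as we simply treat $\mathcal{O}$ as part of the generator's oracle) and (b) the reduction in \cite[Lemma 8]{C:ChuGolGra24} does not invoke any derandomization, hardness amplification, or primitive-specific step that breaks in the presence of an arbitrary unitary oracle. Since the reduction there is purely information-theoretic on top of the $\qefid$ generator, both points hold. No new technical content beyond this bookkeeping is required, so I would keep the proof short and simply state the oracle, cite the two ingredients, and observe that both relativize with respect to $\mathcal{O}$.
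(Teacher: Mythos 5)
Your proposal matches the paper's proof exactly: the paper also obtains this corollary by combining \Cref{thrm:efi-exist} and \Cref{thm:break-OWSG} with the observation (stated as \Cref{thm:qefid-implies-OWPuzz}, implicit in~\cite[Lemma~8]{C:ChuGolGra24}) that $\qefid$ pairs imply OWPuzzs in a black-box manner, so the implication relativizes and lifts to the oracle world. Your extra relativization sanity check is exactly what the paper's phrase ``in a black-box manner'' is invoking.
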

Overall, our results provide a clearer picture of quantum one-wayness, by showing a separation between efficiently and inefficiently verifiable versions of the different cryptographic primitives.

Next, we turn our attention to the relation between unclonable cryptography and other primitives in Microcrypt. As our main theorem, we show the first separation between quantum money and $\qefid$ pairs.
\begin{theorem}[Quantum Money vs {\qefid}]\label{thm:efid-vs-money}    
There exists no fully black-box construction of private-key quantum money schemes from exponentially-hard \qefid pairs.
\end{theorem}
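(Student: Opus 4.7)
The plan is to reuse the same oracle $\mathcal{O}$ constructed in the proof of Theorem~\ref{thm:efid-vs-owsg}, relative to which exponentially-hard $\qefid$ pairs exist, and to design relative to $\mathcal{O}$ a \emph{statistical but query-efficient} attacker $\mathcal{A}$ against any private-key quantum money scheme $(\KG^\mathcal{O}, \Mint^\mathcal{O}, \Ver^\mathcal{O})$. The fully black-box separation then follows from the standard two-oracle paradigm: any polynomial-query black-box reduction $R$ that aims to translate quantum-money security to $\qefid$ security, when composed with $\mathcal{A}$, yields a polynomial-query adversary breaking $\qefid$ relative to $\mathcal{O}$, contradicting its exponential hardness.

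The heart of the argument is the construction of $\mathcal{A}$. Pure-state private-key quantum money already implies OWSGs~\cite{TQC:MorYam24}, so Theorem~\ref{thm:efid-vs-owsg} rules out such schemes relative to $\mathcal{O}$ for free; the real challenge is \emph{mixed-state} quantum money, which is not known to imply OWSGs. For these, $\mathcal{A}$ receives polynomially many copies of a banknote $\rho_k = \Mint^\mathcal{O}(k)$ together with oracle access to $\mathcal{O}$, and must output one additional state accepted by $\Ver^\mathcal{O}(k,\cdot)$ while issuing only $\poly(\secp)$ queries. The strategy is to decouple query complexity from computational complexity: $\mathcal{A}$ first issues a carefully chosen polynomial-size batch of oracle queries on inputs ``compatible'' with the scheme's specification, and then uses \emph{unbounded} computation to brute-force search for any candidate key $k'$ whose view of $\mathcal{O}$ (restricted to those queried coordinates) is consistent with the received copies; it finally outputs $\Mint^\mathcal{O}(k')$ as the counterfeit banknote.

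To show this works, I would argue via an averaging and reprogramming argument that, with overwhelming probability, any $k'$ consistent with $\mathcal{A}$'s transcript mints states accepted by $\Ver^\mathcal{O}(k,\cdot)$. The key idea is that reprogramming $\mathcal{O}$ on coordinates not queried by $\mathcal{A}$ essentially does not affect either the distribution of $\rho_k$ or the accepting probability of $\Ver^\mathcal{O}(k,\cdot)$ on a fresh mint, so consistency on the queried part is ``as good as'' equality of keys from the verifier's viewpoint. Technically, this step would rely on a compressed-oracle / one-way-to-hiding style lemma tailored to the specific structure of the oracle $\mathcal{O}$ used for Theorem~\ref{thm:efid-vs-owsg}, combined with a gentle-measurement argument that bounds the disturbance to $\rho_k$ caused by the queries $\mathcal{A}$ itself makes.

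The main obstacle is precisely this reprogramming step in the mixed-state regime: a scheme may encode the key information subtly in the entanglement between the banknote and the oracle, and one needs to show that polynomially many well-chosen queries already pin down the ``verification-relevant'' portion of this entanglement. Once $\mathcal{A}$ is in place, the black-box framework closes the argument: since $\mathcal{A}$ makes only $\poly(\secp)$ queries, any polynomial-query reduction $R^{\mathcal{A},\mathcal{O}}$ induces an overall polynomial-query attacker against $\qefid$ relative to $\mathcal{O}$, which is ruled out by Theorem~\ref{thm:efid-vs-owsg}, establishing the absence of any fully black-box construction.
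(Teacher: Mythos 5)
Your overall framework is the right one and matches the paper's: the same oracle $\oracle$, a query-efficient but computationally unbounded forger, and the two-oracle composition against the reduction (using that the \qefid pair of \Cref{lem:computational-indistinguishability-EFID-pair} is secure against query-bounded, runtime-unbounded adversaries). But the heart of the theorem is the forger itself, and there your proposal has a genuine gap: you explicitly defer the step of showing that polynomially many queries ``pin down the verification-relevant portion'' of the banknote, and the tools you propose for it do not fit. Compressed-oracle and one-way-to-hiding arguments are tailored to random classical functions queried in superposition; the oracle here is a single global reflection $I-2\ket{1}\proj{S-}\bra{1}$, for which there is no meaningful notion of ``queried coordinates'' or of a partial view with which a candidate key could be ``consistent.'' Moreover, even granting some consistency notion, consistency of $k'$ with the transcript and with the received copies does not imply that $\mint(k')$ verifies under $k^*$: there may be a dummy key whose minted states are indistinguishable from $\$_{k^*}$ on polynomially many copies yet whose mints fail verification under $k^*$, and recovering the full mixed state $\$_{k^*}$ from polynomially many copies is information-theoretically impossible, so ``consistency with the received copies'' cannot be made strong enough by brute force alone.

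The paper closes this gap with a different ingredient: shadow tomography of acceptance-probability profiles. Reflection emulation (\Cref{thrm:reflect-jls}, \Cref{prop:reflect-emulate-efid}) replaces the oracle calls of $\mint$ and $\verify$ by copies of $\ket{S_\secp-}$, which the forger prepares itself via \Cref{cor:phase-estimation}; then Aaronson's shadow tomography (\Cref{thm:shadow-tomography-estimates}) is invoked twice, once on the given copies of $\$^*$ to estimate $\Pr[\verify(k,\$^*)=1]$ for \emph{all} keys $k$ simultaneously, and once more to estimate $\Pr[\verify(k,\mint(k'))=1]$ for all pairs $(k,k')$. The forger outputs mints under any $k'$ whose profile matches that of $\$^*$ on every coordinate $k$; such a $k'$ exists (namely $k'=k^*$), and any matching $k'$ necessarily mints states accepted under $k^*$ because the profiles agree at $k=k^*$ and correctness makes that entry large. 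This ``matching of profiles'' is precisely the verification-relevant information you need, and it is learnable from polynomially many copies only because shadow tomography handles exponentially many POVMs at once --- something no reprogramming argument on this reflection oracle provides.
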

On the other hand, using similar techniques devised to prove \Cref{thm:efid-vs-owsg}, we show that 
unclonable state generators (UCSGs) imply OWSGs. UCSGs (implicilty defined in \cite{C:JiLiuSon18}) have the same syntax as quantum money, but they satisfy a slightly stronger unclonability guarantee.
\begin{theorem}[UCSGs vs OWSGs]\label{thm:owsg-unclonable-states}
    UCSGs imply pure OWSGs.
\end{theorem}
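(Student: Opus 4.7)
The plan is to show that any UCSG $(\Gen, \Ver)$ gives rise to a pure OWSG. If $\Gen(k)$ has pure output $\ket{\psi_k}$, we use $(\Gen, \Ver)$ directly as the pure OWSG. Otherwise, we purify: define $\Gen'(k) := U_{\Gen}\ket{k}\ket{0}$, which outputs the joint state on both the ``output'' and the would-be-traced ``environment'' registers, and let $\Ver'(k, \cdot)$ apply $U_{\Gen}^\dagger$ followed by a measurement against $\ket{k}\ket{0}$. In either case, correctness and efficient verifiability of the pure OWSG are inherited from the UCSG's verification.

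For one-wayness, suppose toward contradiction there is a QPT $\adv$ that, on input $\ket{\psi_k}^{\otimes t(\lambda)}$ for some polynomial $t$, outputs a key $k'$ such that $\Ver(k', \ket{\psi_k})$ accepts with non-negligible probability $\delta(\lambda)$. In the pure-state setting, this implies that, on average, $|\braket{\psi_k}{\psi_{k'}}|^2 \geq \delta$, so the generated $\ket{\psi_{k'}}$ has non-negligible fidelity with $\ket{\psi_k}$. We construct a UCSG cloner $\badv$ as follows: given $\ket{\psi_k}^{\otimes t}$ (together with a small number of additional spare copies afforded by the cloning game), $\badv$ (i) runs $\adv$ coherently on the input, (ii) computes $\ket{\psi_{k'}}$ in a fresh register using $\Gen$ applied coherently to the output-key register, and (iii) uses Watrous-style rewinding, with the success event defined by a SWAP test between the newly computed $\ket{\psi_{k'}}$ and a spare $\ket{\psi_k}$, to extract a usable key while approximately preserving the input copies. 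Finally, $\badv$ outputs the $t$ preserved original copies together with the freshly generated $\ket{\psi_{k'}}$, yielding $t + 1$ states each accepted by $\Ver(k, \cdot)$ with non-negligible probability, contradicting UCSG unclonability.

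The main obstacle is the coherent extraction step: a naive destructive invocation of $\adv$ would consume the input copies, and any subsequent output of $t+1$ fresh copies of $\ket{\psi_{k'}}$ has success probability at most $\delta^{t+1}$, which is negligible for polynomial $t$. Circumventing this requires invoking $\adv$ coherently and carefully rewinding, so that the input registers are recovered up to negligible error while a usable key is still extracted; this is the heart of the reduction, and may require a preliminary amplification step, turning the inverse-polynomial success probability of $\adv$ into a near-perfect one before rewinding can succeed. A secondary challenge, in the mixed-output case, is that the cloner only has access to $\rho_k^{\otimes t}$, while the purified adversary expects $\ket{\psi_k}^{\otimes t}$; this gap can be bridged by exploiting the monotonicity of fidelity under partial trace, so that an attack in the purified setting translates back to an attack on the original mixed-output UCSG.
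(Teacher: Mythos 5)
There is a genuine gap, and your reduction takes a route the paper deliberately avoids. First, a UCSG (Definition \ref{def:unclonable}) is only a generation algorithm $\samp$ outputting a pure state $\ket{\phi_k}$ — it comes with no verification algorithm, so the OWSG verifier must be \emph{constructed}, and how it is constructed matters. The paper's OWSG outputs $\ket{\phi_k}^{\otimes\kappa}$ and verifies by $\kappa$ parallel swap tests; this is precisely what converts ``accepted with non-negligible probability $\eta$'' into the strong guarantee $|\braket{\phi_{k'}}{\phi_k}|^2>0.8$ (with probability $\geq\eta/2$), since $\bigl(\tfrac{1}{2}(1+|\braket{\phi_{k'}}{\phi_k}|^2)\bigr)^\kappa$ decays unless the fidelity is high. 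Your claim that acceptance probability $\delta$ yields fidelity $\geq\delta$ implicitly assumes the verifier is a rank-1 projection onto $\ket{\phi_k}$, which is not efficiently implementable; a single swap test only gives the vacuous bound $|\braket{\phi_{k'}}{\phi_k}|^2\geq 2\delta-1$.

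Second, the rewinding step you identify as ``the heart of the reduction'' is exactly what you do not need, and as sketched it would not go through: Watrous-style rewinding preserves the input only when the success probability is essentially input-independent or near $1$, and amplifying an inverter whose success is merely non-negligible requires an efficient test of whether a candidate $k'$ is good — which is circular here. The paper's reduction destroys the input copies without regret: run $\adv$ to obtain $k'$ with $|\braket{\phi_{k'}}{\phi_k}|^2>0.8$, then repeatedly generate fresh $\ket{\phi_{k'}}\gets\samp(k')$ and project onto $\ket{\phi_k}$ via the controlled reflection $R^c_{\phi_k}$ (\Cref{prop:project-via-reflect}); each attempt succeeds independently with probability $>0.8$ and, crucially, the post-measurement state is \emph{exactly} $\ket{\phi_k}$, so a Chernoff bound over $\kappa^2 t$ trials yields the $m+1=\kappa t+1$ perfect copies demanded by the unclonability game. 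This a priori breaks only \emph{strong} UCSGs (where the cloner is granted access to $R^c_{\phi_k}$); the paper closes the loop with \Cref{thm:uncl-state-imply-strong-uncl-states}, using reflection emulation (\Cref{thrm:reflect-jls}) to simulate $R^c_{\phi_k}$ from additional input copies, which the cloner may freely request. Finally, your purification detour for mixed outputs addresses a non-issue, since UCSGs output pure states by definition.
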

Finally, combining \Cref{thm:efid-vs-owsg,thm:owsg-unclonable-states}, we obtain the following implication.
\begin{corollary}[UCSGs vs $\qefid$ pairs]\label{cor:uncl-states-efid}
    There exists a unitary oracle relative to which $\qefid$ pairs exist, but UCSGs do not.
\end{corollary}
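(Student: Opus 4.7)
The plan is to derive the corollary by a simple relativization argument that chains \Cref{thm:efid-vs-owsg} and \Cref{thm:owsg-unclonable-states}. Let $U$ be the unitary oracle provided by \Cref{thm:efid-vs-owsg}, so that relative to $U$ there exist exponentially-hard $\qefid$ pairs but no OWSGs. I will argue that UCSGs cannot exist relative to $U$ either, by showing that UCSGs would in turn yield OWSGs, contradicting the second conclusion of \Cref{thm:efid-vs-owsg}.

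Concretely, suppose for contradiction that there exists a UCSG scheme $\mathsf{UCSG}^U$ whose algorithms make black-box queries to $U$ and whose unclonability holds against all QPT adversaries with oracle access to $U$. By \Cref{thm:owsg-unclonable-states}, there is a (fully black-box) construction $\Pi$ that uses any UCSG as a black box to build a pure-output OWSG, together with a black-box security reduction turning any OWSG adversary into a UCSG adversary with comparable efficiency and advantage. Plugging $\mathsf{UCSG}^U$ into $\Pi$ yields a pure OWSG scheme $\Pi^{\mathsf{UCSG}^U}$ that, viewed as an oracle algorithm, only makes queries to $U$. Any QPT oracle adversary against this OWSG is converted by the security reduction into a QPT oracle adversary against $\mathsf{UCSG}^U$, which by assumption has only negligible advantage; hence pure OWSGs—and therefore OWSGs—exist relative to $U$. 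This contradicts the ``OWSGs do not exist'' part of \Cref{thm:efid-vs-owsg}. Meanwhile, the $\qefid$ pairs supplied by \Cref{thm:efid-vs-owsg} relative to $U$ remain intact, proving the corollary.

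The only real subtlety, and the step I would be most careful to verify, is that the construction and reduction of \Cref{thm:owsg-unclonable-states} actually relativize with respect to an arbitrary quantum unitary oracle $U$. This should indeed be the case because the reduction treats the underlying UCSG purely as a black box: the OWSG algorithms run the UCSG's $\Mint$ and $\Ver$ subroutines without inspecting their internal structure, and the security reduction simulates a UCSG adversary by forwarding queries. Since none of these steps examine or replace $U$, attaching $U$ everywhere preserves correctness and security bounds. Once this relativization is checked, the corollary follows immediately from the contradiction argument above, and no further probabilistic or oracle-construction work is needed beyond the theorems already established.
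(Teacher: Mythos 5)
Your proposal is correct and follows essentially the same route as the paper, which obtains the corollary by combining \Cref{thm:efid-vs-owsg} with \Cref{thm:owsg-unclonable-states} (itself the chain UCSG $\Rightarrow$ strong UCSG $\Rightarrow$ pure OWSG, proved via reflection emulation and swap-test verification, both of which relativize since \Cref{thrm:reflect-jls} tolerates an arbitrary additional oracle $Q$). Your explicit check that the UCSG-to-OWSG construction and reduction are black-box, and hence survive attaching the oracle everywhere, is exactly the point the paper leaves implicit.
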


\if0
\paragraph{Understanding the different phenomena in Microcrypt.}%\amit{Takashi, are you happy with this addition?}\takashi{looks good!} 

Alongside cryptographic primitives mentioned above, Microcrypt contains various quantum phenomena, such as unclonability, unforgeability\footnote{Here by unforgeability, we mean unforgeable quantum money.}, quantum pseudorandomness, that are formalized as cryptographic primitives. One such fundamental quantum primitive is private-key quantum money (with arbitrary mixed money states) that has been studied extensively~\cite{Wiesner83,MVW12_Wiesner_money}, yet it is not clear what is a necessary and sufficient assumption to build private-key quantum money. In particular, it is unclear if we can build them based on EFI pairs\footnote{Since OWSG imply EFI pairs~\cite{STOC:KhuTom24,BJ24}, this would also imply a construction of private-key quantum money from OWSG.}, a minimal assumption candidate currently. A similar question also remains open for another foundational primitive called unclonable quantum state generator ($\uncl$) (meaning a family of $2^\secp$ states that are not $n\to n+1$ clonable for any polynomial $n(\secp)$), which is closely related to private-key quantum money\footnote{A stronger version of $\uncl$ ($\struncl$) (where the cloner gets additionally, access to the rank-$1$ projection about the state) implies private-quantum money with pure money states.}. 

% As mentioned before, in quantum cryptography, there are cryptographic primitives such as PRSGs~\cite{C:JiLiuSon18}, which are strictly weaker than OWFs in a black-box manner~\cite{Kre21,kretschmer2023quantum}. Microcrypt refers to the set of all such primitives that are potentially weaker than OWFs, i.e., the set of all quantum cryptographic primitives that are implied by (post-quantum) one-way functions (OWFs) but are not known to imply OWFs. Since the work of Kretschmer~\cite{Kre21}, there has been an extensive study into the applications of Microcrypt~\cite{C:AnaQiaYue22,C:MorYam22}, which showed that many cryptographic tasks such as multi-party computation, oblivious transfer are in Microcrypt~\cite{AQY21,ITCS:BraCanQia23,C:MorYam22}. Moreover, Microcrypt contains various quantum and non-quantum cryptographic phenomenon, such as unclonability, unforgeability\footnote{Here by unforgeability, we mean unforgeable quantum money.}, one-wayness, quantum pseudorandomness, etc. Unfortunately, despite the plethora of works, we have a limited understanding of the relation between different cryptographic primitives in Microcrypt. In particular, there are primitives in Microcrypt such as unforgeable private-key quantum money (with arbitrary mixed money states) and unclonable quantum state generators (meaning a family of $2^\secp$ states that are not $n\to n+1$ clonable for any polynomial $n(\secp)$), for which we do not know if they are equivalent to any other primitive. Hence, it is unclear what are the minimal Microcrypt assumptions for these primitives to exist.

\fi

\if0
Next, we focus on unforgeable private-key quantum money schemes with arbitrary mixed money states. Note that private-key quantum money schemes with pure money states are themselves OWSGs~\cite{cryptoeprint:2022/1336}, but as far as we know, no such implication is known from arbitrary (i.e., mixed money states) private-key quantum money schemes to OWSG. Hence, it is unclear if~\Cref{thm:efid-vs-owsg} also implies a separation between arbitrary private-key quantum money schemes and \qefid. 
Nevertheless, we managed to show such a different separation independently.

\fi

% We also show several results on private-key quantum money schemes and UCSGs.
% We first show the following.
% \begin{theorem}[Fully Black-box Separation of Private-key Quantum Money from {\qefid}]\label{thm:efid-vs-money}    
% There exists no fully black-box construction of private-key quantum money schemes from exponentially-hard \qefid pairs.
% \end{theorem}
% %\mor{Why this theorem is stated in terms of black-box construction, while other theorems are stated in terms of oracle?}\amit{Because we only have a statistical attack on q money which is not efficient so this does not give an oracle separation but only shows that there is no fully black-box construction of q money from \qefid.}
% %\mor{It is better to explain that somewhere}
% \Cref{thm:efid-vs-money} is formally proven as \Cref{cor:efid-qmoney} in \Cref{sec:separating-qmoney}.

% %\mor{Can we show that mixed private Q money implies EFI?}\amit{Interesting question, but I do not think it is known.}

\if0
Next, we formally show a connection between OWSGs and the no-cloning phenomenon of quantum states in the form of Unclonable State Generators (\uncl) that are a family of pure states of size $2^\secp$ such that given $n$ copies of a state sampled from the family it is hard to provide $n+1$ copies of the state, for any $n$ polynomial in $\secp$ (see \Cref{def:unclonable}). Since PRSGs are unclonable~\cite{C:JiLiuSon18}, \uncl are in Microcrypt. 
\fi

% We show the following result using similar techniques devised to prove \Cref{thm:efid-vs-owsg}.
% \begin{theorem}\label{thm:owsg-unclonable-states}
%     UCSGs imply pure OWSGs.
% \end{theorem}
% The proof follows by combining \Cref{thm:Strong-uncl-states-imply-EVOWSG,thm:uncl-state-imply-strong-uncl-states}.
% Combining \Cref{thm:efid-vs-owsg,thm:owsg-unclonable-states}, we get:
% \begin{corollary}\label{cor:uncl-states-efid}
%     There exists a unitary oracle relative to which $\qefid$ pairs exist, but UCSGs do not.
% \end{corollary}

\begin{comment}

\ifnum\submission=0
    \subparagraph{Discussion on the results}
    
    We begin with an interpretation of \Cref{thm:efid-vs-owsg}.
    
    \textbf{Interpretation of \Cref{thm:efid-vs-owsg}}
    In the classical world, \qefid and OWF are equivalent~\cite{Gol90} but \Cref{thm:efid-vs-owsg} shows that efficient, statistically far but computationally distinguishable pairs of distributions are fundamentally weaker notions than OWSG, even if we restrict ourselves to classical distributions. Therefore, \Cref{thm:efid-vs-owsg} suggests that in the quantum world, the gap between the strength of OWSG and EFI emanates from the inherent structure of EFI and not so much the quantumness of the distributions themselves.

\fi
\end{comment}
\subsection{Overview of Techniques}\label{sec:techniques}

To obtain our main result, namely, the separation of OWSGs from \qefid pairs, we aim to find a unitary oracle that provides us with the right amount of ``hardness'' on which we may base the construction of an \qefid pair yet, at the same time, gives enough ``computational power'' to break OWSGs, deeming them non-existent. As is often the case in oracle separations, it is thus easier to split our task and regard it as building two oracles: one that models the hardness sufficient for \qefid pairs and another that models a powerful inverter that breaks OWSGs.

Before describing our oracles, let us give some underlying intuition. Statistical-vs-computation gap between distributions (as implied by {\qefid} pairs) necessitates producing randomness. In the classical setting, where computation is always deterministic, such randomness may be generated only by applying a deterministic procedure over a random seed. Such a procedure is \emph{derandomizable}. Additionally, one can efficiently verify whether an output was generated using a given seed, by repeating the computation. Our construction of \qefid pairs cannot be derandomizable, i.e., it cannot possibly rely solely on such a source of randomness since it cannot be secure in a world where any OWSG is broken.\footnote{Such a construction of \qefid pairs, by the same proof as in~\cite{Gol90}, would imply (quantum-evaluatable) PRGs that trivially imply OWSGs.} Our \qefid pairs therefore must attain their entropy from a different source of randomness, namely from quantum uncertainty. In the quantum world, we are able to produce randomness through non-deterministic quantum computation that involves measurements. Importantly, such a procedure is \emph{underandomizable} and does not generally induce efficient verifiability of its outcome.

\paragraph{Building \qefid pairs relative to an oracle.}\label{pg:oracle} A most basic underandomizable quantum procedure that generates a random outcome is the following: prepare a uniform superposition $\sum_{x\in X}\ket{x}$ then measure in the standard basis. This will sample a uniformly random element in $X$. A natural way to produce a pseudorandom distribution over $\sec$ bits (which implies, in particular, {\qefid} pairs), then, is to somehow create the state $\sum_{s\in S}\ket{s}$, where $S$ is a large enough uniformly random subset of $\bin^\sec$, then measure to obtain a uniformly random $s\gets S$. This brings us to define our first oracle: The oracle $O$ is a controlled swap between $\ket{0^\sec}$ and $\ket{S}=\sum_{s\in S}\ket{s}$, where $S$ is a uniformly random subset of size $2^{\sec/2}$ over $\bin^\sec\setminus\{0^\sec\}$. That is, given control bit 1, the oracle maps $\ket{0^\sec}$ to $\ket{S}$, $\ket{S}$ to $\ket{0^\sec}$, and acts as identity on the subspace orthogonal to the span of $\ket{0^\sec}$ and $\ket{S}$. On control bit 0, the oracle is identity. Our \qefid pairs under $O$ simply call the oracle $O$ with input $\ket{1}\ket{0^\sec}$ to obtain $\ket{S}=\sum_{s\in S}\ket{s}$, then measure to get a uniformly random element from $S$.\footnote{We could potentially define our oracle to perform the measurement as well, outputting a random $s\gets S$ at every query. This would not be, however, a unitary oracle as we have promised.}

To argue why our \qefid pairs give a pseudorandom distribution under the oracle $O$ (remember there is one more oracle that we ought to introduce), we apply a \emph{reflection emulation} technique, first introduced by Ji, Liu and Song~\cite{C:JiLiuSon18}. In their work, they show that access to a reflection oracle about a state $\ket{\psi}$, i.e. the unitary oracle $\reflect_\psi=I-2\proj{\psi}$, may be simulated given sufficiently many copies of $\ket{\psi}$. This is useful to us due to the key observation that the oracle $O$ is, in fact, the reflection unitary about the state $\ket{1}\ket{S-}$, where $\ket{S-}=\frac{1}{\sqrt{2}}(\ket{S}-\ket{0^\secp})$. Consequently, any algorithm $\adv$ with access to $O$ that is successful in breaking our \qefid pairs may be simulated by an algorithm $\badv$ that is not given access to $O$, but takes as input sufficiently many copies of $\ket{S-}$, and breaks \qefid pairs as successfully.

Therefore, we have reduced our task to showing that any such $\badv$ is incompetent in breaking the \qefid construction. Recall such $\badv$ takes as input many copies of the state $\ket{S-}$, for a uniformly random subset $S$ 
of size $2^{\sec/2}$, and a string $s\in\bin^\sec$, which is sampled either uniformly at random or uniformly at random from the subset $S$. To show indistinguishability between these two cases, we bound the trace distance between the mixed states corresponding to the two distributions. Importantly, this argument is statistical and holds even given any additional oracle that is independent {of} $O$.

\paragraph{Breaking OWSGs.} Having shown the existence of \qefid pairs relative to the oracle $O$, it remains to prove the other half of our statement: OWSGs do not exist. This requires complementing the oracle $O$ with an additional oracle that gives the necessary computational power for such a task.

The question of identifying an oracle under which there is no efficiently verifiable quantum one-wayness has been studied in the literature. Cavalar et al.~\cite{cavalar2023computational} show that, given quantum access to a \PP oracle, there exists a generic attack that breaks any pure OWSGs. Besides the fact that it is not clear how to extend their attack to any OWSGs, there is a more inherent limitation to adapting their approach to our setting. More specifically, their attack breaks OWPuzzs, which in turn implies that pure OWSGs are broken (due to~\cite{STOC:KhuTom24}, see \cref{fig:eff-ineff-onewayness+main_results}). Since \qefid pairs too imply OWPuzzs, any attempt to apply their attack (or any strengthening thereof) to our world, consisting of the oracle $O$, will result in an attack against our \qefid construction if successful. Therefore, we cannot hope to use their approach to establish a separation by, for instance, appending the oracle $\PP^O$ to our world.

A different method to generically break efficiently verifiable one-wayness is proposed by the framework of \emph{shadow tomography}~\cite{Shadow2,NatPhys:HKP20}, specifically the task of \emph{gentle search}~\cite{Shadow2}. A gentle search algorithm takes, as input, a collection of POVM elements $\{\Pi_k\}_{k\in\keyspace}$ (indexed by an arbitrary key space $\keyspace$) and a state $\ket{\phi}$. Its goal is to output a key $k\in\keyspace$ such that $\Pi_k$ accepts $\ket{\phi}$ with good probability, i.e., $\Tr(\Pi_k\ketbra{\phi}{\phi})$ is large, assuming such a $k$ exists. 
Breaking OWSGs via gentle search is straightforward: Letting the POVM $\Pi_k$ denote verification under key $k$, gentle search finds a key under which the input state is verified, hence successfully inverting OWSGs. As observed by~\cite{CCS24}, the gentle search procedure from~\cite{Shadow2} can be performed by a \qpspace computation\footnote{Note that this \qpspace is not the class of classical decision problems that can be decided by quantum polynomial-space computing. This is a unitary operation over polynomial number of qubits.} when the POVMs are implemented by polynomial-size quantum circuits, even when these circuits themselves have access to a \qpspace oracle. Consequently, by augmenting our world with a \qpspace oracle,\footnote{As we have mentioned, this is not an oracle that solves classical decision problems, but a unitary operation. Therefore, this oracle takes quantum states as input and outputs quantum states.} we may break any OWSG construction that uses the \qpspace oracle. This is, however, not entirely sufficient for our separation result to hold: We must rule out any OWSG construction that possibly uses the oracle $O$ as well.

A straightforward attempt would be to equip the attack with the oracle $\qpspace^O$, i.e., now the oracle can implement any poly-space quantum circuit that has access to $O$. Such an oracle, however, is too strong as it breaks our \qefid construction as well due to~\cite{cavalar2023computational} (since the attack using the $\PP$ oracle can be performed using the $\qpspace$ oracle). Instead, we observe that the reflection emulation technique, which we used previously to prove the security of our \qefid pairs, comes in handy here as well. By reflection emulation, we can replace the verification circuit of OWSGs, which possibly makes queries to $O$ and $\qpspace$, with a circuit that makes queries only to $\qpspace$, yet is given in addition copies of the state $\ket{S-}$. We then apply the gentle search procedure described above w.r.t. the POVM elements $\{\Pi_k\}_k$, where $\Pi_k$ implements the new $\qpspace$-aided verification circuit with key $k$.

Of course, now the POVM elements expect as input, besides the state $\ket{\phi}$ to be verified, a polynomial amount of copies of $\ket{S-}$. Hence, we must provide the gentle search with sufficiently many such copies. To generate these copies, we apply phase estimation for the unitary $\reflect_{S-}:=I-2\ketbra{S-}{S-}$ on the state $\ket{0^\sec}$ using $O$: Recall $O$ is a reflection about $\ket{1}\ket{S-}$ and hence a controlled-version of $\reflect_{S-}$, and $\ket{0^\sec}$ is a uniform superposition of $\ket{S-}=\frac{1}{\sqrt{2}}(\ket{0^\secp}-\ket{S})$, which has eigenvalue $-1$ under $\reflect_{S-}$, and $\ket{S+}= \frac{1}{\sqrt{2}}(\ket{0^\secp}+\ket{S})$, with eigenvalue $1$. 

To conclude, our attack against any OWSG construction with efficient verification circuit $\V$, under oracles $O$ and \qpspace, takes as input sufficiently many copies of the {challenge} state $\ket{\phi}$ to be inverted, and performs the following steps: \begin{enumerate*}
    \item Compute the \qpspace-aided circuit $\V'$, which emulates $\V$ while replacing the calls to the oracle $O$ with 
    taking as input copies of the state $\ket{S-}$.
    \item Generate sufficiently many copies of $\ket{S-}$ via phase estimation for $O$.
    \item Using the \qpspace oracle, perform gentle search w.r.t. the POVM elements $\{\Pi_k\}_k$, where $\Pi_k$ implements $\V'$ under key $k$, and input constituting of $\ket{\phi}$ and copies of $\ket{S-}$ (at this point, we have sufficiently many copies of this input as required by gentle search).
\end{enumerate*}

The attack against any OWSGs under the oracles $O$ and \qpspace, together with the existence of a \qefid pair under these oracles, completes the oracle separation between OWSGs and \qefid pairs.

We now proceed to discuss the techniques underlying our second main result, namely separating private-key quantum money schemes from \qefid pairs.

\paragraph{Separating Private-key Quantum Money schemes from \qefid pairs.} Recall that a private-key quantum money scheme consists of three QPT algorithms, $\kgen$ that samples a key $k$, 
$\mint$ that mints a quantum money state $\$_k$ (which can be generally mixed), and $\verify$ that on receiving a key $k$ and an alleged money state $\rho$ either accepts or rejects. 
The money scheme is called unforgeable if it is hard, for any QPT adversary who is given money states minted under $k^*\gets\kgen$, to produce new money states that are valid under the same key $k^*$. That is, for any polynomial $m$,
the adversary is given $\$_{k^*}^{\otimes m}$ sampled by running $\mint(k^*)$ for $m$ times, where $k^*\gets\kgen$. 
The adversary is given access to the verification oracle w.r.t. $k^*$ and it breaks the scheme if it outputs a quantum state
that contains at least $m+1$ valid money states that pass verification under $k^*$.

%At a first glance, one may be tempted to think that private-key quantum money implies OWSGs as strongly unclonable states generators (\struncl) which have similar security guarantees as private quantum money implies OWSGs.\footnote{Recall \struncl are a family of states that are hard to clone even given access to the corresponding controlled reflection oracle.}
When quantum money schemes produce pure money states 
%and verification is the rank-1 projection into the money state, 
they imply OWSGs~\cite{TQC:MorYam24}. 
However, for general mixed money states, we do not know whether private-key quantum money schemes imply OWSGs or not.
This gap becomes evident when we try to break private-key quantum money schemes
by using the same attack as the one used to break OWSGs in the proof of \Cref{thm:efid-vs-owsg}.

Indeed, assume we try to apply gentle search to find the key $k^*$ using which the given money states were minted. Clearly, given such a key we may reproduce as many valid money states as we wish. The only guarantee given by gentle search, however, is that the key it finds accepts the input money state $\$_{k^*}$ with good probability.  That is, it is possible that the gentle search returns a key $k\neq k^*$ such that $\$_{k^*}$ is valid also under $k$, yet producing a new money state using $k$ will result in a money state $\$_k$ that does not pass verification under $k^*$. For example, consider a quantum money scheme where there is a dummy key $k_0$ under which any state passes verification. Gentle search, in this case, might always return $k_0$ and is hence useless.\footnote{This being said, we note that, when the money states are pure, there is a successful alternative way to apply gentle search. Rather than considering the POVMs that implement the verification circuit under $k$, perform gentle search w.r.t. the POVMs that implement a swap test between their input and the pure money state minted under the corresponding key $k$ (essentially, performing a projection into the latter).}

A different strategy is then required to attack arbitrary quantum money schemes that make use of the oracle $O$.
While several such statistical attacks have been proposed in the literature~\cite{Aar16,Shadow2}, none seem to be applicable in our oracle-relative world: An extension of the gentle search procedure~\cite{Aar16} which succeeds in breaking any private-key quantum money scheme in the plain model requires calling the verification circuit an exponential number of times, consequently requiring exponential number of calls to the oracle $O$ in our setting (or preparing exponentially many copies of $\ket{S-}$ if we wish to evoke reflection emulation). Another known attack, in some ways more efficient, is based on the stronger tool of \emph{shadow tomography}~\cite{Shadow2}. Unfortunately, as we discuss next, the attack from~\cite{Shadow2} also fails in our setting, but we manage to use shadow tomography in a different attack to break any private-key quantum money scheme relative to $O$.

Shadow tomography may be seen as a strengthening of gentle search. In shadow tomography, we are again given a collection of POVM elements $\{\Pi_k\}_{k\in\keyspace}$ and an input state $\rho$, and the goal is to output, up to some precision error, the acceptance probability of $\rho$
under \emph{all given POVM elements}, i.e. $\Tr(\Pi_k\rho)$ for all $k$ (recall gentle search outputs a key where such probability high enough). As noted by~\cite{Shadow2}, shadow tomography immediately gives an attack against any private-key quantum money scheme in the plain model. Letting the POVM element $\Pi_k$ correspond to the acceptance of the verification circuit under key $k$, shadow tomography allows us to (approximately) 
estimate the probability that a given state $\$$ passes verification under each key,
namely, $\Pr[\top\gets\verify(k,\$)]$ for each $k$. Then, given $\$_{k^*}\gets\mint(k^*)$, an attacker can find, by brute-force, a different state $\$'$ that has close-enough acceptance probabilities under all keys and output it. The attacker can produce as many copies of $\$'$ as he wants. In particular, since $\$_{k^*}$ passes verification under $k^*$ with good probability, then so does $\$'$.

The above approach completely fails when verification has access to the oracle $O$, even with reflection emulation in hand. Consider carrying the above attack w.r.t. a verification circuit that, instead of querying $O$, takes as input polynomially many copies of $\ket{S-}$ besides $\$$ (as obtained by the reflection emulation technique from~\cite{C:JiLiuSon18}). Via brute-force, we are able to find all states that exhibit acceptance probabilities similar to the input $\$\otimes\ket{S-}^{\otimes t}$ (where $t$ is the number of copies required by reflection emulation). These may include states of the form $\$'\otimes \ket{S'-}^{\otimes t}$, for $S'\neq S$. Since the number of possible subsets $S'\subset \bin^\secp\setminus\{0^\secp\}$ of size $2^{\secp/2}$ is doubly-exponential, it is impossible to identify those satisfying $S=S'$ using polynomially many queries to the oracle $O$ (as this would imply an efficient algorithm for learning $S$ and consequently breaking our \qefid pairs). 

Alternatively, one may attempt to perform the brute force over the money state register alone; to any possible value it may take, attach the copies $\ket{S-}^{\otimes t}$ that we take as input, then invoke shadow tomography thereon. For every such possible money state, we obtain the acceptance probabilities under all keys, which we compare to those of the given money states until a match is found. Every invocation of shadow tomography, however, disturbs the copies of $\ket{S-}$. Such disturbance, even if in the slightest, limits the number of states we can brute-force over to an insufficient number of trials.

While the above brute-force idea does not work for us, it serves as inspiration to our ultimate solution. We observe that we need not brute-force over all possible states that the money state register may take. Rather, it is sufficient to look at states that can be produced by the minting algorithm of the private-key money scheme, i.e., $\$_k\gets\mint(k)$ for any key $k$. Roughly speaking, since these states have classical description (i.e. the corresponding keys), we are able to ``brute-force'' over them via a single invocation of shadow tomography, thus requiring only a polynomial number of copies of $\ket{S-}$! In more details, just as before, we apply shadow tomography to estimate the probability of acceptance of the input state $\$_{k^*}$ under any key $k$, which we denote by $b^*_k$. Next, we apply shadow tomography w.r.t. the POVM elements $\{(\Pi_{k',k},I-\Pi_{k',k})\}_{k,k'}$ and the all-zero state, where, for any pair of keys $k',k$, $\Pi_{k',k}$ corresponds to the following event:%\mor{this should be not ``implements the following'' but ``corresponds to the following event''}
\begin{enumerate*}
    \item Generate $\$_{k'}\gets \mint(k')$ by running $\mint(k')$ (which can be implemented as a unitary since $k'$ is classical) on many copies of the state $\ket{S-}^{\otimes t}$.
    \item Run $c\gets \verify(k,\$_{k'})$ and the verification accepts.
\end{enumerate*}

For any key $k'$, we obtain the acceptance probability of $\$_{k'}$ under all possible keys $k$, which we denote by $\{b_{k',k}\}$. We choose a key $k'$ such that $\{b_{k',k}\}_k\approx\{b^*_{k}\}_k$, and output sufficiently many states minted using $k'$. Note that such a $k'$ exists since, for $k'=k^*$, we have $\{b_{k',k}\}_k\approx\{b^*_{k}\}_k$ as $\$_{k'}$ distributes like the input $\$_{k^*}$. In fact, it is evident that any key $k'$ satisfying $\{b_{k',k}\}_k\approx\{b^*_{k}\}_k$ will be good enough for forfeiting purpose as money minted using $k'$ will have acceptance probability under $k^*$ close to that of $\$_{k^*}$ (which is close to $1$ by correctness), and hence the attack is successful. %Clearly, the attack makes only polynomially many queries to $O$.

We note that our attack against private-key quantum money schemes is an ($O$-aided) statistical attack with inefficient runtime and space complexity. However, it is query-efficient. It is not clear whether there exists an oracle $O'$ under which such an attack can be made efficient and still be able to break any private-key quantum money scheme, possibly using the oracle $O'$. Thus, in contrast to the first separation result of OWSGs from \qefid pairs, our separation of private-key quantum money schemes from \qefid pairs is, formally speaking, a fully-black-box separation and does not satisfy the stronger notion of an oracle separation between the two primitives. We leave the question of establishing an oracle separation between private-key quantum money schemes and \qefid pairs to future work.

\subsection{On the Worst-Case Simulatability of Our Oracles}
Exisiting separations involving Microcrypt primitives~\cite{Kre21,CCS24,CS24,cryptoeprint:2024/1043}
heavily rely on the concentration property of the Haar measure, such as Lévy’s lemma~\cite[Theorem 7.37]{Wat18}.
In contrast, the oracles with respect to which we achieve \Cref{thm:efid-vs-owsg,thm:efid-vs-money} are reflections about random subset states.
 This is the first example of separations between Microcrypt primitives in an oracular world that does not involve common Haar random states or unitaries.
We elaborate on what we think is a feature of our oracles which, despite seemingly technical, we believe offers the potential for further applications of our approach to new separation results.  
 %We use a different technique to remove queries from our oracle than that of the works based on the common Haar random st
 
In both our separation and separations relative to common Haar random states or unitaries, the proofs involve ``de-oraclizing'' certain algorithms, i.e., simulating oracle-aided algorithms by algorithms that do not query any oracles, albeit via completely different and incomparable simulation techniques.

In our proof, we invoke a simulation technique from the work of~\cite{C:JiLiuSon18}, which we refer to as \emph{reflection emulation} (see \Cref{sec:techniques}). Via reflection emulation we are able to simulate our reflection oracle \emph{in the worst-case}, independently of the distribution on the oracle. In sharp contrast, the simulation of Haar random states or unitaries, as performed in previous works~\cite{Kre21,CCS24,CS24} crucially relies on the concentration property of the Haar measure, and hence the simulation guarantees are \emph{distribution-sensitive}, i.e., may not hold for a different non-Haar distribution over the same support.

This being said, due to the Haar concentration measure, the guarantee from Haar state simulation is stronger in terms of precision, as the concentration of Haar measure provides precision inverse-exponential in the dimension. In comparison, simulating reflection about a state $\ket{\psi}$ via reflection emulation provides precision inverse-polynomial in the number of copies of the state $\ket{\psi}$ that are given.
 %First the simulation in our world holds in the worst case and holds even if the starting input state is not independent of the oracle sampled.
 
To demonstrate the advantage of worst-case simulation over a distribution-sensitive one, consider the following example. Given a Haar random unitary $U$ of dimension $D$, let us look at a QMA verifier $V^U$, that on any $n$-bit instance, takes as input an alleged witness state $\ket{\phi}\in \CC^D$, queries $U$ on $\ket{0...0}$ $n$ times to get the state $\ket{\psi_U}^{\otimes{n}}$, where $\ket{\psi_U}=U\ket{0...0}$, then performs a swap test on the given state $\ket{\phi}$ and $\ket{\psi_U}^{\otimes{n}}$. Clearly, if we apply standard Haar random emulation~\cite{Kre21}, then we would simulate queries to $U$ by sampling a different unitary $U'$ and running $V^{U'}$. In such a case, the largest possible acceptance probability of the obtained simulation $V^{U'}$, for any fixed oracle $U$, is $\frac{1}{2}+1/{\binom{2^n+D-1}{2^n-1}}$. In contrast, the original verifier $V^U$ has maximum acceptance probability $1$ as it always accepts the state $\ket{\psi_U}^{\otimes n}$. On the other hand, if we consider a reflection unitary $U$ and the same verifier as above, it can be easily shown that the worst-case reflection emulation results in the same maximum acceptance probability for the emulated verification, up to emulation precision.

\subsection{Related and Concurrent Work}\label{sec:comparison-to-concurrent works}

\paragraph{Related Work.} Separations inside Microcrypt have been studied before. \cite{C:ChuGolGra24} separates EV-OWPuzzs from PRSGs by ruling out EV-OWPuzzs in the oracular world of~\cite{Kre21}. Separations of QCCC key agreement protocols and QCCC commitments from Pseudorandom function-like state generators (PRFSGs) in the common Haar random state model were shown in~\cite{cryptoeprint:2024/1043}. Concurrent to the work of~\cite{cryptoeprint:2024/1043}, Chen, Coladangelo, and Sattath~\cite{CCS24} proposed the state oracle model to show separations and showed a separation between $1$-PRSGs and PRSGs in this model.

\paragraph{Concurrent Work.} In a later revision concurrent to our work,~\cite{CCS24} upgraded their separation to the unitary oracle model. In another concurrent work, \cite{BCN24} prove a similar result to \Cref{thm:efid-vs-owsg} by separating OWSGs from OWPuzzs and $1$-PRSGs. Common to our paper,~\cite{BCN24} and the later version of~\cite{CCS24} show separation in a unitary model between commitments (equivalently, EFI) and either OWSG or related primitives, such as PRSG.

Despite the similarities between our work and~\cite{BCN24,CCS24}, we outline a few technical differences between them.
% The main results in~\cite{BN24} and~\cite{CCS24} are similar but technically incomparable to \Cref{thm:efid-vs-owsg}, since~\cite{BN24} and~\cite{CCS24} rule out OWSGs and PRS, respectively, while constructing $1$-PRSGs (and hence, quantum commitments). On the other hand, in \Cref{thm:efid-vs-owsg}, we construct \qefid pairs (and hence OWPuzzs, and quantum commitments) while ruling out OWSGs in the presence of the respective oracles. All of these results ultimately imply a separation of OWSGs and EFI pairs.
% However, all three results imply a separation of OWSGs and EFI pairs. 
%     It should be noted that since one of the distributions in our construction of $\qefid$ pairs (see \Cref{con:efi} is the uniform distribution, all three works construct a pseudorandom (but statistically far from the uniform) distribution, but in different regimes. Namely, we construct a classical pseudorandom distribution but in a keyless regime, i.e., the quantum sampler simply samples a classical bit string, whereas they~\cite{BN24,CCS24} construct $1$-PRSGs which can be viewed as quantum pseudorandom distributions in a keyed regime, i.e., first, a classical key of size $\secp$ is sampled, and then a pure quantum state of size larger than $\secp$ is generated using the key. Note that it is not possible to construct a classical pseudorandom distribution in a keyed regime and also hope to rule out OWSGs since such a keyed classical pseudorandom distribution will constitute a quantum-evaluatable PRGs, which would trivially imply OWSGs.
The work of \cite{CCS24} rules out PRSGs while constructing $1$-PRSGs (and hence, quantum commitments), and~\cite{BCN24} rules out OWSGs while constructing $1$-PRSGs (and hence, quantum commitments) and OWPuzzs. On the other hand, in \Cref{thm:efid-vs-owsg}, we construct \qefid pairs (and hence OWPuzzs, and quantum commitments) while ruling out OWSGs in the respective (unitary) oracular worlds. 
    Notably, since one of the distributions in our construction of $\qefid$ pairs (see \Cref{con:efi}) is the uniform distribution, our works and~\cite{CCS24,BCN24} construct pseudorandom (but statistically far from the uniform) distributions but in different regimes. Namely, we construct a classical pseudorandom distribution but in a keyless regime, i.e., the quantum sampler simply samples a classical bit string, whereas \cite{CCS24,BCN24} construct $1$-PRSGs, which can be viewed as quantum pseudorandom distributions in a keyed regime, i.e., first, a classical key of size $\secp$ is sampled, and then a pure quantum state of size larger than $\secp$ is generated using the key. Note that it is not possible to construct a classical pseudorandom distribution in a keyed regime and also hope to rule out OWSGs, since such a keyed classical pseudorandom distribution will constitute a quantum-evaluatable PRGs, which trivially implies OWSGs.
    
    Second, our oracle differs from that of~\cite{CCS24,BCN24} as both works are based on the common Haar random state model and their unitarized version. In contrast, we consider an oracle corresponding to random subset states. Even though random subset states of appropriate size are known to be statistically close to Haar random states~\cite{JMW23,GB23}, this does not imply our oracle separation of OWSGs from \qefid pairs in \Cref{thm:efid-vs-owsg} also holds under the oracles considered in~\cite{CCS24,BCN24}, and vice-versa.
    This is because the statistical indistinguishability in~\cite{JMW23,GB23} is an average-case guarantee, which implies that the two oracle distributions are indistinguishable only on average. %In other words, for any adversary breaking a particular primitive (say OWSG), the expected success probabilities are the same in both the oracular worlds. 
    Hence, even if there exists an oracle in the support of one distribution relative to which an adversary breaks every construction of the primitive (say OWSGs), there might not exist such an oracle and such an adversary in the second distribution.
    
\subsection{Open Questions}

Our work leaves the following questions open:
\begin{enumerate}
\item While our oracles do not involve sampling Haar random states and unitaries, our oracles are still quantum. The natural next step would be to upgrade the oracles to a classical oracle with quantum superposition access, which we leave open for future works.   %Our oracle separation uses a reflection oracle, but it is not clear how to Change $O$ to a classical oracle.
\item We show that there are no EV-OWPuzzs in the oracular world we consider, which by~\cite{C:ChuGolGra24} suggests that many quantum primitives in the QCCC model, including non-interactive QCCC commitments, cannot exist in this world. This still leaves open the question of whether QCCC commitments exist in this world. Any answer to this question would greatly affect our understanding of quantum commitments. A positive answer to this question would imply that the gap between OWSGs and commitments is independent of the quantum communication in the commitments protocol. On the other hand, a negative answer to the question would mean that QCCC commitments are strictly stronger than quantum commitments, implying that quantum communication is indeed a crucial resource for quantum commitments.
\item In the oracular world that we consider, \qefid pairs exist but OWSGs do not. Can we show an oracle separation in the opposite direction? That is, can we give an oracle relative to which OWSGs exist but \qefid pairs do not?
A slightly stronger question would be: can we give an oracle relative to which OWSGs exist but OWPuzzs (which are implied by \qefid pairs) do not? Note that the construction of OWPuzzs from OWSGs by~\cite{STOC:KhuTom24} crucially requires that the OWSGs are pure, and hence a positive answer to the question mentioned above does not contradict~\cite{STOC:KhuTom24}. Due to the construction of OWPuzzs from pure OWSG by~\cite{STOC:KhuTom24}, an oracle separation between arbitrary mixed OWSGs and OWPuzzs would imply a separation between pure and arbitrary mixed OWSGs.

\item We show a statistical attack on any private-key quantum money schemes in our oracular world. Can we improve it to an efficient attack, relative to some unitary oracle, thereby improving the separation between private-key quantum money schemes
and \qefid pairs to an oracle separation? 
\item In our oracular world, both EFI pairs and {\qefid} pairs exist: Is there an oracle separation between EFI pairs and \qefid pairs? What is the relation between the two notions? 
\end{enumerate}

\section{Preliminaries}
\paragraph{Notation and Terminology.}
We use standard notations of cryptography and quantum information.
We use $\secp$ and $\kappa$ to denote security parameters.
For a set $S$, we use $x\gets S$ to denote that $x$ is sampled uniformly at random from $S$. 
For a distribution $D$, we use $x\gets D$ to denote that $x$ is sampled according to the distribution $D$. 
For a distribution $D$, we use $x\in D$ to denote that $x$ is in the support of $D$. For two distributions $P=\{p_x\}_x$ and $Q=\{q_x\}_x$, we denote by $\SD(P,Q)\coloneqq \frac{1}{2}\sum_x|p_x-q_x|$ the statistical distance between $P$ and $Q$.
For two quantum states $\rho$ and $\sigma$, $\mathsf{TD}(\rho,\sigma)$ is their trace distance.

A quantum oracle is an oracle that realizes arbitrary (possibly inefficient) quantum channel. A unitary oracle is given by an oracle where the channel is a unitary. An oracle-aided (quantum) algorithm is a quantum algorithm that has access to a quantum oracle, i.e., the circuit description of the algorithm can have the oracle as a gate. We say that an algorithm is $O$-aided if it presumes access to the oracle $O$. We naturally extend this notation to a set $\oracle$, or distribution, of oracles when access to $O\in\oracle$ is given. We say that an oracle-aided algorithm is polynomial-query if there exist polynomials $q,p:\NN\to\NN$ such that the algorithm, on input of length $n$, makes at most $q(n)$ queries to its oracle, where each query is of length at most $p(n)$ (qu)bits.
$I$ is the two-dimentional identity operator. We often write $I^{\otimes n}$ as $I$ if the dimension is clear from the context.
For any pure quantum state $\ket{\phi}$, we denote the reflection unitary about $\phi$ by $\reflect_{\phi} = I-2\proj{\phi}$. We denote the controlled reflection unitary by $\reflect^c_{\phi}$. It holds that $\reflect^c_\phi = \reflect_{1,\phi} = I-2\ket{1}\proj{\phi}\bra{1}$.

%\subsection{Basic Notions}

%\tamer{use $\sec$ for security parameter of \qefid and $\kappa$ for security parameter of OWSG}

\subsection{Quantum Cryptographic Primitives}

\ifnum\submission=1
We hereby provide formal definitions of the main quantum cryptographic notions considered in this work: \qefid pairs, OWSGs, and private-key quantum money schemes. Additional definitions of related notions are given in \Cref{sec:more-defs}.
\else
We hereby provide formal definitions of the main quantum cryptographic notions considered in this work: \qefid pairs, OWSGs, and private-key quantum money schemes. Additionally, we give definitions of related notions to which implications of our separation results extend.
\fi

We begin with the definition of \qefid pairs.

\begin{definition}[\qefid pairs]\label{def:efi}
Let $T:\NN\to\NN$ and $\epsilon:\NN\to[0,1]$. A \emph{$(T,\epsilon)$-\qefid} pair is a pair of distribution ensembles $(D_0,D_1):=(\{D_0(1^\sec)\}_\sec,\{D_1(1^\sec)\}_\sec)$ over classical bit strings that satisfy the following properties:
\begin{itemize}
    \item \textbf{Efficiently samplable:} There exists a QPT algorithm which, on input security parameter $1^\sec$ and a bit $b\in\bin$ samples from $D_b(1^\sec)$.
    \item \textbf{Statistically far:} There exists a polynomial $p:\NN\to\NN$ such that, for any $\sec\in\NN$, $$\SD(D_0(1^\sec),D_1(1^\sec))\ge 1/{p(\secp)}.$$
    \item \textbf{Computationally indistinguishable:} For any non-uniform quantum algorithm $\adv$ that, on input $1^\sec$ and a bit string $z$,
    runs in time $T(\sec)$, and any $\sec\in\NN$,
    $$\left|\Pr_{z\gets D_0(1^\secp)}[ \adv(1^\secp,z)=1]-\Pr_{z\gets D_1(1^\secp)}[ \adv(1^\secp,z)=1]\right|\le\epsilon(\secp).$$
\end{itemize}

We say that $(D_0,D_1)$ is a $(T,\epsilon)$-\qefid pair \emph{relative to an oracle $O$}, if the distributions are sampled by a polynomial-query $O$-aided algorithm and computational indistinguishability holds against any $O$-aided adversary $\adv$ that, on input $1^\sec$ and a bit string, runs in time at most $T(\sec)$.% with queries of length at most $T(\sec)$ qubits. 
%\takashi{I'm wondering if we should consider $T(\sec)$-time adversaries rather than $T(\sec)$-query adversaries. See also my comment in \Cref{def:quantum-money}.}\amit{Okay removed the query-based definitions.}

Lastly, we say that such a pair is simply an \emph{\qefid} pair if $T$ is a polynomial and $\epsilon$ is negligible. We say that it is an \emph{exponentially-hard \qefid} pair if there exists a constant $c\in\NN$ such that $T(\sec)=\Omega(2^{\frac{\secp}{c}})$ and $\epsilon(\sec)=O(2^{-\frac{\secp}{c}})$. These notions immediately extend to the oracle-relative setting.

    %\amit{Tamer, shouldn't we be defining sub-exponentially hard relative to an oracle? Why do we need the definition of exponentially hard \qefid?} \tamer{Because as far as i understand we get exponential hardness, no? i.e. $2^{-\frac{\secp}{c}}$ and not sub-exponential $2^{-\sec^c}$}\amit{Yep, added, but I am not sure if we should use the $(T,\epsilon)$ format or the exponential-hard format for the main theorem.}
\end{definition}

%\tamer{changing to a definition that does not require efficient key sampling. Reasons: 1. Stronger result. 2. Aligns with the definition of PRS from JLS (we need them aligned for the additional results, either both requiring efficient key sampling or neither, I choose the latter because of 1.).}

Next, we recall the definition of one-way state generators (OWSGs) from~\cite{C:MorYam22}. 
Note that~\cite{BJ24} shows equivalence between {\it inefficiently-verifiable} OWSGs
and EFI pairs. Here, we consider {\it efficiently-verifiable} OWSGs.%, which were originally introduced in \cite{C:MorYam22}.

\begin{definition}[OWSGs~\cite{TQC:MorYam24,C:MorYam22}]\label{def:owsg}
    A \emph{one-way state generator}, or \emph{OWSG} for short, with associated key space $\keyspace=\{\keyspace_\kappa\subseteq\bin^\kappa\}_\kappa$, 
    is a couple $(\G,\V)$ of QPT algorithms with the following syntax:
%\mor{Which do we use $\secp$ or $\kappa$ for security parameter?} \tamer{I'm using different symbols for \qefid and OWSG because a reduction that breaks the OWSG for security parameter $\kappa$ may call the \qefid oracle (ie the S-Oracle) w.r.t. different security parameter(1^\secp,s) $\sec$. This emphasizes the fact that there are two different ``games'' with security parameters that are not necessarily equal. Personally I think such a notation is cleaner for the analysis (it shows up in section 5.2). If you feel otherwise we can change though.}
    \begin{itemize}
        \item $\phi\gets\G(k)$: On input a key $k\in\keyspace_\kappa$, the generation algorithm outputs a (possibly mixed) quantum state $\phi$,
        \item $\bin\gets\V(k,\phi)$: On input a key $k\in\keyspace_\kappa$ and a state $\phi$, the verification algorithm outputs 1 (accepts) or 0 (rejects),
    \end{itemize}
    and that satisfies the following properties:
    \begin{itemize}
        \item \textbf{Correctness:} There exists a negligible function\ifnum\submission=0\footnote{We note that this is equivalent to a definition with inverse poly correctness as correctness can be amplified by standard repetition.}\fi $\negl:\NN\to[0,1]$ such that 
        \[
            \Pr[1\gets \V(k,\phi);\ k\gets \keyspace_\kappa,\ \phi\gets\G(k)] \geq 1-\negl(\kappa).
        \]
        \item \textbf{One-wayness:} For any non-uniform QPT algorithm $\adv$ and any polynomial $t:\NN\to\NN$, there exists a negligible function $\negl:\NN\to[0,1]$ such that, for any $\kappa\in\NN$,
        \[
            \Pr[1\gets\V(k',\phi);\ k\gets \keyspace_\kappa,\ \phi\gets\G(k),\ k'\gets\adv(1^\kappa,\phi^{\otimes t(\kappa)})]<\epsilon(\kappa).
        \]
    \end{itemize}

    %We say that such $(\G,\S,\V)$ is simply a \emph{OWSG} if $T$ is a polynomial and $\epsilon$ is negligible. %We say that such $(\G,\S,\V)$ is an \emph{exponentially-hard OWSG} if there exists a constant $c\in\NN$ such that $T(\sec)=\Omega(2^{\frac{\secp}{c}})$ and $\epsilon(\sec)=O(2^{-\frac{\secp}{c}})$.

    We say that $(\G,\V)$ is a OWSG \emph{relative to an oracle $O$} if the algorithms $\G$ and $\V$ are $O$-aided and satisfy correctness and one-wayness w.r.t any non-uniform polynomial-query $O$-aided quantum adversary $\adv$.% that, on input security parameter $\kappa$, makes at most $T(\kappa)$ calls to its oracle. with queries of length at most $T(\kappa)$ qubits.
\end{definition}

In \Cref{sec:fbb-def}, we further define \emph{a black-box construction of private-key quantum money schemes from \qefid} pairs.

\ifnum\submission=0
We now give definitions for \emph{one-way puzzles} and the special case of \emph{efficiently verifiable one-way puzzles}.

\begin{definition}[OWPuzzs~\cite{STOC:KhuTom24} and Efficiently-Verifiable OWPuzzs~\cite{C:ChuGolGra24}]\label{def:EV-OWPuzz}
   A one-way puzzle (OWPuzz) is a pair $(\samp, \ver)$ of algorithms with $\samp$ being QPT, having the following syntax:
   \begin{enumerate}
       \item $(ans,puzz)\gets\samp(1^\kappa)$ takes as input the security parameter $\kappa$, and outputs two bit strings, an answer and a puzzle.
       \item $b\gets\ver(ans',puzz)$ takes an (alleged) answer, $ans'$, and the puzzle, and outputs 1 (accept) or 0 (reject).
   \end{enumerate}
We require the following two properties.
\begin{itemize}
\item \textbf{Correctness:} 
There exists a negligible function $\negl:\NN\to[0,1]$ such that
\begin{equation*} 
\Pr[1\gets\ver(ans,puzz); (ans,puzz)\gets \samp(1^\kappa)]\ge1-\negl(\kappa).
\end{equation*}
\item \textbf{Security:} For any non-uniform QPT algorithm $\adv$, 
there exists a negligible function $\negl:\NN\to[0,1]$ such that
\begin{align*} 
\Pr[1\gets \ver(ans',puzz); (ans,puzz)\gets \samp(1^\kappa), ans'\gets \adv(1^\kappa,puzz)]\leq\negl(\kappa).
\end{align*}
\end{itemize}
We say that $(\samp,\ver)$ is an efficiently-verifiable one-way puzzle (EV-OWPuzz) if $\ver$ is also a QPT algorithm.
    %\amit{Tamer I commented (it is still in the comments) out the oracle relative definition for EVOWPuzz and OWPuzz since they are long and we very scarcely use it. I wonder if there is a way to jut define this oracle relative and once and for all.} \tamer{if it's up to me this entire definition is redundant since we don't use it formally anywhere} %We say that $(\samp,\ver)$ is an EV-OWPuzz\mor{(EV-)OWPuzz?} \emph{relative to an oracle $O$} if the algorithms $\samp$ and $\ver$ are $O$-aided and satisfy correctness and security w.r.t any non-uniform polynomial-query $O$-aided quantum adversary $\adv$.% that, on input security parameter $\kappa$, makes at most $T(\kappa)$ calls to its oracle. with queries of length at most $T(\kappa)$ qubits.

\end{definition}

Lastly, we define \emph{unclonable state generators}, which were implicitly considered in the work of~\cite{C:JiLiuSon18}.

\begin{definition}[Unclonable State Generators (UCSGs) (\cite{uhlmann_complexity}, Implicit in~\cite{C:JiLiuSon18})]\label{def:unclonable}
    An unclonable state generator (UCSG) $\samp$ 
with a key space $\keyspace=\{\keyspace_\kappa\subseteq\bin^\kappa\}_\kappa$ 
    is a QPT algorithm that, on input a key $k$, outputs a pure state $|\phi_k\rangle$.
    We require the following property, which we call {\bf unclonability}:
        For any QPT adversary $\cA$,
%    keyed family of quantum states $\Phi=\{\ket{\phi_k}\}_{k\in\keyspace}$, with associated key space $\keyspace=\{\keyspace_\kappa\subseteq\bin^\kappa\}$, 
%is an \emph{unclonable state generator} or \uncl if it satisfies the following two properties:
        \begin{align}
            \Expct_{k\gets\keyspace_\kappa}\left[\Tr\left(\ketbra{\phi_k}{\phi_k}^{\otimes m+1}\cA(1^\kappa,|\phi_k\rangle^{\otimes m})\right)\right]\le \negl(\kappa).
        \end{align}
        
\if0
    \begin{enumerate}
        \item \textbf{Efficient generation:} There exists a QPT algorithm which, on input a key $k\in\keyspace$, outputs the state $\ket{\phi_k}$.
        \item \textbf{Unclonability:} For any polynomials $m:\NN\to\NN$, and any non-uniform QPT algorithm $\adv$, there exists a negligible function $\negl:\NN\to[0,1]$ such that%\amit{The order of the quantifier is not correct I think. Shouldn't it be for any QPT algorithm $\adv$ there exists a negligible function $\negl$...}\tamer{I don't know, I usually go with this one. It's stronger and usually achievable (and in my personal opinion makes more sense: you want your level of security to be independent in the attacker). Feel free to swap though (but let's be consistent throughout).} \tamer{unify}
        \[
            \Expct_{k\gets\keyspace_\kappa}\left[\Tr\left(\ketbra{\phi_k}{\phi_k}^{\otimes m+1}\cA(1^\kappa,|\phi_k\rangle^{\otimes m})\right)\right]\le \negl(\kappa).
        \]
    \end{enumerate}
    \fi

    \iffalse
    \mor{May be the following is simpler?
   \begin{align}
      \Expct_{k\gets\keyspace_\kappa}\langle \phi_k|^{\otimes m'}\cA(1^\kappa,|\phi_k\rangle^{\otimes m})|\phi_k\rangle^{\otimes m'}\le \negl(\kappa) 
   \end{align} 
   Also, $\cA$ takes $1^\kappa$ as input as well?
    }\tamer{fixed}
    \fi

    We say that a UCSG is a \emph{strong} UCSG 
    if the security also holds against any adversary $\adv$ who has polynomial oracle access to the controlled reflection unitary $\reflect^c_{\phi_k}$.

    Lastly, we say that a UCSG is a (strong) UCSG \emph{relative to an oracle $O$}, 
    if $\samp$ is a polynomial-time $O$-aided generation algorithm, and satisfies the security w.r.t 
    any non-uniform polynomial-time $O$-aided (and, resp., $\reflect^c_{\phi_k}$-aided) quantum adversary $\adv$.
    
\end{definition}

\begin{remark}
    We note that despite the similarities between unclonability and unforgeability, there is a subtle difference in the definition of 
    UCSGs (\Cref{def:unclonable}) and private-key quantum money schemes (\Cref{def:quantum-money}) even when restricted to pure states. 
    In the cloning game (\Cref{def:unclonable}), the cloner can ask for any (polynomially large) $m$ copies of the state $\ket{\phi_k}$, and the winning condition can be interpreted as that the cloner wins if it 
    submits exactly $m+1$ registers, such that all the registers should pass the rank-$1$ projection into $\ket{\phi_k}$, whereas in the forging game (\Cref{def:quantum-money} restricted to pure states), 
    the forger after getting any $m$ copies of the money state $\ket{\phi_k}$ can output any (polynomial) number of registers and wins if least $m+1$ passes verification.
    This difference of passing at least $m+1$ verifications instead of submitting exactly $m+1$ registers and passing all the rank-$1$ projections can be bridged 
    if the cloning adversary gets access to the rank-$1$ projection into the state $\ket{\phi_k}$, and hence strong UCSGs and pure private-key quantum money schemes
    are equivalent.\footnote{Technically, the equivalence holds between strong UCSGs and pure private-key quantum money schemes with a rank-$1$ projection into the money state, 
    but the rank-$1$ verification is not an extra assumption, since any pure private-key quantum money scheme can be attached with a rank-$1$ verification.} 
\end{remark}
\fi

Lastly, we define private-key quantum money schemes.

\begin{definition}[Private-Key Quantum Money Schemes~\cite{C:JiLiuSon18}]\label{def:quantum-money}
    A private-key quantum money scheme is a tuple 
    $(\kgen,\mint,\verify)$ 
    of QPT algorithms 
    %with associated keyspace $\keyspace=\{\keyspace_\kappa\subseteq\bin^\kappa\}_\kappa$
    having the following syntax.%\amit{@Tamer, is there a reason not to consider $\bin^\kappa$ as the keyspace $\keyspace_\kappa$ especially since we have a $\keygen$ algorithm here. The only difference in such a definition world be to extend the definition of $\mint$ and $\verify$ for every bit string of length $\kappa$, but we can do it by simply thinking of $\mint$ and $\verify$ as circuits which on every input of length $\kappa$ outside $\keyspace_\kappa$, always outputs the all-zeroes states and always rejects respectively. The reason I am asking this is because having a fixed keyspace would simplify the parameters and resolve the subtle issue that Tomoyuki raised in the quantum money section. Also, the existing definitions work with respect to the fixed keyspace $\bin^\kappa$, so this way we will be consistent to them as well. The only downside is that it would look a bit out of place with respect to OWSG and \uncl definitions, but I think it is okay because this was anyways true due to the $\keygen$ algorithm here and also quantum money is more of a cryptographically motivated primitive than a cryptographic phenomenon. Let me know what you think, I will update accordingly.}\amit{Okay removed $\keyspace_\kappa$ from the discussion}
    \begin{enumerate}
    \item $k\gets \kgen(1^\kappa)$ takes a security parameter $\kappa$ and outputs a classical secret key, $k\in \bin^\kappa$. 
    \item  $\$_k\gets \mint(k)$  takes the secret key $k$ and outputs a quantum money state $\$_k$.
    \item $b\gets \verify(k,\$)$ receives the secret key $k$ and an (alleged) quantum money state $\$$, 
    and outputs 1 (accept) or 0 (reject).
\end{enumerate}
We require the following properties.
\begin{itemize}
\item \textbf{Correctness:} %\takashi{Is there any reason why we focus on perfect correctness? This seems to be a quite strong requirement.}\amit{I had this just to make the proof simpler if you want I can work with statistical correctness instead.}\amit{Ok I have updated. Do you also want to have statistically correct OWSG?}\tamer{This just adds clutter to the proofs without any conceptual addition, we can leave a comment that our proofs work even without perfect correctness}
A private-key quantum money scheme is called $\mu$-correct, for $\mu:\NN\to[0,1]$, if for all $\kappa\in\NN$,
\begin{equation*}
\Pr[\verify(k,\$_k)=1; k\gets \kgen(1^\kappa), {\$}_k\gets \mint(k)]\geq \mu(\kappa).
\end{equation*}
We say that the scheme is simply \emph{correct} if $\mu$ is inverse-polynomial and we say it is \emph{perfectly correct} if $\mu=1$.
\item \textbf{Unforgeability:} For any polynomials $m,m':\NN\to\NN$ where $m'>m$, and for any non-uniform QPT algorithm $\adv$, there exists a negligible function $\negl:\NN\to[0,1]$ such that
\begin{align*} 
\Pr&[\exists S\subseteq [m'(\kappa)], |S|>m, \verify(k,\$_i)=1\forall i\in S;\\
&k\gets \kgen(1^\kappa), {\$}_k\gets \mint(k), \$_{1,\cdots,m'(\kappa)}\gets\adv^{\verify(k,\cdot)}(1^\kappa,\$_k^{\otimes m(\kappa)})]\leq\negl(\kappa),
\end{align*}
where $\$_{1,\cdots,m'(\kappa)}$ is a state on $m'(\kappa)$ registers and $\$_i$ denotes its $i^{th}$ register.% and $\adv^{\verify(k,\cdot)}$ represents that the adversary $\adv$ gets access to the verification oracle with respect to the key $k$.

We say that the private-key quantum money scheme is \emph{statistically unforgeable} if unforgeability holds against any arbitrary (and not necessarily QPT) adversary that makes polynomially many queries to the verification oracle.

Further, we say that $(\kgen,\mint,\verify)$ is a private-key quantum money scheme \emph{relative to an oracle $O$} if the algorithms $\kgen$, $\mint$ and $\verify$ are $O$-aided and satisfy correctness and unforgeability w.r.t any non-uniform polynomial-time $O$-aided quantum adversary $\adv$. %\takashi{In this definition of a private-key quantum money scheme relative to an oracle $O$, we require security against poly-query (possibly unbounded-time) adversaries. In this case, our statistical attack shows that there is no private-key quantum money scheme relative to our oracle. So we already have an oracle separation between QEFID and private-key quantum money, resolving one of our open questions. I think this weird situation occurs because the definition of a primitive relative to an oracle by default requires security against query-bounded adversaries rather than time-bounded adversaries. So I feel it's more reasonable to consider time-bounded adversaries.}\amit{I removed queries based definition}
The oracle-relative notion extends to statistical unforgeability in the natural way.

\end{itemize}
\end{definition}

\subsection{On The Complexity of The Reflection Oracle}\label{sec:reflection}

The reflection unitary $R_{\phi}=I-2\proj{\phi}$ and its controlled variant $R^c_{\phi}=I-2\ket{1}\proj{\phi}\bra{1}$ play a significant role in our proofs. In particular, the latter is a component in the oracle under which we prove our separation results. In this section, we recall a couple of known facts regarding the power of the reflection oracle and its complexity.

We begin with the following folklore application of the controlled reflection oracle about a state $\ket{\phi}$ for projecting onto $\ket{\phi}$\ifnum\submission=1
, which we prove in \Cref{sec:reflection-proofs}
\else
.
\fi

\begin{restatable}[Controlled Reflection to Projection]{proposition}{reflecttoproject}
    \label{prop:project-via-reflect}
    Let $\ket{\phi}$ be a quantum state and let $R^c_{\phi}=I-2|1\rangle|\phi\rangle\langle\phi|\langle 1|$ be the corresponding controlled reflection unitary. There exists an $R^c_{\phi}$-aided quantum algorithm that takes as input a quantum state $\ket{\psi}$, makes a single query to its oracle, outputs $\ket{\phi}$ with probability $|\ip{\psi}{\phi}|^{2}$ and otherwise, with probability $1-|\ip{\psi}{\phi}|^{2}$, declares failure. 
\end{restatable}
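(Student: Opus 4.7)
The plan is to use a standard Hadamard test style circuit with the controlled reflection oracle. Specifically, I will prepare an ancilla control qubit in the $|+\rangle$ state, apply $R^c_\phi$ to the control together with the input register holding $|\psi\rangle$, and then measure the control qubit in the Hadamard basis; on outcome $|-\rangle$ the algorithm outputs the state register, and on outcome $|+\rangle$ it declares failure.

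To verify the claim, I will compute the state after applying the oracle. Starting from $\tfrac{1}{\sqrt{2}}(|0\rangle+|1\rangle)|\psi\rangle$, the oracle acts as the identity on the $|0\rangle$ branch and subtracts $2\langle\phi|\psi\rangle|\phi\rangle$ from the state register on the $|1\rangle$ branch, yielding
\[
|+\rangle|\psi\rangle - \sqrt{2}\,\langle\phi|\psi\rangle\,|1\rangle|\phi\rangle .
\]
Rewriting $|1\rangle = \tfrac{1}{\sqrt{2}}(|+\rangle-|-\rangle)$ and regrouping gives
\[
|+\rangle\bigl(|\psi\rangle - \langle\phi|\psi\rangle|\phi\rangle\bigr) \;+\; |-\rangle\,\langle\phi|\psi\rangle\,|\phi\rangle.
\]
Hence the Born rule yields outcome $|-\rangle$ with probability $|\langle\phi|\psi\rangle|^2$, and conditioned on this outcome the state register collapses exactly to $|\phi\rangle$; otherwise the algorithm declares failure as claimed.

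This uses a single query to $R^c_\phi$, matching the statement. There is no real obstacle to this proof -- the only thing to be a bit careful about is the expansion of $|1\rangle$ in the Hadamard basis to identify the $|-\rangle$ component of the post-query state, but this is a direct calculation.
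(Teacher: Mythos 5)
Your proof is correct and is essentially the paper's own argument: a single-query Hadamard test with the control prepared in $\ket{+}$, followed by expanding the post-query state in the $\{\ket{+},\ket{-}\}$ basis to read off the $\ket{-}$ component $\braket{\phi}{\psi}\ket{-}\ket{\phi}$, giving success probability $|\braket{\psi}{\phi}|^2$ and residual state $\ket{\phi}$ (up to an irrelevant global phase). Nothing further is needed.
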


\submitorfull{}{reflect-to-project}

Next, we recall the following theorem from~\cite{C:JiLiuSon18}, which states that the reflection oracle about $\ket{\psi}$ 
may be efficiently emulated given copies of the state $\ket{\psi}$.

\begin{restatable}[Reflection Emulation~{\cite[Theorem 4]{C:JiLiuSon18}}]{theorem}{reflectjls}
\label{thrm:reflect-jls}
    Let $Q$ be a quantum oracle. Let $\ket{\psi}$ be a quantum state and let $\reflect_{\psi}=I-2\proj{\psi}$ be the corresponding reflection unitary. Let $\ket{\phi}$ be a state not necessarily independent of $\ket{\psi}$. Let $\adv$ be a $(Q,\reflect_{\psi})$-aided quantum circuit that makes $q$ queries to the oracle $\reflect_{\psi}$. Then, there exists a $Q$-aided quantum circuit $\badv$ such that, for any $\ell\in\NN$,
    \[
        \TD\left(\adv^{Q,\reflect_{\psi}}(\ket{\phi})\otimes\ket{\psi}^{\otimes \ell}, \badv^Q(\ket{\phi}\otimes \ket{\psi}^{\otimes \ell})\right)\leq \frac{2q}{\sqrt{\ell+1}}.
    \]
  Further, if $\adv$ is of polynomial size then so is $\badv$.
\end{restatable}

The above theorem is, in fact, slightly stronger than Theorem 4 of \cite{C:JiLiuSon18} in the following two aspects.
First, we allow both $\adv$ and $\badv$ access to an additional oracle $Q$.
This is possible because in the proof of Theorem 4 of \cite{C:JiLiuSon18}, the emulation circuit $\badv$ uses $\adv$ as a black-box.
Second, our version of the theorem bounds the distance between the outcome states including the registers where the copies of $\ket{\psi}$ reside. That is, compared to Theorem 4 of \cite{C:JiLiuSon18}, we replace $\adv^{Q,\reflect_{\psi}}(\ket{\phi})$ with
        $\adv^{Q,\reflect_{\psi}}(\ket{\phi})\otimes\ket{\psi}^{\otimes \ell}$.
This is also possible, because, looking at the proof of Theorem 4 of \cite{C:JiLiuSon18},
$\badv$ does not disturb $|\psi\rangle^{\otimes \ell}$ when it uses them to emulate $R_\psi$. \ifnum\submission=0 We briefly recall the construction of $\badv$ from the proof by \cite{C:JiLiuSon18}.

Let $S=\symd$ denote the symmetric subspace over $\ell+1$ registers, where $\CC^N$ represents the Hilbert space in which the {$N$-dimensional} state $\ket{\psi}$ resides.\footnote{The symmetric subspace contains all states that are invariant to a permutation over their registers.} Let $R_S$ denote the reflection about the symmetric subspace. The circuit $\badv$ is defined the same as $\adv$ except that any query to the oracle $R_\psi$ is replaced by the application of $R_S$ over the input state (i.e. the register containing the query) and the $\ell$ registers containing the copies of $\ket{\psi}$.\fi

For completeness, we provide a proof of \cref{thrm:reflect-jls} in \ifnum\submission=1 \cref{sec:reflection-proofs}
\else
\cref{sec:jls-proof-full}
\fi
.

The following is an extension of \cref{thrm:reflect-jls} to the case where multiple reflection oracles are used. \ifnum\submission=1 A proof for this corollary too is attached in \Cref{sec:reflection-proofs}. \fi

\begin{restatable}{corollary}{reflectgeneral}
\label{cor:inductive-emulation}
    Let $Q$ be a quantum oracle. Let $\ket{\psi_1},\dots,\ket{\psi_m}$ be quantum states and let, for $j\in[m]$, $\reflect_{\psi_j}=I-2\proj{\psi_j}$ be the corresponding reflection unitary. Let $\ket{\phi}$ be a state not necessarily independent of $\ket{\psi_1},\dots,\ket{\psi_m}$. Let $\adv$ be a $(Q,\reflect_{\psi_1},\dots,\reflect_{\psi_m})$-aided quantum circuit that makes $q$ queries to its oracles $\{\reflect_{\psi_j}\}_j$ (in total). Then, there exists a $Q$-aided quantum circuit $\badv$ such that, for any $\ell\in\NN$,
    \begin{align*}
        \TD\left(\adv^{Q,\reflect_{\psi_1},\dots,\reflect_{\psi_m}}(\ket{\phi})\otimes\ket{\psi_1}^{\otimes \ell}\otimes\dots \otimes\ket{\psi_m}^{\otimes \ell},\right.& \\
        \badv^Q(\ket{\phi}\otimes\ket{\psi_1}^{\otimes \ell}\otimes\dots \otimes\ket{\psi_m}^{\otimes \ell})&\left.\vphantom{\badv^Q(\ket{\phi}\otimes\ket{\psi_1}^{\otimes \ell})}\right)\leq \frac{2q}{\sqrt{\ell+1}}.
    \end{align*}
    Further, if $\adv$ is of polynomial size then so is $\badv$.
\end{restatable}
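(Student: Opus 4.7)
\textbf{Proof plan for Corollary \ref{cor:inductive-emulation}.} The plan is to proceed by induction on the number $m$ of reflection oracles. The base case $m=1$ is precisely \Cref{thrm:reflect-jls}, so assume the statement for $m-1$ reflections. Let $\adv$ be as in the statement and let $q_j$ denote the number of queries $\adv$ makes to $\reflect_{\psi_j}$, so $\sum_{j=1}^m q_j = q$.

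First, I would apply \Cref{thrm:reflect-jls} to $\adv$, treating $Q' := (Q, \reflect_{\psi_2}, \dots, \reflect_{\psi_m})$ as the auxiliary oracle and $\reflect_{\psi_1}$ as the reflection to be emulated, using the $\ell$ copies of $\ket{\psi_1}$. This yields a $Q'$-aided circuit $\badv_1$ (polynomial in size if $\adv$ is) such that
\[
\TD\bigl(\adv^{Q,\reflect_{\psi_1},\dots,\reflect_{\psi_m}}(\ket{\phi})\otimes\ket{\psi_1}^{\otimes\ell},\ \badv_1^{Q,\reflect_{\psi_2},\dots,\reflect_{\psi_m}}(\ket{\phi}\otimes\ket{\psi_1}^{\otimes\ell})\bigr)\leq \frac{2q_1}{\sqrt{\ell+1}}.
\]
Crucially, by the structure of the emulation in \Cref{thrm:reflect-jls}, $\badv_1$ queries $\reflect_{\psi_j}$ (for $j\geq 2$) exactly $q_j$ times, since the emulation only substitutes $\reflect_{\psi_1}$-queries with reflections about a symmetric subspace and leaves all other gates intact.

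Next, I would apply the inductive hypothesis to $\badv_1$ with input state $\ket{\phi'} := \ket{\phi}\otimes\ket{\psi_1}^{\otimes \ell}$, using $\ell$ copies each of $\ket{\psi_2},\dots,\ket{\psi_m}$. Since $\badv_1$ makes $\sum_{j=2}^m q_j = q - q_1$ queries in total to $\reflect_{\psi_2},\dots,\reflect_{\psi_m}$, the inductive hypothesis yields a $Q$-aided (polynomial-size) circuit $\badv$ with
\[
\TD\bigl(\badv_1^{Q,\reflect_{\psi_2},\dots,\reflect_{\psi_m}}(\ket{\phi'})\otimes\ket{\psi_2}^{\otimes\ell}\otimes\dots\otimes\ket{\psi_m}^{\otimes\ell},\ \badv^Q(\ket{\phi'}\otimes\ket{\psi_2}^{\otimes\ell}\otimes\dots\otimes\ket{\psi_m}^{\otimes\ell})\bigr)\leq \frac{2(q-q_1)}{\sqrt{\ell+1}}.
\]
Tensoring the first inequality on both sides with the fixed state $\ket{\psi_2}^{\otimes\ell}\otimes\dots\otimes\ket{\psi_m}^{\otimes\ell}$ (which preserves trace distance) and combining with the above via the triangle inequality gives the desired bound $\frac{2q_1 + 2(q-q_1)}{\sqrt{\ell+1}} = \frac{2q}{\sqrt{\ell+1}}$.

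The main subtlety to verify is that the inductive hypothesis can legitimately be applied with $\ket{\phi'} = \ket{\phi}\otimes\ket{\psi_1}^{\otimes\ell}$, even though the states $\ket{\psi_j}$ are not assumed independent from $\ket{\psi_1}$ or from $\ket{\phi}$; fortunately, \Cref{thrm:reflect-jls} (and hence the inductive hypothesis) makes no such independence assumption, only stipulating that $\ket{\phi'}$ is an arbitrary input state. The only other point deserving care is ensuring that $\badv_1$'s queries to the remaining reflection oracles are unchanged in count, which follows directly from inspecting the symmetric-subspace construction underlying \Cref{thrm:reflect-jls}. The polynomial-size claim propagates through both steps.
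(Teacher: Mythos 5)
Your proposal is correct and matches the paper's approach: the corollary is proved by induction on the number of reflection oracles, peeling off one reflection at a time via \Cref{thrm:reflect-jls} (with the remaining reflections absorbed into the auxiliary oracle $Q$) and combining the per-step bounds by the triangle inequality after tensoring with the untouched copies. Your two points of care — that the theorem imposes no independence assumption on the input state, and that the symmetric-subspace substitution leaves the query counts to the other oracles intact — are exactly the right ones.
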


\submitorfull{}{reflect-general}

\section{The Oracles}\label{sec:oracles}

We begin by defining the unitary oracle under which the separation between OWSGs and {\qefid} pairs from \cref{thm:efid-vs-owsg} holds. Our oracle is actually composed of two oracles: First, an oracle $O$, which captures the hardness required for building a \qefid pair. In fact, we define a distribution $\oracle$ over oracles, and we show that a ``hard'' oracle $O\in\oracle$ exists via a probabilistic argument. The distribution $\oracle$ will be useful also to establish the separation between private-key quantum money schemes and {\qefid} pairs (\cref{thm:efid-vs-money}). Second, an oracle $\qpspace$~\cite{CCS24}, which allows simulating any QPSPACE computation\footnote{Note that this QPSPACE computation does not mean the classical computation that can solve decision problems that are decided by quantum polynomial-space computing. It means a unitary operation over polynomial number of qubits. Therefore, our $\qpspace$ oracle is a quantum oracle that takes a quantum state as input and outputs a quantum state.} and provides the necessary power to break any OWSG candidate (while preserving the security of our {\qefid} pair).

We begin by defining a useful special case of the controlled reflection unitary.

\begin{definition}[Unitary $U_{W}$]\label{def:s-unitary}
For any $\sec\in\NN$ and
$W\subseteq\bit^\secp\setminus \{0^\secp\}$, we define the states
\begin{align*}
    |W\rangle=\frac{1}{\sqrt{|W|}}\sum_{x\in W}|x\rangle &&\text{and}&& \ket{W-} = \frac{1}{\sqrt{2}} \left( \ket{W} - \ket{0^\secp} \right).
\end{align*}
We define the $(\secp+1)$-qubit unitary $U_{W}$ as follows $$U_{W}=I-2\ket{1}\proj{W-}\bra{1}.$$ Namely, $U_{W}$ is the controlled reflection unitary about $\ket{W-}$, i.e. $R^c_{W-}$.

Note that, $R_{W-}$ maps $\ket{W}$ to $\ket{0^\sec}$ and $\ket{0^\sec}$ to $\ket{W}$, and acts as identity on the subspace orthogonal to $\Span(\ket{0^\sec},\ket{W})$.

%We define the $(\secp+1)$-qubit unitary $U_{W}$ to be the controlled reflection unitary $$U_{W}=R^c_{W-}=I-2\ket{1}\proj{W-}\bra{1}.$$ Note that $U_{W}$, on control bit 1, maps $\ket{W}$ to $\ket{0^\sec}$ and $\ket{0^\sec}$ to $\ket{W}$, and acts as identity on the subspace orthogonal to $\Span(\ket{0^\sec},\ket{W})$.

% \amit{I instead had this in mind that we discussed in the last meeting:
% For $S\subseteq\bit^\secp\setminus \{0^\secp\}$, we define the $(\secp+1)$-qubit unitary $U_S$ by

% What do you think?
% }\mor{that is fine.}
\end{definition}

We now define the distribution $\oracle$ over oracles $O$ where, roughly speaking, a random oracle $O\gets\oracle$ applies the unitary $U_S$ over its $(\sec+1)$-qubit input, for a uniformly random subset $S$ of size $2^{{\secp}/{2}}$.

\begin{definition}[The Oracle $\mathcal{O}$]\label{def:S-oracle}
We denote by $\oracle$ the distribution over quantum oracles where
\begin{itemize}
    \item \textbf{Randomness:} A random $O_S\gets\oracle$ is defined by an ensemble $S=\{S_\sec\}_{\sec\in\NN}$ where, for any $\sec\in\NN$, $S_\sec$ is a uniformly random subset in $\bin^\sec\setminus \{0^\sec\}$ of size $|S_\sec|=2^{\frac{\secp}{2}}$.
    \item \textbf{Query:} For any $S=\{S_\sec\}_{\secp\in\mathbb{N}}$, $\sec\in\NN$ and $(\sec+1)$-qubit input state $\rho$, we define $O_S(\rho)=U_{S_\secp}\rho U_{S_\secp}^\dagger$.
    
\end{itemize}

\iffalse
For each $\secp\in\mathbb{N}$, a subset $S_\secp\subset\bit^\secp$ is uniformly randomly chosen such that $|S_\secp|=2^{\frac{\secp}{2}}$.
\item 
On input an $(\secp+1)$-qubit state $\rho$, it outputs $U_{S_\secp}\rho U_{S_\secp}^\dagger$.
\fi
\end{definition}

A key ingredient in our proofs is the application of the reflection emulation technique from~\cite{C:JiLiuSon18} to our oracle $\oracle$. Since any $O\in\oracle$ essentially computes a reflection unitary about the state $\ket{S_\sec-}$ on any input of length $\sec+1$ (where $S_\sec$ is the random subset underling $O$), reflection emulation (namely \cref{thrm:reflect-jls}) tells us that we are able to simulate the oracle given copies of the corresponding state $\ket{S_\sec -}$. The following proposition follows from \cref{thrm:reflect-jls,def:s-unitary}.

\begin{proposition}[Emulating $\oracle$]\label{prop:reflect-emulate-efid}
Let $\epsilon>0$, $m\in \mathbb{N}$ and $\secp_1,\ldots,\secp_m\in \mathbb{N}$. Let $q\in \mathbb{N}$ and define $t={2q^2}/{\epsilon^2} -1$. For every $i\in[m]$, let $S_{i}$ be a subset of strings in 
$\bin^{\secp_i}\setminus\{0^{\secp_i}\}$, and $U_{S_{i}},\ldots, U_{S_{m}}$ be as defined in \Cref{def:s-unitary} and $Q$ be any other quantum oracle.
%\mor{$\bin^{\secp_i}\setminus\{0^{\secp_i}\}$?}
    Let $\cA^{Q,U_{S_{i}},\ldots, U_{S_{m}}}$ be an oracle-aided quantum algorithm that makes at most $q$ queries in total to the oracles $U_{S_{i}},\ldots, U_{S_{m}}$. 
    Then, there exists a quantum algorithm $\cB$ such that, for any input state $\rho$, 
    \begin{align*}
    \TD\left[\left(\cA^{Q,U_{S_{1}},\ldots, U_{S_{m}}}(\rho)\bigotimes_{i=1}^m\ket{S_{i}-}^{\otimes t}\right),\ \cB^Q\left(\rho,\bigotimes_{i=1}^m\ket{S_{i}-}^{\otimes t}\right)\right]
    \leq \epsilon.
    \end{align*}
    \begin{comment}
    \mor{I think the following ``In particular..." part is not necessary, because it is trivial from the above TD bound.}
    In particular, if $A$ is a quantum algorithm that outputs a success/failure bit, then so is $B$ and 
    % Furthermore, if $A$ is a quantum channel, then there exists an quantum channel $B$ such that,
    \begin{align*}
    \left|\Pr\left[\cA^{U_{S_{1}},\ldots, U_{S_{m}}}(\rho)=1\right]
    -\Pr\left[\cB\left(\bigotimes_{i=1}^m\ket{S_{i}-}^{\otimes t'},\rho\right)=1\right]\right|\leq \epsilon,
    \end{align*}
    \end{comment}
    Moreover, the running time of $B$ is polynomial in that of $A$ and $\ell$.
    %\amit{Self: Make the notations consistent especially for the number of copies of the S- state required for the emulation.}
\end{proposition}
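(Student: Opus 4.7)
My plan is to reduce the statement to a direct application of \Cref{cor:inductive-emulation}. The key observation is that by \Cref{def:s-unitary}, each oracle $U_{S_i}$ is the controlled reflection about $\ket{S_i-}$, namely $U_{S_i} = I - 2\ket{1}\proj{S_i-}\bra{1}$, which coincides with the uncontrolled reflection $R_{\ket{1}\ket{S_i-}}$ about the $(\secp_i+1)$-qubit state $\ket{1}\ket{S_i-}$ (living on the control qubit plus the $\secp_i$-qubit data register). Thus I can equivalently regard $\cA$ as a $(Q, R_{\ket{1}\ket{S_1-}},\ldots, R_{\ket{1}\ket{S_m-}})$-aided circuit making at most $q$ queries to these $m$ reflection oracles in total.

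Next, I would prepare copies of $\ket{1}\ket{S_i-}$ from the given copies of $\ket{S_i-}$ by adjoining ancilla qubits initialized to $\ket{1}$, and then invoke \Cref{cor:inductive-emulation} with $\ell = t$ and the reflection oracles $\{R_{\ket{1}\ket{S_i-}}\}_{i\in[m]}$. This yields a $Q$-aided circuit $\cB'$ satisfying
\begin{align*}
\mathsf{TD}\Biggl(&\cA^{Q,R_{\ket{1}\ket{S_1-}},\dots,R_{\ket{1}\ket{S_m-}}}(\rho)\otimes\bigotimes_{i=1}^m \bigl(\ket{1}\ket{S_i-}\bigr)^{\!\otimes t}, \\
&\cB'^{Q}\Bigl(\rho\otimes\bigotimes_{i=1}^m \bigl(\ket{1}\ket{S_i-}\bigr)^{\!\otimes t}\Bigr)\Biggr) \le \frac{2q}{\sqrt{t+1}}.
\end{align*}
I would then take $\cB$ to be the circuit that first appends the required ancilla $\ket{1}$'s to the input copies of $\ket{S_i-}$, next runs $\cB'$, and finally traces out the ancilla qubits. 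Since partial trace is non-expansive in trace distance, the resulting bound between the output of $\cB$ on $\rho\otimes\bigotimes_i\ket{S_i-}^{\otimes t}$ and $\cA^{Q,U_{S_1},\dots,U_{S_m}}(\rho)\otimes\bigotimes_i\ket{S_i-}^{\otimes t}$ is preserved.

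To conclude, I would verify that the stated choice $t = 2q^2/\epsilon^2 - 1$ (adjusting the constant slightly if the $\frac{2q}{\sqrt{t+1}}$ bound requires it) suffices to bring the trace distance below $\epsilon$, and note that the running-time guarantee on $\cB$ is inherited from the corresponding guarantee in \Cref{cor:inductive-emulation} together with the trivial ancilla preparation and discarding. The only mildly non-trivial conceptual step is the identification of the controlled reflection $U_{S_i} = R^c_{S_i-}$ with an uncontrolled reflection on the extended $(\secp_i+1)$-qubit register; once this is set up, the proof reduces to a direct invocation of the multi-reflection emulation corollary, so I do not anticipate any substantive obstacle.
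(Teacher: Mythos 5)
Your proposal is correct and follows essentially the same route as the paper: one observes that $U_{S_i}=I-2\ket{1}\proj{S_i-}\bra{1}$ is exactly the (uncontrolled) reflection $\reflect_{\psi_i}$ about the pure state $\ket{\psi_i}=\ket{1}\otimes\ket{S_i-}$, invokes the multi-oracle reflection-emulation result (\Cref{cor:inductive-emulation}) with $\ell=t$, and handles the control ancillas by preparing and discarding $\ket{1}$'s, with monotonicity of trace distance under partial trace. Your side remark about the constant is also on point: with $t=2q^2/\epsilon^2-1$ the bound $2q/\sqrt{t+1}$ evaluates to $\sqrt{2}\,\epsilon$ rather than $\epsilon$, so strictly one should take $t=4q^2/\epsilon^2-1$ (a harmless constant-factor slip in the stated parameters).
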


Lastly, we recall the definition of the QPSPACE oracle from~\cite{CCS24}.

\begin{definition}[QPSPACE Oracle~\cite{CCS24}]\label{def:qpspace-machine-oracle}
    We define the \emph{QPSPACE machine oracle}, which we denote by $\qpspace$, as follows. The oracle takes as input an $\ell$-qubit state $\rho$, a description of a classical Turing machine $M$ and an integer $t\in\NN$. The oracle runs $M$ for $t$ steps to obtain the description of a quantum circuit $C$ that operates on exactly $\ell$ qubits. If $M$ does not terminate after $t$ steps, or if the output circuit $C$ does not operate on $\ell$ qubits, the oracle returns $\bot$. Otherwise, the oracle applies $C$ on $\rho$ and returns the output without measurement.
\end{definition}

We stress that we restrict ourselves to a ``unitary world''. That is, our oracles define unitary operations and, additionally, we always provide any quantum algorithm in our oracle-relative world with access to a unitary \emph{and its inverse}. In our case, however, this is w.l.o.g. since the oracle $O$ is equivalent to its inverse and the inverse of $\qpspace$ can be simulated by a single query to $\qpspace$.

\begin{restatable}{proposition}{qpspaceinverse}\label{prop:qpspace-inverse}
    There exists a $\qpspace$-aided QPT algorithm which, on any $\qpspace$-input $(\rho,M,t)$ where $\rho$ is an $\ell$-qubit state, $M$ is the description of a Turing machine and $t\in\NN$ (see \cref{def:qpspace-machine-oracle}), outputs $\qpspace^{-1}(\rho,M,t)=C^{-1}(\rho)$, where $C$ is the quantum circuit whose description is output by the machine $M$ after running it for $t$ steps.
\end{restatable}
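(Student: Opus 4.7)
The plan is to reduce inverting $C$ to a single forward query to $\qpspace$, by constructing from $(M, t)$ a new Turing machine $M'$ whose output is the description of $C^{-1}$ rather than $C$. Concretely, on input $(\rho, M, t)$ the QPT algorithm would first classically build $M'$: a machine that hardcodes $(M, t)$, internally simulates $M$ for $t$ steps to obtain a gate sequence $C = g_1 g_2 \cdots g_n$, and then writes the reversed, adjointed sequence $g_n^\dagger g_{n-1}^\dagger \cdots g_1^\dagger$ to its output tape. The algorithm would then query $\qpspace$ on $(\rho, M', t')$ for a time bound $t' = \poly(t, |M|)$ large enough for $M'$ to terminate, and return the result.

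Producing $(M', t')$ from $(M, t)$ is classical polynomial-time work: the description of $M'$ is of size polynomial in $|M|$ and in the encoding length of $t$, and no quantum computation is performed by the reduction itself (all the heavy lifting is offloaded to the oracle). Correctness would follow directly from \Cref{def:qpspace-machine-oracle}: $\qpspace$ on input $(\rho, M', t')$ simulates $M'$ for $t'$ steps, extracts the resulting circuit description (which by construction encodes $C^{-1}$), and applies it to $\rho$ without measurement, yielding exactly $C^{-1}(\rho)$.

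The main subtlety I foresee is ensuring that the adjoint transformation $g_i \mapsto g_i^\dagger$ is syntactically easy to perform uniformly over the gates in the chosen universal set. For standard gate sets such as $\{H, S, T, \mathsf{CNOT}\}$, each adjoint is either the gate itself or a short constant-sized circuit in the same set (for instance $T^\dagger = S^3 T$ up to global phase), so the reversal step reduces to a local, constant-time rewrite per gate. One should also specify how $M'$ behaves when $M$ fails to terminate within $t$ steps or fails to output a valid $\ell$-qubit circuit; the natural choice is to have $M'$ propagate the failure by emitting an invalid output, ensuring that $\qpspace(\rho, M', t') = \bot$ in exactly the same cases where $\qpspace(\rho, M, t) = \bot$. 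Beyond these bookkeeping points, the argument is entirely mechanical.
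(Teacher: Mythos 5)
Your proposal is correct and is the standard (and essentially only natural) way to prove this: hardwire $(M,t)$ into a new machine $M'$ that simulates $M$, reverses the resulting gate sequence while replacing each gate by its adjoint, and make a single forward query to $\qpspace$ on $(\rho,M',t')$ for a polynomially larger time bound. This matches the paper's argument, and your bookkeeping remarks (adjoints within the universal gate set, propagation of $\bot$, and the fact that constructing $(M',t')$ is purely classical polynomial-time work) cover the only points that need care.
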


\submitorfull{A proof for \Cref{prop:qpspace-inverse} is given in \Cref{sec:oracles-proof}}{qpspace-inverse}.

\section{Existence of \qefid Pairs}\label{sec:efid}

As a first step towards proving \cref{thrm:efi-exist}, we demonstrate that an exponentially-hard \qefid pair (as per \cref{def:efi}) exists relative to the oracles we defined in \cref{sec:oracles}, namely a random $O\gets \oracle$ and $\qpspace$.

\begin{theorem}[\qefid pairs exist under $(\oracle,\qpspace)$]\label{thrm:efi-exist}
   Let the oracles $\cO,\qpspace$ be as defined in \Cref{def:S-oracle,def:qpspace-machine-oracle}, respectively.
   Then, with probability $1$ over the choice of $O\gets\oracle$, exponentially-hard \qefid pairs, more precisely $(2^{\sec/50},2^{-\sec/50})$-\qefid pairs, exist relative to $(O,\qpspace)$.
\end{theorem}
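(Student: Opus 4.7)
I plan to construct an explicit \qefid pair using the oracle $O$ and establish its security via reflection emulation combined with a statistical analysis of random subset states.

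\textbf{Construction and statistical farness.} The sampler for $D_0(1^\sec)$ outputs a uniform string in $\bin^\sec$ without any oracle queries. The sampler for $D_1(1^\sec)$ prepares $\ket{1}\ket{0^\sec}$, queries $O$ once, discards the control qubit, and measures the second register. Using $\ip{S-}{0^\sec} = -1/\sqrt{2}$, one verifies $U_S\ket{1}\ket{0^\sec} = \ket{1}\ket{S}$, so the output is uniform in $S_\sec$. Since $|S_\sec| = 2^{\sec/2}$, the statistical distance $\SD(D_0, D_1) \ge 1 - 2^{-\sec/2}$ is immediate.

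\textbf{Computational indistinguishability.} Fix an $(O,\qpspace)$-aided adversary $\adv$ of runtime $T = 2^{\sec/50}$. By \cref{prop:reflect-emulate-efid} with $\epsilon_1 := 2^{-\sec/40}$ and $t := 2T^2/\epsilon_1^2 \le 2^{\sec/10}$, I transform $\adv$ into a $\qpspace$-aided algorithm $\badv$ that takes $t$ copies of $\ket{S-}$ in place of calling $O$, at the cost of $\epsilon_1$ in distinguishing advantage. Because $\qpspace$ is independent of $O$, the advantage of $\badv$ averaged over $O \gets \oracle$ is bounded by $\mathsf{TD}(\tau_0, \tau_1)$, where $\tau_b := \EE_S[\sigma_b^S]$ and $\sigma_b^S$ is the joint state on the sample register and the auxiliary register holding $\ket{S-}^{\otimes t}$ in case $b$. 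Using the permutation symmetry of the distribution over $S$ (on $\bin^\sec\setminus\{0^\sec\}$), this trace distance reduces, up to an $O(2^{-\sec/2})$ term, to bounding the trace distance between $\mu_+ := \EE_S[\ketbra{S-}{S-}^{\otimes t} \mid z \in S]$ and $\mu_- := \EE_S[\ketbra{S-}{S-}^{\otimes t} \mid z \notin S]$ for a fixed $z \ne 0^\sec$. Since $t \ll |S| = 2^{\sec/2}$, I argue $\mathsf{TD}(\mu_+, \mu_-)$ is exponentially small: classically, the collision probability between a fresh sample and $t$ measured copies of $\ket{S-}$ is only $O(t/|S|)$, and the quantum refinement follows by analyzing the Gram matrix of random subset states.

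\textbf{From expected to probability-one.} To upgrade the expected-advantage bound into a probability-one statement over $O$, I would discretize the $T$-time quantum adversary class via a universal gate set, apply Markov's inequality to each fixed adversary to obtain a tail bound on the set of bad oracles, union-bound over the discretized class, and finally invoke Borel--Cantelli over $\sec \in \NN$. This yields that almost every $O \gets \oracle$ produces an exponentially-secure \qefid pair relative to $(O,\qpspace)$ for all but finitely many $\sec$, as claimed.

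\textbf{Main obstacle.} The core technical challenge is establishing $\mathsf{TD}(\mu_+, \mu_-) \le 2^{-\omega(\sec)}$ in the regime $t \approx 2^{\sec/10}$, $|S| = 2^{\sec/2}$, and making this bound strong enough to absorb the subsequent union bound over a sub-exponentially large adversary class. The intuition that $t \ll |S|$ copies of $\ket{S-}$ reveal essentially no information about a single element's membership in $S$ is clear, but turning it into a sufficiently tight quantum trace-distance bound requires either a direct combinatorial analysis of random subset states or an appeal to their known closeness to Haar-random states.
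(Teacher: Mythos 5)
Your overall strategy mirrors the paper's: the same construction, reflection emulation to trade oracle access to $O$ for copies of $\ket{S-}$, a statistical trace-distance bound on random subset states, and a Markov/Borel--Cantelli upgrade. Two steps, however, would fail as written. First, your probability-one upgrade is in the wrong order. You propose to discretize the class of time-$2^{\sec/50}$ adversaries at each fixed $\sec$, union-bound over that class, and only then apply Borel--Cantelli over $\sec$. The discretized class has size $2^{\Theta(T\log T)}$ with $T=2^{\sec/50}$, while Markov applied to an expected advantage of $2^{-\Theta(\sec)}$ can only yield a per-adversary tail probability of $2^{-\Theta(\sec)}$; the product diverges wildly, and no concentration inequality is available to sharpen the tail for subset-state oracles (the paper explicitly notes that, unlike Haar-random oracles, these oracles admit no L\'evy-type concentration). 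The paper instead fixes a single adversary, applies Markov and Borel--Cantelli over $\sec$ for that adversary alone, and then takes a countable intersection over the (countably many) adversaries; no union bound within a fixed $\sec$ is ever needed. Second, Markov requires a bound on $\Expct_O\bigl[\,|\cdot|\,\bigr]$, the expectation of the \emph{absolute} advantage, whereas your trace-distance argument on the averaged states $\tau_0,\tau_1$ only bounds the absolute value of the \emph{expected signed} advantage. The paper bridges this with a Yao-style predictor reduction (its absolute-gap-to-positive-gap lemma), at the cost of squaring the advantage; your write-up silently conflates the two quantities.

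Separately, the statistical core that you correctly flag as the main obstacle is left unproven, and the target you set for it is unattainable: the trace distance in question cannot be $2^{-\omega(\sec)}$. The paper's statistical lemma gives $2^{-\sec/2+1}+\sqrt{t\cdot 2^{-\sec/2}}$, which for $t\approx 2^{\sec/10}$ is about $2^{-\sec/5}$ --- exponentially small and sufficient, but it degrades with $t$ and must be tracked quantitatively so that it survives the squaring loss from the absolute-to-positive conversion and the subsequent Markov step. Aiming at a false strengthening of the lemma, and leaving its proof as an open "challenge," means the central quantitative content of the theorem is missing from your proposal.
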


We start by describing our oracle-relative candidate \qefid pair.

\begin{construction}[\qefid Pair under $\oracle$]\label{con:efi}
    We construct an $\oracle$-aided \qefid pair $(D_0^\cO,D_1^\cO)$ which, given $O\in\oracle$, acts as follows:
    \begin{itemize}
        \item []$D_0^O(1^\sec)$:
        Query $O$ with $|1\rangle|0^{\secp}\rangle$ to get $\ket{1}|S_{\secp}\rangle$. %\mor{$\ket{S_\secp}$?}
        Measure the second register in the computational basis
        to obtain a $\secp$-bit string $s\in S_\secp$. Output it.
        \item[]$D_1^O(1^\sec)$:
        Output a uniformly random $u\gets\bit^\secp$.
    \end{itemize}
\end{construction}

It is immediate, by construction, that the \qefid pair from \cref{con:efi} is efficiently samplable for any $O\in\oracle$. It remains, then, to show that it satisfies statistical farness and computational indistinguishability relative to $(O,\qpspace)$ (see \cref{def:efi}). Let us start with the former.

\begin{restatable}[Statistical Farness]{proposition}{statisticalfarness}
    \label{prop:statistical-farness}
    For any fixed $O\in\cO$, the distributions $D_0^O(1^\secp)$ and $D_1^O(1^\secp)$ defined in \Cref{con:efi}, are statistically far.
\end{restatable}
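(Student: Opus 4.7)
The plan is to establish the statistical farness by direct computation. First I would confirm that $D_0^O(1^\secp)$ really does produce the uniform distribution over $S_\secp$. By \Cref{def:s-unitary}, $U_{S_\secp} = I - 2\ket{1}\proj{S_\secp-}\bra{1}$, and since $\ip{S_\secp-}{0^\secp} = \tfrac{1}{\sqrt{2}}(\ip{S_\secp}{0^\secp}-\ip{0^\secp}{0^\secp}) = -\tfrac{1}{\sqrt{2}}$ (using $0^\secp \notin S_\secp$), a direct calculation gives $U_{S_\secp}\ket{1}\ket{0^\secp} = \ket{1}\ket{S_\secp}$. Measuring the second register in the computational basis thus yields a uniformly random $s \in S_\secp$, so $D_0^O(1^\secp)$ is the uniform distribution over $S_\secp$. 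In contrast, $D_1^O(1^\secp)$ is uniform over $\bit^\secp$.

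Next, I would just compute the statistical distance. Writing $p_0, p_1$ for the probability mass functions of $D_0^O(1^\secp)$ and $D_1^O(1^\secp)$, we have $p_0(x) = 2^{-\secp/2}$ for $x \in S_\secp$ and $p_0(x) = 0$ otherwise, while $p_1(x) = 2^{-\secp}$ for all $x \in \bit^\secp$. Splitting the sum over $x$ according to whether $x \in S_\secp$:
\begin{align*}
    \SD(D_0^O(1^\secp), D_1^O(1^\secp))
    &= \tfrac{1}{2}\sum_{x \in \bit^\secp} |p_0(x)-p_1(x)| \\
    &= \tfrac{1}{2}\left[ |S_\secp|\bigl(2^{-\secp/2} - 2^{-\secp}\bigr) + (2^\secp - |S_\secp|)\cdot 2^{-\secp}\right].
\end{align*}
Substituting $|S_\secp| = 2^{\secp/2}$ and simplifying yields $\SD(D_0^O(1^\secp), D_1^O(1^\secp)) = 1 - 2^{-\secp/2}$, which is at least $1/2$ for all $\secp \geq 2$, and in particular is bounded below by an inverse polynomial as required by \Cref{def:efi}. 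Since the bound holds for every fixed $O \in \cO$, the proposition follows.

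There is no real obstacle here: the only subtlety is checking that the oracle query actually prepares $\ket{1}\ket{S_\secp}$ (which requires tracking the overall sign carefully and using $0^\secp \notin S_\secp$), and the rest is a one-line statistical distance calculation.
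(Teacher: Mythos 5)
Your proof is correct and follows what is almost certainly the paper's own approach: verify that $D_0^O(1^\secp)$ is uniform on $S_\secp$ (using the explicit action of $U_{S_\secp}$ on $\ket{1}\ket{0^\secp}$, which the paper already notes after \Cref{def:s-unitary}), then compute the statistical distance to uniform directly, obtaining $1-2^{-\secp/2}$. Both the sign bookkeeping and the final arithmetic check out, and the bound is well above the inverse-polynomial threshold required by \Cref{def:efi}.
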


\submitorfull{A proof of \Cref{prop:statistical-farness} is given in \Cref{sec:efid-proofs}.}{statistical-farness} Next, we argue computational indistinguishability of our construction through the following lemma.

\begin{lemma}[Computational Indistinguishability]\label{lem:computational-indistinguishability-EFID-pair}
    With probability 1 over the choice of $O\gets\oracle$, for any non-uniform $(O,\qpspace)$-aided adversary $\adv$, that on any security parameter $\sec$ makes at most $O(2^{\sec/50})$ calls to the oracle $O$ (and is otherwise unbounded), it holds that
    \[
        \left|\Pr_{z\gets D_0(1^\secp)}[ \adv^{O,\qpspace}(1^\secp,z)=1]-\Pr_{z\gets D_1(1^\secp)}[ \adv^{O,\qpspace}(1^\secp,z)=1]\right|= O(2^{-\sec/50}),
    \]
    for any $\sec\in\NN$.
\end{lemma}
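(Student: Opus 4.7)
The plan is to reduce the computational indistinguishability question to a statistical trace-distance bound via reflection emulation, and then to establish the bound through a coupling argument on random subsets.

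First, I fix a non-uniform $(O,\qpspace)$-aided adversary $\adv$ that makes $q = O(2^{\sec/50})$ queries to $O$ on security parameter $\sec$ (queries at other security parameters are independent of $S_\sec$ and handled by the same argument via \Cref{cor:inductive-emulation}). By \Cref{prop:reflect-emulate-efid}, I simulate those queries using $t$ copies of $\ket{S_\sec-}$, incurring emulation error $\epsilon_{\mathrm{em}} = 2q/\sqrt{t+1}$. The resulting adversary $\badv$ accesses only $\qpspace$, which is independent of $S_\sec$, and takes input $\ket{S_\sec-}^{\otimes t} \otimes \ket{z}$, so its action is captured by a fixed POVM element $M$. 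Consequently, its distinguishing advantage averaged over $O \gets \oracle$ is bounded by $\TD(\rho_0^{\mathrm{avg}},\rho_1^{\mathrm{avg}})$, where $\rho_0^{\mathrm{avg}} = \Expct_S[\ketbra{S-}{S-}^{\otimes t}\otimes \Pi_S/K]$ and $\rho_1^{\mathrm{avg}} = \Expct_S[\ketbra{S-}{S-}^{\otimes t}] \otimes I/N$, with $K = 2^{\sec/2}$, $N = 2^\sec$, and $\Pi_S = \sum_{s \in S}\ketbra{s}{s}$.

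Next, I would bound this trace distance by coupling. The key observation is that when $S,S' \subseteq \bin^\sec\setminus\{0^\sec\}$ differ by one element swap, $\braket{S-}{S'-} = 1 - 1/(2K)$, so $\TD(\ketbra{S-}{S-}^{\otimes t}, \ketbra{S'-}{S'-}^{\otimes t}) = O(\sqrt{t/K})$. Using this together with the swap coupling $(S,s)\mapsto(S',s)$ where $s\in S$, $s\notin S'$, and $S' = (S\setminus\{s\})\cup\{z\}$ for uniform $z \notin S$, I would derive the conditional-state bound $\TD(\Expct_{S\ni s}[\ketbra{S-}{S-}^{\otimes t}], \Expct_{S\not\ni s}[\ketbra{S-}{S-}^{\otimes t}]) = O(\sqrt{t/K})$. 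A short calculation that expands $\rho_0^{\mathrm{avg}}-\rho_1^{\mathrm{avg}}$ over the sample register, writes $\Expct_S[\sigma_S]$ as a convex combination of the two conditional states (with $\sigma_S := \ketbra{S-}{S-}^{\otimes t}$), and carefully accounts for the distinction between $\bin^\sec$ (where $z$ may be $0^\sec$) and $\bin^\sec \setminus \{0^\sec\}$ (where $s$ and the elements of $S$ live) then yields $\TD(\rho_0^{\mathrm{avg}},\rho_1^{\mathrm{avg}}) = O(\sqrt{t/K} + 1/N)$. Balancing by setting $t = \Theta(q\sqrt{K})$ makes the total averaged advantage $O(\sqrt{q/\sqrt{K}}) = O(2^{-23\sec/200})$, comfortably smaller than the target $2^{-\sec/50}$.

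Finally, to upgrade the expectation bound to a with-probability-$1$ statement over $O$, I apply Markov's inequality to obtain $\Pr_O[\mathrm{adv}(\adv,\sec,O) > 2^{-\sec/50}] = 2^{-\Omega(\sec)}$ for each fixed $\adv$, union-bound over a countable family of adversary descriptions (handling polynomial-length non-uniform advice by absorbing it into the description with suitable loss in the Markov threshold), and invoke Borel-Cantelli over $\sec$. The main technical obstacle I anticipate is the coupling calculation in the second step: the bookkeeping around the excluded element $0^\sec$ and the correct identification of the conditional distributions $\Expct_{S\ni s}[\sigma_S]$, $\Expct_{S\not\ni s}[\sigma_S]$ must be handled with care, as a naive application of the triangle inequality would leave us with the trivial bound $\TD \le 2$ arising from the large pointwise statistical distance $\TD(\Pi_S/K, I/N) = 1 - K/N$.
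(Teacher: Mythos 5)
Your overall architecture matches the paper's: reflection emulation to replace $O$-queries by copies of $\ket{S_\sec-}$, a statistical trace-distance bound between the two averaged ensembles (your $\rho_0^{\mathrm{avg}}$ vs.\ $\rho_1^{\mathrm{avg}}$ is exactly the pair in the paper's \Cref{lem:EFID-trace-distance}, and your coupling argument yields a bound of the same order, $O(\sqrt{t/K})$ plus lower-order terms), followed by Markov, a countable union bound, and Borel--Cantelli. The coupling computation itself ($\braket{S-}{S'-}=1-1/(2K)$ for a one-element swap, joint convexity over the coupling, and the decomposition of $\Expct_S[\sigma_S]$ into the two conditional states) is sound.

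However, there is a genuine gap at the transition from the trace-distance bound to Markov's inequality. What the bound $\left|\Tr(M\rho_0^{\mathrm{avg}})-\Tr(M\rho_1^{\mathrm{avg}})\right|\le\TD(\rho_0^{\mathrm{avg}},\rho_1^{\mathrm{avg}})$ controls is the \emph{signed} advantage averaged over the oracle, i.e.\ $\bigl|\EE_{O}[\Delta(O)]\bigr|$ where $\Delta(O)=\Pr_{z\gets D_0^O}[\badv(z)=1]-\Pr_{z\gets D_1^O}[\badv(z)=1]$. Markov's inequality in your last step requires a bound on $\EE_{O}\bigl[|\Delta(O)|\bigr]$, and the two are not interchangeable: the sign of $\Delta(O)$ may depend on $O$, and cancellations across oracles can make $\EE_O[\Delta(O)]$ tiny while $\EE_O[|\Delta(O)|]$ is constant (a per-oracle bound is unavailable here precisely because, for fixed $S$, $\TD(\Pi_S/K,I/N)\approx 1$, as you note yourself). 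The paper flags exactly this issue and resolves it with a Yao-style predictor reduction (\Cref{lem:abs-to-pos}): one builds a new distinguisher $\badv'$ that samples a fresh challenge bit $c$ and a fresh sample from $D_c^O$ to effectively learn the sign of $\Delta(O)$, whose \emph{positive} averaged gap equals $\EE_O[\Delta(O)^2]$; Cauchy--Schwarz then gives $\EE_O[|\Delta(O)|]\le\sqrt{\EE_O[\Delta(O)^2]}$ at the cost of a quadratic loss (and one extra oracle call, which must be accounted for in the query budget). Your parameters survive this square-root loss ($2^{-23\sec/400}$ is still well below the $2^{-\sec/50}$ Markov threshold after the union bound), so the proof is repairable, but as written the step from the averaged trace distance to $\Pr_O[\mathrm{adv}(\adv,\sec,O)>2^{-\sec/50}]=2^{-\Omega(\sec)}$ does not follow.
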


Note that we show computational indistinguishability even against adversaries that are unbounded in their runtime or queries to the $\qpspace$ oracle. While this is stronger than what we need for the oracle separation between OWSGs and \qefid pairs, 
it will become useful for our separation concerning private-key quantum money schemes (\Cref{thm:efid-vs-money}).

The proof of \cref{thrm:efi-exist} follows immediately by combining \Cref{prop:statistical-farness,lem:computational-indistinguishability-EFID-pair}. In what comes next, we prove \cref{lem:computational-indistinguishability-EFID-pair}.

By reflection emulation (see \cref{prop:reflect-emulate-efid}), we may emulate any $(O,\qpspace)$-aided adversary by an adversary that does not have access to the oracle yet requires copies of the state $\ket{S_\sec-}$, where $S_{\sec}$ is the random subset underlying $O$. This reduces our task to showing computational indistinguishability against such adversaries. By replacing access to the oracle $O$ with copies of the state $\ket{S_\sec-}$, our goal becomes showing that it is hard to break our \qefid given such copies. At the core of such an argument, then, is the following statistical lemma.

\begin{restatable}{lemma}{statisticallemma}
    \label{lem:EFID-trace-distance}
    Let $\sec\in\NN$ and let $S,W\subset \bin^\sec\setminus\{0^\sec\}$ be two uniformly random subsets of size $2^{\frac{\secp}{2}}$, $s\gets S$, and $u\gets\bin^\sec$. Then, it holds that
    \begin{equation*}
    %\Expct_{S,W,s,u}\left[
    \TD\left[\left\{\ket{S-}^{\otimes t}\otimes\ket{s}\right\}_{S,s},\left\{\ket{W-}^{\otimes t}\otimes \ket{u}\right\}_{W,u}\right]<2^{-\frac{\secp}{2}+1} +\sqrt{t\cdot 2^{-\frac{\secp}{2}}}.
    \end{equation*}
    %\mor{Do you mean $\mathbb{E}\TD$? What do you mean by one specific ineuqlity is satisfied over uniform random $u$, etc.?}
\end{restatable}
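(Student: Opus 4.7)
The plan is to decompose the trace distance into a classical contribution coming from the marginals on the last $\sec$-bit register and a quantum contribution coming from the conditional distributions of the $t$-copy $\ket{S-}$ states. The classical part is trivially $O(2^{-\sec})$; the heart of the argument is bounding the quantum part, which I handle via an explicit coupling between subset distributions.

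Let $k = 2^{\sec/2}$ and set
\[
  \rho_0 \;:=\; \Expct_{S,\,s\gets S}\bigl[\proj{S-}^{\otimes t}\otimes\proj{s}\bigr], \qquad \rho_1 \;:=\; \Expct_W\bigl[\proj{W-}^{\otimes t}\bigr]\otimes\tfrac{I}{2^\sec},
\]
where $S,W$ are uniform size-$k$ subsets of $\{0,1\}^\sec\setminus\{0^\sec\}$. Both states are block-diagonal in the computational basis of the classical register. Let $p_0, p_1$ denote the corresponding marginals and $\rho_0^{(s)}, \rho_1^{(s)}$ the conditional quantum states. By construction of $\rho_1$, we have $\rho_1^{(s)} = \tau := \Expct_W[\proj{W-}^{\otimes t}]$ independent of $s$, while $\rho_0^{(s)} = \Expct_{S\sim p_{S|s}}[\proj{S-}^{\otimes t}]$ with $p_{S|s}$ uniform over size-$k$ subsets that contain $s$. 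A block-diagonal triangle inequality then gives
\[
  \|\rho_0 - \rho_1\|_1 \;\le\; 2\,\SD(p_0,p_1) + \Expct_{s\sim p_0}\!\bigl[\|\rho_0^{(s)} - \tau\|_1\bigr],
\]
and a direct computation yields $\SD(p_0,p_1) = 1/2^\sec$, so the first term is already $O(2^{-\sec})$.

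For the quantum term I introduce the following coupling between $p_S$ and $p_{S|s}$: draw $S \sim p_S$; if $s\in S$, set $S' := S$; otherwise, pick a uniformly random $r\in S$ and set $S' := (S\setminus\{r\})\cup\{s\}$. A short combinatorial verification confirms $S' \sim p_{S|s}$, and by construction $|S\cap S'|\ge k-1$ in all cases. Since neither $S$ nor $S'$ contains $0^\sec$ and $s\neq 0^\sec$, one computes $\ip{S-}{S'-} = \tfrac{1}{2}(|S\cap S'|/k + 1) \ge 1 - 1/(2k)$. Bernoulli's inequality then gives $|\ip{S-}{S'-}|^{2t}\ge 1 - t/k$. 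Using the identity $\TD(\ket{\psi}^{\otimes t}, \ket{\phi}^{\otimes t}) = \sqrt{1 - |\ip{\psi}{\phi}|^{2t}}$ for pure states, we obtain $\TD(\ket{S-}^{\otimes t}, \ket{S'-}^{\otimes t}) \le \sqrt{t/k}$ pointwise under the coupling. Joint convexity of the trace distance then yields $\TD(\rho_0^{(s)}, \tau) \le \sqrt{t/k} = \sqrt{t\cdot 2^{-\sec/2}}$ for every $s$. Combining, $\TD(\rho_0,\rho_1) \le 2^{-\sec} + \sqrt{t\cdot 2^{-\sec/2}}$, which lies within the lemma's bound $2^{-\sec/2+1} + \sqrt{t\cdot 2^{-\sec/2}}$.

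The main obstacle I anticipate is the correctness of the coupling, i.e.\ verifying that the sampled $S'$ is genuinely $p_{S|s}$-distributed: this is a short but error-prone combinatorial check, which crucially relies on the symmetric way $r$ is chosen inside $S$. Once the coupling is established, the remaining ingredients—the inner-product calculation for $\ket{S-}$, Bernoulli's inequality, the trace-distance/fidelity identity for pure states, and joint convexity of the trace distance—are all entirely standard.
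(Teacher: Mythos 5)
Your proof is correct and rests on the same core idea as the paper's: conditioning on $s\in S$ perturbs $S$ by at most one element, so $\ip{S-}{S'-}\geq 1-\tfrac{1}{2}\cdot 2^{-\secp/2}$ and the $t$-fold overlap is at least $1-t\cdot 2^{-\secp/2}$, which is exactly the source of the $\sqrt{t\cdot 2^{-\secp/2}}$ term in the stated bound. Your coupling does check out (for any target $T\ni s$ one gets $\Pr[S'=T]=\binom{N}{k}^{-1}\bigl(1+\tfrac{N-k}{k}\bigr)=\binom{N-1}{k-1}^{-1}$ with $N=2^\secp-1$, $k=2^{\secp/2}$), and your classical term $2^{-\secp}$ is in fact slightly sharper than the $2^{-\secp/2+1}$ appearing in the lemma, so your final bound is within the claim.
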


\submitorfull{We prove \Cref{lem:EFID-trace-distance} in \Cref{sec:efid-proofs}.}{statistical-lemma} Putting \cref{prop:reflect-emulate-efid,lem:EFID-trace-distance} together, we obtain the following proposition\ifnum\submission=1
, which we also prove in \Cref{sec:efid-proofs}.
\else
.
\fi

\begin{restatable}{proposition}{efidemulator}
    \label{prop:emulator-step-EFID}
    For any $(\cO,\qpspace)$-aided quantum algorithm $\adv$ that makes at most $O(2^{\sec/50})$ queries to $\cO$, it holds that
    \[
        \EE_{O\gets \cO}\left[\Pr_{s\gets D^O_0(1^\secp)}[\adv^{O,\qpspace}(1^\secp,s)=1]-\Pr_{s\gets D^O_1(1^\secp)}[\adv^{O, \qpspace}(1^\secp,s)=1]\right]\leq 2^{-{2\secp}/{25}}.
    \]
\end{restatable}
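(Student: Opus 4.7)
The plan is to combine \Cref{prop:reflect-emulate-efid} (reflection emulation) with \Cref{lem:EFID-trace-distance} (statistical indistinguishability of subset-state copies together with a sample) via a short hybrid argument. The high-level idea is: first emulate the oracle $O$ by supplying the adversary with copies of the state $\ket{S_\sec-}$, thereby obtaining an oracle-independent algorithm $\cB^{\qpspace}$; then compare the two cases ($s \gets D_0^O$ vs $u \gets D_1^O$) by appealing to the trace-distance bound of \Cref{lem:EFID-trace-distance} on the joint state of the copies and the challenge string.

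\textbf{Step 1 (Emulation).} Fix an adversary $\cA^{O,\qpspace}$ making $q = O(2^{\sec/50})$ queries to $O$. Set an emulation target error $\epsilon := 2^{-2\sec/25}/4$ and let $t := 2q^2/\epsilon^2$ be the number of copies required by \Cref{prop:reflect-emulate-efid}. Applying that proposition (with the ``extra oracle'' $Q = \qpspace$, and the single reflection oracle being $U_{S_\sec}$, which is the controlled reflection about $\ket{S_\sec-}$), we obtain a $\qpspace$-aided algorithm $\cB$ such that for every fixed $O = O_S \in \cO$ and every $\secp$-bit string $z$,
\[
\bigl|\Pr[\cA^{O_S,\qpspace}(1^\sec, z) = 1] - \Pr[\cB^{\qpspace}(\ket{z} \otimes \ket{S_\sec-}^{\otimes t}) = 1]\bigr| \le \epsilon.
\]
Averaging over $O \gets \cO$ and over the challenge sampled from $D_b^O$ for $b \in \{0,1\}$ shows that each of the two probabilities appearing in the proposition is $\epsilon$-close to the corresponding quantity with $\cA^{O,\qpspace}(1^\sec, z)$ replaced by $\cB^{\qpspace}(\ket{z} \otimes \ket{S_\sec-}^{\otimes t})$.

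\textbf{Step 2 (Statistical reduction).} It remains to bound
\[
\Delta := \Bigl|\EE_{O \gets \cO} \EE_{s \gets D_0^O(1^\sec)}[\cB^{\qpspace}(\ket{s} \otimes \ket{S_\sec-}^{\otimes t}) = 1] - \EE_{O \gets \cO} \EE_{u \gets D_1^O(1^\sec)}[\cB^{\qpspace}(\ket{u} \otimes \ket{S_\sec-}^{\otimes t}) = 1]\Bigr|.
\]
By definition, $(S_\sec, s)$ in the first term is distributed as (random subset of size $2^{\sec/2}$, uniform element thereof); in the second term, $u$ is uniform on $\{0,1\}^\sec$ and independent of $S_\sec$, which we may rename as an independent fresh random subset $W$ without affecting the average (the marginal law of $\ket{S_\sec-}^{\otimes t}$ is preserved). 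Thus $\Delta$ is the difference in acceptance probability of a single algorithm $\cB^{\qpspace}$ applied to the two mixed states $\rho_0 = \EE_{S,s}\bigl[\ket{S-}^{\otimes t}\bra{S-}^{\otimes t} \otimes \proj{s}\bigr]$ and $\rho_1 = \EE_{W,u}\bigl[\ket{W-}^{\otimes t}\bra{W-}^{\otimes t} \otimes \proj{u}\bigr]$. By the standard fact that such a difference is bounded by $\TD(\rho_0,\rho_1)$, and invoking \Cref{lem:EFID-trace-distance}, we get $\Delta \le 2^{-\sec/2+1} + \sqrt{t \cdot 2^{-\sec/2}}$.

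\textbf{Step 3 (Parameters).} Combining Steps 1 and 2 via the triangle inequality, the total error is at most $2\epsilon + 2^{-\sec/2+1} + \sqrt{t\cdot 2^{-\sec/2}}$. With $\epsilon = \Theta(2^{-2\sec/25})$ and $q = O(2^{\sec/50})$, we have $t = O(2^{\sec/5})$, so $\sqrt{t\cdot 2^{-\sec/2}} = O(2^{-3\sec/20})$, which is asymptotically smaller than $2^{-2\sec/25}$. All three error terms are therefore dominated by (a suitable constant times) $2^{-2\sec/25}$, yielding the claimed bound after absorbing constants into the choice of $\epsilon$. The only genuinely delicate step is the parameter bookkeeping: we must choose $\epsilon$ small enough so that $2\epsilon$ fits under the target $2^{-2\sec/25}$, while keeping $t = 2q^2/\epsilon^2$ small enough that the statistical lemma still beats that target; the calculation above shows that both constraints are comfortably satisfied, which is the main obstacle — routine but requiring care.
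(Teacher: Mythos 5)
Your proposal is correct and follows essentially the same route as the paper, which obtains this proposition precisely by combining the reflection-emulation statement (\Cref{prop:reflect-emulate-efid}) with the statistical trace-distance bound (\Cref{lem:EFID-trace-distance}) and then checking that the parameters $q=O(2^{\sec/50})$, $\epsilon=\Theta(2^{-2\sec/25})$, $t=2q^2/\epsilon^2=O(2^{\sec/5})$ make all error terms fit under $2^{-2\sec/25}$. The only cosmetic simplification is that you treat a single query length $\sec$, whereas the adversary may query $O$ at several lengths; this is handled by \Cref{prop:reflect-emulate-efid} with $m>1$ and does not affect the argument, since the subsets at lengths other than $\sec$ are independent of the challenge.
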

%\mor{$\adv$ also gets $1^\secp$ as input?}\amit{Can't the $\adv$ derive the security parameter from the sample size?}\mor{If the output size of $D$ is $\log(\secp)$ and $\cA$ does not get $1^\secp$ as input, $\cA$ runs in polynomial time means in time $\poly(\log(\secp))$, not $\poly(\secp)$.}\amit{Okay understood I fixed it}

\submitorfull{}{efid-emulator}

While it might seem like \cref{prop:emulator-step-EFID} implies computational indistinguishability of our \qefid construction for a random oracle $O\gets\oracle$, this is not immediately the case since we have bounded the positive gap between the probability that the distinguisher's output is 1 at input from $D_0$ compared to an input from $D_1$. Indistinguishability, however, requires bounding the absolute value between the two probabilities.

It is a well-known fact that such a ``positive-gap'' indistinguishability implies standard absolute-value indistinguishability~\cite{Yao82,BG11} (with some loss in advantage). This has been further extended to the case of a quantum oracle-aided distinguisher (where the gap is guaranteed in expectation over a random oracle, just as we require).

\begin{restatable}[Absolute-Gap to Positive-Gap Distinguisher]{lemma}{abstopos}
\label{lem:abs-to-pos}
    Let $\oracle$ be a distribution over oracles. Let $D^\oracle_0$ and $D^\oracle_1$ 
    be two $\oracle$-aided distributions over $\bin^\sec$ with corresponding sampling algorithms. Let $\adv$ be an $\oracle$-aided quantum algorithm such that
    \[
       \Expct_{O\gets\oracle}\left[\ \left| \Pr_{x\gets D_0}[\adv(x)=1] - \Pr_{x\gets D_1}[\adv(x)=1]\right|\ \right] = \delta.
    \]
    % \mor{
    % \[
    %    \Expct_{O\gets\oracle}\left[\ \left| \Pr_{x\gets D_0^O(1^\secp)}[\adv^O(x)=1] 
    %    - \Pr_{x\gets D_1^O(1^\secp)}[\adv^O(x)=1]\right|\ \right] = \delta.
    % \]
    % ?}
    
    Then, there exists an $\oracle$-aided quantum algorithm $\badv$ such that
    \[
        \Expct_{O\gets\oracle}\left[\Pr_{x\gets D_0}[\badv(x)=1] - \Pr_{x\gets D_1}[\badv(x)=1]\right] \geq \delta^2.
    \]
    %\mor{Similarly for this equation.}
    The runtime and query complexity of $\badv$ is twice that of $\adv$ in addition to that of the sampling algorithms of $D_0$ and $D_1$.
\end{restatable}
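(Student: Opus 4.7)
The plan is to apply the standard XOR-based ``absolute-gap to positive-gap'' trick, adapted to the oracle-relative setting (cf.~\cite{BG11}). On input $x$, the distinguisher $\badv^O$ will sample a uniformly random bit $b \gets \{0,1\}$ and a fresh $y \gets D_b^O$ using the assumed samplers, run $\adv^O$ on $x$ and on $y$ to obtain bits $a_1, a_2 \in \{0,1\}$, and output $a_1 \oplus a_2 \oplus b \oplus 1$. This requires only two invocations of $\adv^O$, one call to each of the samplers for $D_0^O$ and $D_1^O$, and $O(1)$ additional classical post-processing, matching the claimed complexity bound.

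The first step is to fix an oracle $O$ and analyze the signed gap of $\badv^O$ in terms of the pointwise acceptance probabilities $p_b^O := \Pr_{x \gets D_b^O}[\adv^O(x)=1]$. By expanding the output rule into the two cases $b=0$ (where the output is $1$ iff $a_1=a_2$, and $y$ is drawn from $D_0^O$) and $b=1$ (where the output is $1$ iff $a_1 \neq a_2$, and $y$ is drawn from $D_1^O$), one obtains, for $x \gets D_0^O$, the value $\frac{1}{2}\bigl((p_0^O)^2 + (1-p_0^O)^2\bigr) + \frac{1}{2}\bigl(p_0^O(1-p_1^O) + (1-p_0^O)p_1^O\bigr)$, with a symmetric expression for $x \gets D_1^O$. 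A short algebraic simplification, grouping terms in each bracket as a multiple of $(p_0^O - p_1^O)$, shows that the difference between these two probabilities equals exactly $(p_0^O - p_1^O)^2$.

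The second step is to take expectation over $O \gets \oracle$. Since the oracle-wise signed gap is $(p_0^O - p_1^O)^2 \geq 0$, we obtain $\Expct_{O \gets \oracle}\bigl[\Pr_{x \gets D_0}[\badv(x)=1] - \Pr_{x \gets D_1}[\badv(x)=1]\bigr] = \Expct_{O}\bigl[(p_0^O - p_1^O)^2\bigr]$, and Jensen's inequality applied to the convex function $t \mapsto t^2$ lower-bounds this by $\bigl(\Expct_O |p_0^O - p_1^O|\bigr)^2 = \delta^2$, which is exactly the desired conclusion.

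I do not anticipate any serious obstacle here: the argument is a direct adaptation of the classical Yao-style trick to a distinguisher with black-box access to a quantum oracle, and the combinatorics is carried out oracle-pointwise so the quantum nature of $\adv$ plays no role beyond the fact that $\badv$ invokes $\adv$ as a subroutine. The only care points are (i) picking the final parity bit in the output rule so that the oracle-wise gap is $+(p_0^O - p_1^O)^2$ rather than its negation, and (ii) verifying the complexity bookkeeping so that the two invocations of $\adv^O$ and one call to each sampler indeed match the claimed runtime and query-complexity bound.
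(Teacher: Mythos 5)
Your proof is correct and follows essentially the same route as the paper's: Yao's XOR trick to build $\badv$, an oracle-pointwise computation showing the signed gap equals $(p_0^O-p_1^O)^2$, and Jensen/Cauchy--Schwarz to pass to the expectation over $O\gets\oracle$. Your attention to the final parity bit is warranted: outputting $a_1\oplus a_2\oplus b\oplus 1$ (rather than $c\oplus d\oplus e$ as the paper literally writes) is indeed the choice that makes the oracle-wise gap equal $+(p_0^O-p_1^O)^2$; the paper's write-up arrives at the same conclusion only because a compensating sign slip appears in its evaluation of $\Pr[c\oplus d=0]$.
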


For completeness, we attach a proof to the lemma in \cref{app:abs-to-pos}. As a corollary, we obtain the following.
%\mor{Isn't the following corollary exactly the same as \cref{prop:emulator-step-EFID}?}\amit{The absolute value was missing, so I added.}
\begin{corollary}\label{cor:absolute-value-distinguisher}
    For any $(\cO,\qpspace)$-aided quantum algorithm $\adv$ that makes at most $O(2^{\sec/50})$ queries to $\cO$, it holds that
    \[
        \EE_{O\gets \cO}\left[\left|\Pr_{s\gets D^O_0(1^\secp)}[\adv^{O,\qpspace}(1^\secp,s)=1]-\Pr_{s\gets D^O_1(1^\secp)}[\adv^{O, \qpspace}(1^\secp,s)=1]\right|\right]\leq 2^{-{\secp}/{25}}.
    \]
\end{corollary}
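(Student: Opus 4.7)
The plan is to derive Corollary~\ref{cor:absolute-value-distinguisher} directly by combining Proposition~\ref{prop:emulator-step-EFID} with Lemma~\ref{lem:abs-to-pos}. Proposition~\ref{prop:emulator-step-EFID} bounds the expected \emph{signed} advantage $\EE_O[\Pr[\adv(D_0)=1]-\Pr[\adv(D_1)=1]]$ by $2^{-2\sec/25}$ for any $(\cO,\qpspace)$-aided algorithm making at most $O(2^{\sec/50})$ queries to $\cO$, whereas the corollary asks for a bound on the expected \emph{absolute} advantage. Lemma~\ref{lem:abs-to-pos} is precisely the standard wrapper that converts any algorithm with expected absolute advantage $\delta$ into one with expected signed advantage at least $\delta^2$, at the cost of roughly doubling the query complexity plus the cost of the samplers of $D_0$ and $D_1$.

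Concretely, I would fix an $(\cO,\qpspace)$-aided $\adv$ making $q = O(2^{\sec/50})$ queries to $\cO$, set $\delta := \EE_{O\gets\cO}[|\Pr[\adv(D_0)=1]-\Pr[\adv(D_1)=1]|]$, and apply Lemma~\ref{lem:abs-to-pos}, viewing the joint distribution over $(O,\qpspace)$ as the oracle distribution. This is legitimate because the reduction in Lemma~\ref{lem:abs-to-pos} treats its oracle as a black box, so the fixed $\qpspace$ oracle can be bundled in without affecting the guarantee. The lemma yields an $(\cO,\qpspace)$-aided $\badv$ whose expected signed advantage is at least $\delta^2$. Inspecting Construction~\ref{con:efi}, the sampler for $D_0$ makes a single query to $\cO$ and the sampler for $D_1$ makes none, so $\badv$'s $\cO$-query budget is $2q + O(1) = O(2^{\sec/50})$, which is within the hypothesis of Proposition~\ref{prop:emulator-step-EFID}. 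Applying that proposition to $\badv$ gives $\delta^2 \leq 2^{-2\sec/25}$, and taking square roots yields $\delta \leq 2^{-\sec/25}$, as claimed.

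There is no substantive obstacle: both ingredients have already been established earlier in the paper, and the argument is essentially a one-line syntactic composition of them. The only piece of bookkeeping worth verifying is that the query budget of $\badv$ stays within $O(2^{\sec/50})$, which is immediate once one notes that the samplers from Construction~\ref{con:efi} make only $O(1)$ oracle queries, so the multiplicative constant hidden in the big-$O$ is simply doubled.
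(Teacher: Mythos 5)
Your proposal is correct and matches the paper's argument: the corollary is obtained exactly by composing Lemma~\ref{lem:abs-to-pos} (whose proof is the Yao-style wrapper $\badv$ with signed advantage $\EE_O[(a(O)-b(O))^2]\geq\delta^2$) with Proposition~\ref{prop:emulator-step-EFID} applied to $\badv$, and your bookkeeping of $\badv$'s query budget ($2q$ plus the $O(1)$ queries of the samplers from Construction~\ref{con:efi}, still $O(2^{\sec/50})$) is the only point that needs checking.
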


With \cref{cor:absolute-value-distinguisher} in hand, we proceed to complete the proof of \cref{lem:computational-indistinguishability-EFID-pair}.

Fix an $(\oracle,\qpspace)$-aided quantum adversary $\adv$ that makes at most $O(2^{{\secp}/{50}})$ queries to $\oracle$. By \cref{cor:absolute-value-distinguisher} and Markov inequality, we conclude that
\begin{align*}
\Pr_{O\gets\oracle}\left[\left|\Pr_{s\gets D^O_0(1^\secp)}[\adv^{\cO,\qpspace}(1^\secp,s)=1]
-\Pr_{s\gets D^O_1(1^\secp)}[\adv^{O, \qpspace}(1^\secp,s)=1]\right|\geq 2^{-{\secp}/{50}} \right]\leq2^{-{\secp}/{50}}.
\end{align*}

Since $\sum_\secp2^{-{\secp}/{50}}$ converges, by Borel-Cantelli Lemma we have that, with probability $1$ over the choice of $O\gets\oracle$, it holds that
\begin{align}\label{eq:A-bound}
\left|\Pr_{s\gets D^\cO_0(1^\secp)}[\adv^{\cO,\qpspace}(1^\secp,s)=1]-\Pr_{s\gets D^\cO_1(1^\secp)}[\adv^{\cO, \qpspace}(1^\secp,s)=1]\right|\leq 2^{-{\secp}/{50}},
\end{align}
except for finitely many $\secp\in \mathbb{N}$.
%where $2^{-\frac{\secp}{50}}\coloneqq\sqrt{2^{-\frac{\secp}{25}}}$ is clearly a negligible function of $\secp$.
Since there are only countably many such quantum algorithms $\adv$ that make at most $O(2^{{\secp}/{50}})$ queries to $\cO$, we conclude that with probability $1$ over the oracles 
$(\cO,\qpspace)$, it holds that:
for every quantum algorithm $\adv$ making at most $O(2^{{\secp}/{50}})$ queries to $\cO$, \Cref{eq:A-bound} holds, which completes the proof of \cref{lem:computational-indistinguishability-EFID-pair}.

\section{Impossibility of OWSGs}

In this section, we show that, for any fixed $O\in\cO$,  
OWSGs (as defined in \cref{def:owsg}) do not exist in the presence of the oracles $(O,\qpspace)$. Together with \cref{thrm:efi-exist}, this completes the proof of our first main result from \cref{thm:efid-vs-owsg}, namely, the oracle separation of OWSGs from \qefid pairs.

\begin{theorem}\label{thm:break-OWSG}
For any $O\in\cO$, OWSGs do not exist relative to the oracles $(O, {\bf QPSPACE})$.
\end{theorem}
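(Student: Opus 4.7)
The plan is to construct a uniform $(O,\qpspace)$-aided adversary $\adv$ that, on input polynomially many copies of an OWSG output state $\phi_k = \G(k)$, finds with non-negligible probability a key $k'$ such that the verifier $\V(k',\phi_k)$ accepts. The adversary is built from three ingredients: \emph{reflection emulation} to remove queries to $O$ from the verifier, \emph{phase estimation} applied to $O$ to manufacture the copies of $\ket{S_\secp-}$ needed by the emulation, and \emph{gentle search} run through $\qpspace$ to find a good key.

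Fix an $(O,\qpspace)$-aided OWSG candidate $(\G,\V)$ with security parameter $\kappa$, polynomial key-length $\kappa$, and correctness error negligible. Let $q = q(\kappa)$ be the (polynomial) number of $O$-queries made by $\V$ per invocation, and let $t(\kappa)$ be the number of copies of $\phi_k$ the adversary receives. First, for every relevant security parameter $\secp$ of the oracle $O$, the adversary uses phase estimation for the controlled reflection $O = U_{S_\secp}$ on the initial state $\ket{0^\secp}$ to produce copies of $\ket{S_\secp-}$: since $\ket{0^\secp} = \frac{1}{\sqrt{2}}(\ket{S_\secp+} + \ket{S_\secp-})$ with $\ket{S_\secp\pm}$ being $\pm 1$ eigenvectors of $R_{S_\secp-}$, phase estimation with polynomial precision outputs $\ket{S_\secp-}$ with probability $1/2$ (and otherwise a state flagged as $\ket{S_\secp+}$, which is discarded). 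Repeating a polynomial number of times, the adversary collects $\ell = \ell(\kappa)$ fresh copies of $\ket{S_\secp-}$ for each $\secp$ queried by $\V$, where $\ell$ is chosen sufficiently large (polynomial) to make the reflection-emulation error below, say, $1/(4\kappa)$.

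Second, apply \Cref{cor:inductive-emulation} with auxiliary oracle $Q = \qpspace$ to the circuit $\V^{O,\qpspace}(k,\cdot)$: this yields a $\qpspace$-aided circuit $\V'(k,\phi,\ket{S_{\secp_1}-}^{\otimes \ell}\cdots\ket{S_{\secp_m}-}^{\otimes \ell})$ whose acceptance probability differs from that of $\V^{O,\qpspace}(k,\phi)$ by at most $1/(4\kappa)$, uniformly in $k$ and $\phi$, and which no longer queries $O$. Let $\Pi_k$ be the POVM element implementing $\V'(k,\cdot)$ acting on the challenge register together with the ancilla registers of $\ket{S-}$ copies. Then, given $\phi_k^{\otimes t}$ and the prepared copies of $\ket{S-}$ states, we have a collection of efficiently describable POVMs $\{\Pi_k\}_{k\in\bin^\kappa}$ and an input state $\rho$ such that, by correctness and emulation, $\Tr(\Pi_{k}\rho) \geq 1 - \negl(\kappa) - 1/(4\kappa)$.

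Third, invoke the gentle search procedure of Aaronson (as encapsulated in the shadow-tomography / gentle-search framework) on the POVMs $\{\Pi_k\}_k$ and state $\rho$. As observed in \cite{CCS24}, when the POVMs are implemented by polynomial-size quantum circuits with $\qpspace$ gates, gentle search can itself be executed by a single query (or polynomially many queries) to $\qpspace$, using only polynomially many copies of $\rho$. Its guarantee is to return, with high probability, a key $k^\star$ such that $\Tr(\Pi_{k^\star}\rho) \geq 1 - 1/(2\kappa)$, which by the emulation bound implies $\Pr[\V^{O,\qpspace}(k^\star,\phi_k) = 1] \geq 1 - 1/\kappa$, violating one-wayness. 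Plugging in a polynomial $t(\kappa)$ large enough to supply both gentle search and the reflection emulation, and setting $\ell$ accordingly, gives a uniform polynomial-query $(O,\qpspace)$-aided QPT attacker against every candidate $(\G,\V)$.

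The main obstacle is the bookkeeping of resources: we must ensure that the number of copies of $\ket{S_\secp-}$ produced by phase estimation is large enough for the reflection-emulation bound of \Cref{cor:inductive-emulation} to beat the precision demanded by gentle search on all $2^{\kappa}$ POVMs simultaneously, while the number of copies of $\phi_k$ and the number of $\qpspace$ queries both remain polynomial in $\kappa$. This is handled by setting $\ell = \Theta(q^2 \kappa^2)$ (so emulation error is $O(1/\kappa)$) and by invoking the polylogarithmic sample complexity of gentle search in the key-space size $2^\kappa$, which yields only $\poly(\kappa)$ copies of $\rho$; since each copy of $\rho$ bundles $\ell$ copies of $\ket{S_\secp-}$ per relevant $\secp$, and each such copy costs $O(1)$ queries to $O$ via phase estimation, the total query complexity to $O$ and $\qpspace$ is polynomial.
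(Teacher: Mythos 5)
Your proposal is correct and follows essentially the same route as the paper's proof: reflection emulation (via \Cref{cor:inductive-emulation}/\Cref{prop:reflect-emulate-efid}) to turn $\V^{O,\qpspace}$ into a $\qpspace$-only verifier consuming copies of $\ket{S_\secp-}$, a single-round phase-estimation/projection step using $O$ to manufacture those copies, and $\qpspace$-aided gentle search over the POVMs $\{\Pi_k\}$ to recover an accepting key. The only differences are in parameter choices (you run gentle search at precision $\Theta(1/\kappa)$ where the paper uses $\epsilon=\delta=1/2$ and notes the tighter tuning in a remark), which affects constants but not correctness.
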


\subsection{Gentle Search for QPSPACE-aided POVMs}\label{sec:gentle-search}

%\tamer{if there's anything new here (namely allowing the povms access to qpspace) i'd move it to section 5}\amit{I agree that section 5 will be the right place since these are all the lemmas to show the inexistence of OWSG.}

A key ingredient for breaking any oracle-relative OWSG candidate is the \emph{gentle search} procedure~\cite{Shadow2}. In our context, gentle search allows, given a quantum state and a collection of verification keys, to identify a key under which the state is accepted (by an apriori-fixed verification algorithm).

While various algorithms for standard gentle search exist in the literature~\cite{Shadow2,BW24}, we want to additionally allow the verification algorithm to have access to the $\qpspace$ oracle. To this end, via an observation made in~\cite{CCS24}, we adapt the algorithm from~\cite{Shadow2}, in particular its \emph{OR-tester component}~\cite{HLM17}, to work also when the verification algorithm has polynomially-bounded access to $\qpspace$. Overall, we obtain the following general gentle search algorithm for $\qpspace$-aided POVMs.

\begin{restatable}[Gentle Search via QPSPACE Machine]{lemma}{gentlesearch}
    Let $K$ be a finite set of strings and let $\{\Pi^{\qpspace}_k\}_{k\in K}$ be a family of $\qpspace$-aided binary-valued POVMs, indexed by elements in $K$, each of which makes polynomially many queries to its oracle. %\tamer{how does the query complexity of the POVMs affect the quantities in the statement? doesn't seem to appear anywhere...}\amit{Okay, we only need that the POVMs only query the QPSPACE machine oracle polynomially many times, because the number of queries they make to QPSPACE determine the size of the circuit queried to this QPSPACE machine by $B$. I am not sure if this answers your doubt. Let me know.}\tamer{understood}
    Suppose $\ket{\psi}$ is a state such that there exist a real $c>0$ and a key $k\in K$ for which $\Tr(\Pi^\qpspace_k\ket{\psi}\bra{\psi})\geq c$.
    Let $\epsilon,\delta >0$. Then, there exists a polynomial-time $\qpspace$-{aided} quantum algorithm, which we denote by $\tomography$, that takes as input $\ket{\psi}^{\otimes t}$, for $t\in  O\left((\log^4 |K|\log\log|K|+\log(1/\delta))/{\epsilon^2}\right)$, makes $\log(|K|)$ queries to its oracle and outputs a key $k'\in K$ such that, with probability at least $1-\delta$,
    \ifnum\submission=0
    
    \begin{align*}
    \Tr(\Pi^\qpspace_{k'}\ket{\psi}\bra{\psi})\geq c-\epsilon.
    \end{align*}

    \else
        $\Tr(\Pi^\qpspace_{k'}\ket{\psi}\bra{\psi})\geq c-\epsilon$.
    \fi
    \label{lem:gentle-search}
\end{restatable}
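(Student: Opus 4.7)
The plan is to run a binary search over $K$ built on top of the OR-tester of Harrow, Lin, and Montanaro~\cite{HLM17}, exactly as in the shadow-tomography algorithm of Aaronson~\cite{Shadow2}, but with the additional twist that every sub-routine is compiled into a QPSPACE computation. At each level, I would partition the current set of candidate keys into two halves, use a single OR-tester call to detect whether some key in the left half has $\Tr(\Pi^\qpspace_k |\psi\rangle\langle\psi|) \geq c - \epsilon/2$, and recurse into the half that reports a positive answer (defaulting to the right half if both are negative). After $\lceil \log |K| \rceil$ levels a single key $k'$ survives; the analysis of~\cite{Shadow2} then guarantees $\Tr(\Pi^\qpspace_{k'} |\psi\rangle\langle\psi|) \geq c - \epsilon$ unless one of the OR-tester invocations errs.

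To obtain the copy and query complexities in the statement, I would use the HLM OR-tester to distinguish ``some $\Pi_k$ accepts with probability $\geq c$'' from ``all $\Pi_k$ accept with probability $\leq c-\epsilon/2$'' on up to $|K|$ POVMs with failure probability $\delta' = \delta / \log |K|$, which, by the known guarantees of~\cite{HLM17}, costs $O((\log^3 |K| \log \log |K| + \log(1/\delta))/\epsilon^2)$ copies of $|\psi\rangle$ per invocation. Summing over the $\log|K|$ recursion levels yields the total copy budget $t$ claimed in the lemma, and applying a union bound over the levels yields overall failure probability at most $\delta$.

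The key technical point, already exploited in~\cite{CCS24}, is that the HLM OR-tester is a polynomial-space quantum circuit whose only oracle calls are controlled applications of the $\Pi_k$'s. Since each $\Pi^\qpspace_k$ is itself implementable by a polynomial-size circuit that makes a polynomial number of $\qpspace$ queries, and polynomial-space quantum computations are closed under composition with $\qpspace$ gates, the entire OR-tester at a given level (including the $\qpspace$ queries hidden inside each $\Pi^\qpspace_k$) can be described by a polynomial-time Turing machine that outputs a polynomial-size quantum circuit acting on the batch of copies of $|\psi\rangle$ allocated to that level. Invoking $\qpspace$ on this description executes the whole OR-tester using a \emph{single} oracle call, so the $\log|K|$ recursion levels contribute exactly $\log|K|$ queries to $\qpspace$ in total.

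The main obstacle I expect is verifying that the HLM OR-tester, after all internal applications of $\Pi_k$ have been expanded into their QPSPACE-aided implementations, truly fits within a polynomial-space quantum computation and admits a polynomial-time classical description. This requires inspecting the controlled-POVM steps, the phase-estimation / amplitude-estimation component, and the classical control inside HLM and checking that each of them can be implemented in polynomial space with polynomially many QPSPACE sub-calls; once this is done, the correctness of the binary search and the copy-count accounting reduce to the plain-model analyses of~\cite{Shadow2,HLM17}.
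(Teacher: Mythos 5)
Your overall route is the same as the paper's: descend a binary search tree over $K$, at each level running the Harrow--Lin--Montanaro OR-tester on the surviving half, and use the observation of~\cite{CCS24} that the entire per-level test --- including the $\qpspace$-aided implementations of the $\Pi_k$'s expanded in place --- is a polynomial-space unitary with a polynomial-time classical description, so that each level costs exactly one query to $\qpspace$ and the total is $\log|K|$ queries. That compilation argument, which is the only genuinely new content of the lemma relative to the plain-model gentle search of~\cite{Shadow2}, is correctly identified and justified in your third paragraph.

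There is, however, a step that fails as written: running every level with the \emph{fixed} threshold pair $(c, c-\epsilon/2)$ does not sustain the search invariant. After the first level the surviving half is only guaranteed to contain a key with acceptance probability $\geq c-\epsilon/2$; feeding that weaker promise into an identical test at the next level degrades the guarantee to $c-\epsilon$, and by the third level the promise gap has been consumed entirely, so the OR-tester has nothing to distinguish. The correct scheme (and the one underlying the analysis of~\cite{Shadow2} that you defer to) uses a sliding threshold, distinguishing ``some key accepts with probability $\geq c - (j-1)\epsilon/\lceil\log|K|\rceil$'' from ``all keys accept with probability $\leq c - j\epsilon/\lceil\log|K|\rceil$'' at level $j$; the per-level gap $\epsilon/\log|K|$ is precisely what produces the $\log^2|K|/\epsilon^2$ per-invocation copy cost and hence the $\log^4|K|$ total, so your quoted per-level cost is inconsistent with your own algorithm. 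Relatedly, your union bound with $\delta'=\delta/\log|K|$ yields an additive $\log|K|\cdot\log(\log|K|/\delta)/\epsilon^2$ term rather than the additive $\log(1/\delta)/\epsilon^2$ in the statement; this is harmless in the regimes where the lemma is applied but is not literally the claimed bound for arbitrary $\delta$ and $|K|$.
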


\submitorfull{We provide a detailed proof to \Cref{lem:gentle-search} in \Cref{sec:gentle-search-proof}}{gentle-search}.

\subsection{Proof of \texorpdfstring{\Cref{thm:break-OWSG}}{Theorem 5.1}}

Fix $O\in\oracle$ and let $(\G,\V)$ be a OWSG candidate relative to $(O,\qpspace)$ (see \cref{def:owsg}). We show that the fact that $\V$ is a QPT (in particular, makes a polynomial number of queries to its oracles) necessarily implies an attack against the one-wayness of the candidate.

Our attack is based on the gentle search algorithm as implied by \cref{lem:gentle-search}; Given (many copies of) a state $\phi_{k^*}$ corresponding
to a key $k^*$ and the
collection of POVM elements $\{\Pi_k\}_k$, where $\Pi_k$ corresponds to the event of $\V$ accepting with key $k$,
it is possible with access to the $\qpspace$ oracle to identify a key $k$ such that $(\phi_{k^*},k)$ is accepted by $\V$ with good probability. 
%This, then, gives a clear path for an attack: apply shadow tomography w.r.t the collection of POVMs corresponding to $\V(k,\cdot)$ for all possible keys $k\in\bin^\kappa$ to find $k^*$.

The main obstacle in this outline is that the algorithm $\V$ is oracle-aided. While \cref{lem:gentle-search} allows the POVMs to be $\qpspace$-aided, the oracle $O$ remains an issue. Luckily, due to \cref{prop:reflect-emulate-efid}, we know that polynomially-many calls to $O$ with queries of any length $\sec\in\NN$ can be emulated by polynomially-many copies of the state $\ket{S_\sec-}$, where $S=\{S_\sec\}$ are the random subsets underling the oracle $O$. In particular, we can simulate $\V$ using such copies for any $\sec$ w.r.t. which a query is made by $\V$, without accessing $O$.\footnote{Note that we use $\kappa$ to denote the security parameter of the OWSG we aim to break. $\sec$ is used to denote the security parameter w.r.t. which our attack invokes its queries to the oracle $O$. While polynomially-related, $\kappa$ and $\sec$ are not necessarily equal.}%\amit{Tamer, we are using both $\secp$ and $\kappa$ as security parameter here. Can you specify which security parameter corresponds to what?}

Let $q:=q(\kappa)$ be the polynomial bound on the query complexity of $\V$ and let $\overline{\sec}:=\overline{\sec}(\kappa)$ be the polynomial bound on the length of these queries. That is, for any $\kappa\in\NN$ and any $k\in\bin^\kappa$, $\V(k,\cdot)$ always makes at most $q(\kappa)$ queries to $O$, where each is of length at most $\overline{\sec}(\kappa)$. Then, by \Cref{prop:reflect-emulate-efid}, there exists a polynomial-query $\qpspace$-aided algorithm $\widetilde{\V}$ such that, for any $k\in\bin^\kappa$ and state $\phi$,
\begin{equation}\label{eq:v-to-tilde}
    \left| \Pr[\V^{O,\qpspace}(k,\phi)=1] -\Pr[\widetilde{\V}^{\qpspace}(k,\phi\otimes\bigotimes_{\sec=1}^{\overline{\sec}}\ket{S_\sec -}^{\otimes 2q^2\kappa^2})=1] \right| \leq 1 /\kappa.
\end{equation}

Given the above, to apply gentle search w.r.t. the POVMs defined by $\widetilde{\V}$ we no longer need access to the oracle $O$. Instead, we must generate many copies of $\ket{S_\sec -}$. To that end, we use the fact that $O$ contains the controlled reflection about $\ket{S_\sec-}$ (see \cref{def:S-oracle}); due to \cref{prop:project-via-reflect}, there exists a single-query algorithm which, on input $\ket{0^\sec}$ and access to $O$, outputs $\ket{S_\sec-}$ with probability $|\braket{0}{S_\sec-}|^2=1/2$ and otherwise declares failure. By repeating such a procedure for $\kappa$ iterations or until success, we get the following corollary.

\begin{corollary}\label{cor:phase-estimation}
    For any $\kappa\in\NN$, there exists an $\oracle$-aided quantum algorithm $E$ that makes $\kappa$ queries on the input $1^\lambda$, such that, for any $O\in\oracle$ and $\sec\in\NN$,
    \[
        \Pr[E^O(1^\sec)=\ket{S_\sec -}]\geq 1 - 2^{-\kappa},
    \]
    where $S_\sec\subset \bin^\sec \bin^\sec\setminus\{0^\sec\}$
    is the subset underlying the oracle $O$ (see \cref{def:S-oracle}).
\end{corollary}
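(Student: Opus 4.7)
The plan is to simply repeat the single-query projection procedure from \Cref{prop:project-via-reflect} up to $\kappa$ times. First I would observe that, by \Cref{def:S-oracle,def:s-unitary}, any $O \in \oracle$ acts as the controlled reflection unitary $U_{S_\sec} = R^c_{S_\sec -}$ on $(\sec+1)$-qubit inputs. Therefore, \Cref{prop:project-via-reflect} directly applies: on input $\ket{0^\sec}$, a single query to $O$ (appropriately controlled on $\ket{1}$) yields $\ket{S_\sec -}$ with probability $|\braket{0^\sec}{S_\sec -}|^2$, and otherwise declares failure without disturbing the ability to try again on a fresh $\ket{0^\sec}$.

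Next I would compute the success probability. Since $\ket{S_\sec -} = \tfrac{1}{\sqrt{2}}(\ket{S_\sec} - \ket{0^\sec})$ and $S_\sec \subseteq \bin^\sec \setminus \{0^\sec\}$, we have $\braket{0^\sec}{S_\sec} = 0$, so
\[
    |\braket{0^\sec}{S_\sec -}|^2 = \tfrac{1}{2}.
\]
Thus each attempt succeeds with probability exactly $1/2$, independently.

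Finally, I would define $E^O(1^\sec)$ as the algorithm that repeats the single-query procedure from \Cref{prop:project-via-reflect} for at most $\kappa$ iterations, halting as soon as one succeeds and outputting the produced state $\ket{S_\sec -}$; if all $\kappa$ attempts fail, $E$ outputs an arbitrary state. The probability that all $\kappa$ independent attempts fail is $(1/2)^\kappa = 2^{-\kappa}$, so $\Pr[E^O(1^\sec) = \ket{S_\sec -}] \geq 1 - 2^{-\kappa}$, and the total query complexity is at most $\kappa$, as required. There is no substantial obstacle here; the only thing to verify carefully is the inner-product computation, which hinges on the fact that $0^\sec$ is explicitly excluded from the support of $S_\sec$ in \Cref{def:S-oracle}.
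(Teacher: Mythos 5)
Your proof is correct and matches the paper's approach: the paper also derives this corollary by applying \Cref{prop:project-via-reflect} (which, for the controlled reflection oracle $O=R^c_{S_\sec -}$, amounts to querying on $\ket{+}\ket{0^\sec}$ and measuring the control qubit in the Hadamard basis) and repeating independently on fresh copies of $\ket{0^\sec}$ for $\kappa$ rounds, so the failure probability is $2^{-\kappa}$. Your inner-product computation $|\braket{0^\sec}{S_\sec-}|^2 = 1/2$ is the same as the paper's and correctly relies on $0^\sec\notin S_\sec$.
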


\iffalse

\begin{proof}
    The algorithm $E$ performs the following $\kappa$ times at most: Call the oracle $O$ with query $\ket{+}\ket{0^\sec}$ then measure the first qubit in the Hadamard basis; if the outcome is $\ket{-}$ output the remaining $\sec$-qubit register, otherwise repeat. If none of the repetitions succeed, the algorithm fails.

    To see why the statement holds w.r.t. $E$, notice that $$O(\ket{+}\ket{0^\sec}) = \frac{1}{\sqrt{2}}\left(\ket{0}\ket{0^\sec} + \ket{1}\ket{S_\sec}\right) = \frac{1}{\sqrt{2}}\left( \ket{-}\left(\frac{\ket{0}-\ket{S_\sec}}{\sqrt{2}}\right) + \ket{+}\left(\frac{\ket{0}+\ket{S_\sec}}{\sqrt{2}}\right) \right).$$
    Hence, the outcome of the measurement is $\ket{-}$ with probability $1/2$ at any repetition. When this is the case, the residual state is indeed $\ket{S_\sec-}$ by the above inspection.
\end{proof}

\fi

We are now prepared to describe the attack on the OWSG candidate that requires $t(\kappa)\in O(\kappa^2\log \kappa)$ copies of the challenge state.
\begin{itemize}
    \item[] $\adv^{O,\qpspace}(1^\kappa,\phi^{\otimes t})$: 
\begin{enumerate}
    \item For $\sec=1,\dots,\overline{\sec}$, run $E^O(1^\sec)$ (from \cref{cor:phase-estimation}) $2q^2\kappa^2t$ times to obtain $\ket{S_\sec-}^{\otimes 2q^2 \kappa^2 t}$.
    %\mor{Don't we need the state $2q^2\kappa^2\times t$ copies?}\amit{Yeah I think you are right. @Tamer can you confirm once?}
    \item Run $\qpspace$-aided gentle search, namely the algorithm $\tomography$ from \cref{lem:gentle-search}, with
    \if\submission=0
    \begin{itemize}%[--]
        \item parameters $\epsilon=\delta=1/2$. 
        \item POVMs $\{\tilde{\V}^\qpspace(k,\cdot)\}_{k\in\bin^\kappa}$, and
        \item input $\phi\otimes\left(\bigotimes_{\sec\in[\overline{\sec}]} \ket{S_{\sec}-}^{\otimes 2q^2\kappa^2}\right)$,
    \end{itemize} and output its outcome.
    \else
    parameters $\epsilon=\delta=1/2$, POVM elements $\{\Pi_k\}_{k\in\bin^\kappa}$ (where $\Pi_k$ corresponds the acceptance of
    $\tilde{\V}^\qpspace(k,\cdot)$), and input 
    $\phi\otimes\left(\bigotimes_{\sec\in[\overline{\sec}]} \ket{S_{\sec}-}^{\otimes 2q^2\kappa^2}\right)$ (of which we have $t$ copies -- this suffices due to \Cref{lem:gentle-search}), and output its outcome.
    \fi
\end{enumerate}
\end{itemize}

By \cref{cor:phase-estimation}, the first step succeeds with probability at least $1-2q^2\kappa^2\overline{\sec}/2^\kappa$. Now, assuming $\phi$ was sampled by the OWSG under key $k^*\in\bin^\kappa$, it holds by correctness that 
$$\Pr[{\V}^{O,\qpspace}(k^*,\phi)=1]\geq 1-\negl(\kappa),$$ 
for some negligible function $\negl$, and hence, by \cref{eq:v-to-tilde}, 
$$\Pr[\tilde{\V}^{\qpspace}(k^*,\phi\otimes\bigotimes_{\sec=1}^{\overline{\sec}}\ket{S_\sec-}^{\otimes 2q^2\kappa^2})=1]\geq 1-1/\kappa-\negl(\kappa).$$ 
Given the above guarantee regarding $\phi$, \cref{lem:gentle-search} implies that the second step finds a $k'\in\bin^\kappa$ such that 
$$\Pr[\tilde{\V}^{\qpspace}(k',\phi\otimes\bigotimes_{\sec=1}^{\overline{\sec}}\ket{S_\sec-}^{\otimes 2q^2\kappa^2})=1]\geq 1/2-1/\kappa-\negl(\kappa) > 1/3$$ 
and, consequently, 
$\Pr[{\V}^{O,\qpspace}(k',\phi)=1]\geq 1/3-1/\kappa$. This concludes the attack on the OWSG candidate successful and completes the proof of \cref{thm:break-OWSG}.
%\amit{Tamer, the proof is nice, just that this $1/2$ and $1/3$ are a bit ad-hoc, so we can try to add a line here that we are doing a lossy attack here by choosing $\epsilon$ in the gentle measurement lemma to be $\frac{1}{2}$, but instead we could have chosen $\epsilon$ to be a small enough inverse poly in order to achieve an attack that wins with probability $1=\frac{1}{poly}$ where we can choose the polynomial in the denominator ourselves.}\tamer{@TODO: add a remark}

\ifnum\submission=0

\begin{remark}
    While we describe an attack against the OWSG that succeeds with probability $1/3-1/\kappa$ for security parameter $\kappa$, this is merely for simplicity of exposition. A more careful tuning of the parameters results in an attack that succeeds with probability $1-\Theta(1/\kappa)$.
\end{remark}

\fi

\ifnum\submission=0

\section{Further Implications of \texorpdfstring{\Cref{thm:efid-vs-owsg}}{Theorem 1.1}}\label{sec:further-implications}

%\amit{In the technical overview in EA, note that both the works get pseudorandom \qefid but one gives you classical and one gives you derandomizable {\qefid}.}
 Firstly, observe that EV-OWPuzzs (see \Cref{def:EV-OWPuzz}) are essentially a classical state version of OWSGs, up to a change of syntax for the sampling the state/puzzle. It was shown in~\cite{C:ChuGolGra24} that these syntaxes are equivalent, and hence EV-OWPuzzs imply OWSGs. 
\begin{theorem}[{Implicit in~\cite[Theorem 32, section 10]{C:ChuGolGra24}}]\label{thm:EV-OWPuzz-implies-OWSG}
    EV-OWPuzzs (see \Cref{def:EV-OWPuzz}) imply OWSGs (see \Cref{def:owsg}) in a black-box manner.
\end{theorem}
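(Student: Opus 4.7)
The plan is to give a black-box construction of a OWSG $(\G, \V)$ from any EV-OWPuzz $(\samp, \ver)$, using a one-time-pad trick to uniformize the OWSG key distribution. Assume without loss of generality that $\samp(1^\kappa)$ outputs pairs $(\ans, \puzz)$ with $|\ans| = \kappa$ (otherwise pad the answer). I would take the key space to be $\keyspace_\kappa = \{0,1\}^\kappa$ and define the OWSG as follows: on input key $k$, the generator $\G(k)$ runs $(\ans, \puzz) \gets \samp(1^\kappa)$ and outputs the classical state $(\ans \oplus k,\, \puzz)$; on input a candidate key $k'$ and an alleged state $(m, p)$, the verifier $\V$ returns $\ver(m \oplus k',\, p)$. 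Correctness is immediate: $\V(k, \G(k))$ computes $\ver(\ans, \puzz)$, which accepts with overwhelming probability by EV-OWPuzz correctness.

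The next step is to reduce one-wayness of the OWSG to security of the EV-OWPuzz. Given a putative OWSG adversary $\A$ that succeeds with non-negligible probability, I would construct an EV-OWPuzz adversary $\B$ which, on input a challenge puzzle $\puzz^*$, samples $u \gets \{0,1\}^\kappa$ uniformly, runs $\A$ on the state $(u, \puzz^*)$ (providing identical copies if the security game requires several), obtains $\A$'s guess $k'$, and outputs $u \oplus k'$. The crux of the argument is that in the real OWSG game $\A$'s view is $(\ans \oplus k, \puzz)$ for uniformly random $k$ and $(\ans, \puzz) \gets \samp$; since $k$ acts as a perfect one-time pad on $\ans$ and is independent of $\puzz$, this joint distribution coincides exactly with $(u, \puzz^*)$ for uniform $u$ and $\puzz^* \gets \samp$'s puzzle marginal. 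Hence $\A$'s success probability is preserved in the simulation, and whenever $\V(k', (u, \puzz^*)) = \ver(u \oplus k', \puzz^*) = 1$, the string $u \oplus k'$ is a valid EV-OWPuzz answer for $\puzz^*$, so $\B$ inverts the challenge with the same probability.

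The only subtle point worth flagging is the handling of $\phi^{\otimes t}$ in the OWSG security game. I would adopt the standard interpretation under which $\phi\gets\G(k)$ denotes a specific sample and $\phi^{\otimes t}$ denotes $t$ tensor copies of that sample; because our output is a classical bit string, these copies are trivially equivalent to a single copy and the reduction produces them by simple duplication, so the argument is unaffected by $t$. Apart from this point, the proof is a textbook one-time-pad reduction and yields the desired fully black-box implication from EV-OWPuzzs to OWSGs.
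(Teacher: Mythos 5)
Your construction does not prove the theorem as stated, and the gap is in the multi-copy security game. The paper's Definition of OWSGs gives the adversary $\phi^{\otimes t(\kappa)}$ where $\phi\gets\G(k)$; since $\G(k)$ is a (randomized) quantum channel, the only meaningful reading of $\phi^{\otimes t}$ is the $t$-fold tensor power of the output density operator, i.e.\ $t$ \emph{independent executions} of $\G(k)$ under the same key $k$ (this is the Morimae--Yamakawa convention, and it is the reading the paper itself relies on when it runs gentle search on $\phi^{\otimes t}$ in Section~5; your alternative reading, ``$t$ duplicates of one specific sample,'' is not well-defined for genuinely mixed states and would contradict no-cloning). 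Under the correct reading, your OWSG hands the adversary $t$ i.i.d.\ pairs $(\ans_i\oplus k,\puzz_i)$ with the \emph{same} mask $k$. The one-time pad does not compose across such samples: the adversary learns $\ans_i\oplus\ans_j$ for all $i,j$, and your reduction $\B$, holding a single challenge $\puzz^*$ and not knowing $\ans^*$, cannot simulate this correlated view. Worse, the construction is actually insecure: take any secure EV-OWPuzz $(\samp_0,\ver_0)$ and define $\samp$ to flip $b\gets\bin$, sample $(a,p)\gets\samp_0$, and output $\ans=(a,0^n),\ \puzz=(p,0)$ if $b=0$ and $\ans=(0^n,a),\ \puzz=(p,1)$ if $b=1$, with $\ver$ checking the half indicated by the bit in the puzzle. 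This is still a secure EV-OWPuzz, but with $t=2$ copies of your $\G(k_1,k_2)$ the adversary sees, with probability $1/2$, one sample of each type and reads off $k_2$ in the clear from the first and $k_1$ in the clear from the second, recovering $k$ exactly and breaking one-wayness with probability $\approx 1/2$.

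For what it is worth, the paper does not reprove this implication: it observes that EV-OWPuzzs are a ``classical-output OWSG up to a change of syntax'' and defers entirely to Theorem~32 of Chung--Goldin--Gray, where the syntactic translation between ``jointly sampled (answer, puzzle)'' and ``uniform key, then state generated from the key'' is carried out in a way that survives the adversary receiving polynomially many independent executions under the same key. Any correct proof you write must confront exactly this point; a per-sample-fresh mask would fix the leakage but is unavailable to $\G$, which cannot coordinate across independent invocations. So the one-time-pad reduction, while fine for $t=1$, does not establish the theorem.
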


Combining \Cref{thm:EV-OWPuzz-implies-OWSG} with \Cref{thm:break-OWSG,thrm:efi-exist}, we get the following corollary.
\begin{corollary}[Restatement of \Cref{cor:EFID-EV-OWPuzz}]
     There exists a unitary oracle relative to which \qefid pairs (see \Cref{def:efi}) exist, but EV-OWPuzzs do not (see \Cref{def:EV-OWPuzz}).
\end{corollary}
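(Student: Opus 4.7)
The plan is to derive this corollary directly by combining the two results already established: the black-box implication \Cref{thm:EV-OWPuzz-implies-OWSG} (EV-OWPuzzs $\Rightarrow$ OWSGs) with the oracle separation of OWSGs from \qefid pairs established by \Cref{thrm:efi-exist} and \Cref{thm:break-OWSG}. The basic idea is that since the construction of OWSGs from EV-OWPuzzs is fully black-box, it relativizes, so it cannot exist in any oracular world where OWSGs are ruled out.

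More concretely, the steps I would carry out are as follows. First, I would fix the oracle $(O,\qpspace)$ for any $O\in\oracle$ that lies in the measure-$1$ subset granted by \Cref{thrm:efi-exist}, so that exponentially-hard \qefid pairs exist relative to $(O,\qpspace)$. Next, suppose for contradiction that there exists an EV-OWPuzz $(\samp,\ver)$ relative to $(O,\qpspace)$, that is, a pair of $(O,\qpspace)$-aided QPT algorithms satisfying correctness and security against all non-uniform $(O,\qpspace)$-aided QPT adversaries. Applying the black-box reduction from \Cref{thm:EV-OWPuzz-implies-OWSG} (which uses its underlying EV-OWPuzz only as an oracle, and whose security reduction uses any OWSG adversary only as a black box) in a relativized fashion yields a OWSG $(\G,\V)$ relative to $(O,\qpspace)$. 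But this contradicts \Cref{thm:break-OWSG}, which says that no OWSG exists relative to $(O,\qpspace)$. Therefore, no EV-OWPuzz can exist relative to $(O,\qpspace)$, completing the separation.

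The only subtlety, and arguably the one step that needs to be spelled out carefully, is verifying that the reduction underlying \Cref{thm:EV-OWPuzz-implies-OWSG} is genuinely fully black-box so that it relativizes with respect to both oracles $O$ and $\qpspace$. Since the construction of OWSGs from EV-OWPuzzs in \cite{C:ChuGolGra24} is merely a syntactic rewrapping of the puzzle sampler and verifier, and the security reduction treats any OWSG-breaking adversary as a black-box inverter, this relativization is immediate; no care about the internal workings of the oracles is needed. I therefore do not expect any real obstacle here — the statement is essentially a logical corollary, and the work has already been done in proving \Cref{thm:efid-vs-owsg} and recalling \Cref{thm:EV-OWPuzz-implies-OWSG}.
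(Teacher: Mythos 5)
Your proposal is correct and matches the paper's proof exactly: the paper obtains this corollary by combining \Cref{thm:EV-OWPuzz-implies-OWSG} with \Cref{thrm:efi-exist} and \Cref{thm:break-OWSG}, relying precisely on the fact that the black-box construction of OWSGs from EV-OWPuzzs relativizes. Your attention to verifying that the reduction is fully black-box is the right (and only) subtlety here.
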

It was also shown in~\cite{C:ChuGolGra24} that \qefid pairs imply OWPuzzs.
\begin{theorem}[{Implicit in~\cite[Lemma 8]{C:ChuGolGra24}}]\label{thm:qefid-implies-OWPuzz}
   \qefid pairs (see \Cref{def:EV-OWPuzz}) imply OWPuzzs (see \Cref{def:owsg}) in a black-box manner.
\end{theorem}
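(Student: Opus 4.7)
The plan is to construct a OWPuzz from the QEFID pair $(D_0, D_1)$ via parallel repetition and a maximum-likelihood verifier. First, I would fix a polynomial $p$ such that $\SD(D_0(1^\secp), D_1(1^\secp)) \geq 1/p(\secp)$, and set $n := n(\secp)$ to be a sufficiently large polynomial, say $n = \secp \cdot p(\secp)^2$. The sampler $\samp(1^\secp)$ will pick a uniform bit $b \gets \bin$, draw $n$ independent samples $z_1, \ldots, z_n \gets D_b(1^\secp)$, and output $\ans := b$ together with $\puzz := (z_1, \ldots, z_n)$. The (unbounded) verifier $\ver(\ans', \puzz)$ will perform the likelihood-ratio test, accepting iff $\prod_i \Pr[D_{\ans'}(1^\secp) = z_i] \geq \prod_i \Pr[D_{1-\ans'}(1^\secp) = z_i]$, with ties broken arbitrarily.

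For correctness, I plan to invoke amplification of statistical distance under tensor products: a single sample drawn from $D_b$ is classified correctly by the likelihood test with probability at least $\frac{1}{2} + \frac{1}{2p(\secp)}$, so a Chernoff-type bound on the log-likelihood ratio across the $n$ independent samples will give that the true bit is recovered with probability $\geq 1 - 2^{-\Omega(n/p(\secp)^2)}$, which is overwhelming for our choice of $n$.

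For security, I would argue by contradiction. Suppose a QPT adversary $\adv$, given $\puzz = (z_1, \ldots, z_n)$ produced by $\samp$, outputs an $\ans'$ that passes $\ver$ with non-negligible probability $\epsilon$. By the correctness analysis above, a bit that the likelihood test accepts must coincide with the true bit $b$ except with negligible probability, so $\adv$ predicts $b$ with advantage $\epsilon - \negl(\secp)$. This immediately yields a QPT distinguisher between $D_0^{\otimes n}$ and $D_1^{\otimes n}$ with the same advantage, and a standard hybrid argument over the $n$ coordinates produces a QPT distinguisher between $D_0(1^\secp)$ and $D_1(1^\secp)$ with advantage $(\epsilon - \negl)/n$, contradicting the computational indistinguishability of the QEFID pair.

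The construction accesses $(D_0, D_1)$ only through its sampler and the reduction accesses both $\adv$ and the sampler in a black-box manner, so the implication is fully black-box, as claimed. I do not expect any serious obstacle beyond routine bookkeeping around ties in the likelihood test, which is absorbed into the negligible correctness error by the amplification bound; the main conceptual point is simply that the statistical gap of the QEFID pair amplifies exponentially under parallel repetition while computational indistinguishability is preserved (up to a polynomial loss) via a hybrid.
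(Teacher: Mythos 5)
There is a fatal gap in the security argument: your puzzle has a \emph{one-bit} answer. The OWPuzz security definition (\Cref{def:EV-OWPuzz}) requires that \emph{every} QPT adversary produce an accepting answer with only negligible probability, but against your construction the trivial adversary that ignores $\puzz$ and outputs a uniformly random bit (or even a fixed bit) is accepted with probability roughly $1/2$, since by your own correctness analysis the likelihood test accepts the true bit $b$ with overwhelming probability. The error surfaces in the last step of your reduction, where you pass from ``$\adv$ passes $\ver$ with probability $\epsilon$'' to ``a distinguisher with advantage $\epsilon$'': predicting a uniform bit with probability $\epsilon$ yields distinguishing advantage about $2\epsilon-1$, which is zero for the guessing adversary. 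So the reduction is vacuous for every $\epsilon\le 1/2$, and the construction is simply not a one-way puzzle. (The correctness half of your argument is essentially fine, modulo the standard Hellinger/Chernoff bookkeeping for amplifying statistical distance under tensor powers.)

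This is precisely why the theorem is not a routine observation and why the paper attributes it to Lemma 8 of \cite{C:ChuGolGra24} rather than proving it. The known route makes the answer a long string: sample $b_1,\dots,b_n\gets\bin$, set $\puzz=(z_1,\dots,z_n)$ with $z_i\gets D_{b_i}$ and $\ans=(b_1,\dots,b_n)$. One then argues that this is a \emph{distributional} one-way puzzle --- a QPT algorithm sampling (something close to) the posterior of $\ans$ given $\puzz$ would distinguish $D_0$ from $D_1$ with advantage related to $\SD(D_0,D_1)\ge 1/p$ --- and invokes the nontrivial amplification theorem that distributional OWPuzzs imply OWPuzzs (the quantum analogue of Impagliazzo--Luby). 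Note also that the classical Goldreich-style fix of putting the sampler's randomness into the answer is unavailable here, because the \qefid sampler is a genuinely quantum algorithm with no classical seed; any repair of your construction must confront this.
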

Combining \Cref{thm:qefid-implies-OWPuzz} with \Cref{thm:break-OWSG,thrm:efi-exist}, we get the following corollary.
\begin{corollary}[Restatement of \Cref{cor:owpuzz-owsg}]
     There exists a unitary oracle relative to which OWPuzzs (see \Cref{def:EV-OWPuzz}) exist, but OWSGs 
     (see \Cref{def:EV-OWPuzz})
     do not.
\end{corollary}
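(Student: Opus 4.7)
The plan is to combine three already-established results: the existence of $\qefid$ pairs relative to $(O,\qpspace)$ for a random $O\gets\oracle$ (\Cref{thrm:efi-exist}), the nonexistence of OWSGs relative to the same oracles (\Cref{thm:break-OWSG}), and the black-box implication from $\qefid$ pairs to OWPuzzs (\Cref{thm:qefid-implies-OWPuzz}). The key observation driving the proof is that a fully black-box implication between two primitives relativizes, so any primitive that is black-box implied by $\qefid$ pairs will continue to exist in a world where $\qefid$ pairs exist.

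First, I would invoke \Cref{thrm:efi-exist} to fix a specific oracle $O\in\oracle$ such that $(2^{\sec/50},2^{-\sec/50})$-hard $\qefid$ pairs exist relative to $(O,\qpspace)$; such an $O$ exists because the conclusion of \Cref{thrm:efi-exist} holds with probability $1$ over the choice of $O\gets\oracle$. Call the corresponding pair $(D_0^{O,\qpspace},D_1^{O,\qpspace})$. Next, I would apply \Cref{thm:qefid-implies-OWPuzz}, which states that OWPuzzs can be constructed from $\qefid$ pairs in a black-box manner. Instantiating that construction with the $(O,\qpspace)$-aided sampler for $(D_0,D_1)$ yields an $(O,\qpspace)$-aided OWPuzz $(\samp^{O,\qpspace},\ver^{O,\qpspace})$. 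Because the reduction is black-box, the security reduction converts any $(O,\qpspace)$-aided adversary breaking the OWPuzz into an $(O,\qpspace)$-aided adversary against the $\qefid$ pair with only a polynomial blowup in queries and running time, which contradicts the (in fact exponential) computational indistinguishability guarantee from \Cref{thrm:efi-exist}. Hence OWPuzzs exist relative to $(O,\qpspace)$.

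On the other hand, \Cref{thm:break-OWSG} states that for every $O\in\oracle$ — in particular, for the $O$ we fixed — OWSGs do not exist relative to $(O,\qpspace)$. Combining these two observations, the unitary oracle $(O,\qpspace)$ simultaneously supports the existence of OWPuzzs and rules out OWSGs, which is precisely the separation claimed in the corollary.

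I do not expect any serious technical obstacle: the proof is a direct composition of existing results. The only delicate point is to make sure that the black-box reduction from \Cref{thm:qefid-implies-OWPuzz} preserves security in the presence of the oracles $(O,\qpspace)$, which follows from the standard fact that fully black-box reductions relativize with respect to any oracle to which both the primitive's algorithms and the adversary are given access.
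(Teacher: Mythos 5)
Your proposal is correct and matches the paper's argument exactly: the paper obtains this corollary by combining \Cref{thrm:efi-exist}, \Cref{thm:break-OWSG}, and the black-box implication from \qefid pairs to OWPuzzs (\Cref{thm:qefid-implies-OWPuzz}), relying on the fact that the black-box construction relativizes. Your write-up simply spells out the relativization step that the paper leaves implicit.
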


Next, we move deeper into Microcrypt, and start with the following simple observation.
\begin{theorem}\label{thm:Strong-uncl-states-imply-EVOWSG}
    Strongly unclonable state generators $(\struncl)$ imply (efficiently verifiable) OWSGs.
\end{theorem}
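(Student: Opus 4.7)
The plan is to instantiate the OWSG as follows. Let $\samp$ be the UCSG generator, which, being a QPT algorithm that outputs a pure state $|\phi_k\rangle$, can without loss of generality be written as a unitary $U_k$ satisfying $U_k|0\rangle = |\phi_k\rangle \otimes |\mathsf{aux}_k\rangle$ for some key-dependent auxiliary workspace. Set $\G(k) := \samp(k)$ and define $\V(k, \sigma)$ to: recompute $|\mathsf{aux}_k\rangle$ from $k$, append it to $\sigma$, apply $U_k^{-1}$, and accept iff the measurement outcome is $|0\rangle$. By construction, $\Pr[\V(k, \sigma) = 1] = \langle \phi_k | \sigma | \phi_k \rangle$, which gives perfect correctness on honestly generated states.

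For one-wayness, I would reduce any OWSG adversary $\A$ requesting $t = t(\kappa)$ copies to a strong UCSG cloner $\B^{R^c_{\phi_k}}$ for the parameter $m := t$. On input $|\phi_k\rangle^{\otimes t}$ and oracle access to $R^c_{\phi_k}$, $\B$ proceeds in three steps: (i) run $\A$ on the $t$ input copies to obtain a classical candidate key $k'$; (ii) use $k'$ to prepare $N$ fresh copies of $|\phi_{k'}\rangle$ via $\samp(k')$; and (iii) apply \Cref{prop:project-via-reflect} with the oracle $R^c_{\phi_k}$ to each fresh copy, keeping the successful projections. Each projection succeeds with probability $|\langle\phi_{k'}|\phi_k\rangle|^2$, and crucially, upon success the residual register holds exactly the state $|\phi_k\rangle$. $\B$ then outputs the first $t+1$ successful projections.

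The analysis hinges on the following observation: if $\A$ breaks one-wayness with non-negligible probability $\varepsilon(\kappa)$, then, by my choice of $\V$, $\EE_{k, \A}[|\langle\phi_{k'}|\phi_k\rangle|^2] \geq \varepsilon$. A standard Markov step yields that, with probability at least $\varepsilon/2$ over $k$ and $\A$'s randomness, the overlap satisfies $|\langle\phi_{k'}|\phi_k\rangle|^2 \geq \varepsilon/2$. Conditioned on this ``good'' event, choosing $N = \Theta((t+1)\kappa/\varepsilon)$ yields at least $t+1$ successful projections except with negligible probability, by a Chernoff bound. In this case $\B$ outputs exactly $|\phi_k\rangle^{\otimes t+1}$, which passes the rank-$1$ projection onto $|\phi_k\rangle^{\otimes t+1}$ with probability $1$, contradicting $\struncl$ security. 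I expect the main obstacle to be this passage from the average-case OWSG success guarantee to a usable per-trial overlap; this is also where the strong-unclonability hypothesis becomes essential, since plain UCSG security provides no mechanism to convert an inverse-polynomially overlapping $|\phi_{k'}\rangle$ into exact copies of $|\phi_k\rangle$.
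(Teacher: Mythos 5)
Your proposal is correct, but it takes a genuinely different route from the paper's proof in two places. First, the verifier: the paper defines the OWSG generator as $\G^\kappa(k)=\ket{\phi_k}^{\otimes\kappa}$ and verifies by running $\kappa$ independent swap tests against freshly minted copies, whereas you keep a single-copy generator and verify by uncomputing $U_{k}$ (valid here because the UCSG output is pure, so the purified generator's workspace factors out). Second, the extraction of a usable overlap: the paper's $\kappa$-fold amplified verifier forces any key $k'$ that is accepted with non-negligible probability to satisfy $|\langle\phi_{k'}|\phi_k\rangle|^2>0.8$ (since otherwise acceptance decays as $0.9^\kappa$), after which a fixed $\kappa^2 t$ rounds of \Cref{prop:project-via-reflect} suffice; your single-shot verifier only yields $\EE[|\langle\phi_{k'}|\phi_k\rangle|^2]\geq\varepsilon$, so you need the Markov step to get a per-trial success probability of $\varepsilon/2$ on an event of probability $\varepsilon/2$, and you compensate by scaling the number of projection attempts to $N=\Theta((t+1)\kappa/\varepsilon)$, which is where the usual non-uniformity/"infinitely often" bookkeeping for the inverse-polynomial bound on $\varepsilon$ enters. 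Both arguments place the entire use of strong (as opposed to plain) unclonability in the same spot — converting an overlapping $\ket{\phi_{k'}}$ into exact copies of $\ket{\phi_k}$ via the controlled-reflection oracle — and both close with a Chernoff bound; the paper's amplification buys a constant per-trial probability and a reduction whose parameters do not depend on the adversary's advantage, while your version buys a leaner OWSG whose generator outputs one copy per key rather than $\kappa$.
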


\iffalse
\begin{proof}[Proof sketch]
Recall that Strong unclonable states are such that the adversary wins if given $M$ copies of the state it produces $N>M$ register such that least $M+1$ out of the $N$ submitted registers pass verification, i.e., the rank-1 projection into the state itself.
    Let $Gen$ be an unclonable state family. Then, $(Gen',Ver)$ is an EV-OWSG where $Gen'$ takes a key $k$ and outputs $\secp$ copies of the output $Gen(k)$ by running $Gen(k)$ $\secp$ times.\amit{Self: Complete the proof}
\end{proof}
\fi

\begin{proof}
    Let $\G$ be the QPT generation algorithm of an UCSG with key space $\keyspace=\{\keyspace_\kappa\}_\kappa$ (see \cref{def:unclonable}). Let $\G^\kappa$ denote the algorithm which, on input a key $k\in\keyspace_\kappa$, for any $\kappa\in\NN$, invokes $\G(k)$ $\kappa$ times and outputs the tensor of all $\kappa$ output states. Let $\V$ be the algorithm that takes as input a key $k\in\keyspace_\kappa$ and a state $\psi$, invokes $\G^\kappa(k)$ to obtain $\ket{\phi_k}^{\otimes \kappa}$, and applies a swap test over each of the copies of $\ket{\phi_k}$ and the corresponding register in $\psi$. $\V$ outputs 1 if all tests succeed.

    We argue that the pair $(\G^\kappa,\V)$ constitutes an efficiently verifiable OWSG. First, correctness holds since, for any $k\in\keyspace_\kappa$, $\Pr[\V(k,\G^\kappa(k))=1] = (\frac{1}{2}(1+|\braket{\phi_k}{\phi_k}|^2))^\kappa = 1$, where $\ket{\phi_k}=\G(k)$. For security, let $\adv$ be an adversary that breaks the one-wayness of $(\G^\kappa,\V)$. That is, there exists a polynomial $t$ and a non-negligible function $\eta$  such that, for a random key $k\gets\keyspace_\kappa$ and the corresponding state $\ket{\phi_k}= \G(k)$, it holds that $\adv(1^\kappa,\ket{\phi_k}^{\otimes \kappa\cdot t})$ outputs a $k'$ such that $\V(k',\ket{\phi_k}^{\otimes \kappa})=1$ with probability $\eta:=\eta(\kappa)$. The latter necessarily implies that, for infinitely many $\kappa$, 
    $|\langle\phi_{k'}|\phi_k\rangle|^2>0.8$ with probability at least $\eta/2$ since, otherwise, 
    $$\Pr[\V(k',\ket{\phi_k}^{\otimes \kappa})=1] <\eta/2+ (1-\eta/2)\left(\frac{1}{2}(1+|\braket{\phi_{k'}}{\phi_k}|^2)\right)^\kappa < \eta/2 + 0.9^{\kappa}(1-\eta/2)<\eta$$
    for all but finitely many $\kappa$.

    Given the above, we derive an attack against the presumed unclonable state family. The attack, which we recall has access to the controlled reflection oracle $R^c_{\phi_k}$, takes as input security parameter $1^\kappa$ and $\kappa\cdot t$ copies of the state $\ket{\phi_k}$ and acts as follows:
    \begin{enumerate}
        \item Run $\adv(1^\kappa,\ket{\phi_k}^{\otimes \kappa\cdot t})$ to obtain a key $k'$.
        \item Run the following for at most $\kappa^2\cdot t$ iterations or until $\kappa\cdot t + 1$ iterations succeed:
        \begin{enumerate}[label*=\arabic*.]
            \item Generate the state $\ket{\phi_{k'}}\gets\G(k')$.
            \item Invoke the algorithm from \cref{prop:project-via-reflect} with input $\ket{\phi_{k'}}$ using the oracle $R^c_{\phi_k}$. If the algorithm fails, abort. Otherwise, the output state is $\ket{\phi_k}$.
        \end{enumerate}
        \item Output the $\kappa\cdot t+1$ copies of $\ket{\phi_k}$ generated by the loop in the previous step.
    \end{enumerate}

    We now analyze the success probability of our proposed attack. First, recall that for infinitely many $\kappa$, it holds $|\langle\phi_{k'}|\phi_k\rangle|^2>0.8$ with probability at least $\eta/2$. Assuming this event occurs, due to \cref{prop:project-via-reflect}, every iteration of step 2.2. in the attack is successful with probability at least $0.8$. Hence, by a standard Chernoff bound, the probability that $\kappa\cdot t+1$ attempts among the $\kappa^2\cdot t$ iterations succeed is at least $1-2^{-\kappa}$. This concludes the proof.

\end{proof}
Note that in the proof of \Cref{thm:Strong-uncl-states-imply-EVOWSG}, it was crucial that the adversary $\adv$ has access to the controlled-reflection about the state, i.e., the state family is a \emph{strongly} unclonable family. Next, we use our emulation result for arbitrary reflections to show that this additional access to the controlled-reflection does not add any strength.
\begin{theorem}\label{thm:uncl-state-imply-strong-uncl-states}
Any UCSG is also strongly unclonable.
\end{theorem}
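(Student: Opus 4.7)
The plan is to prove the contrapositive: any successful adversary against strong unclonability yields a successful adversary against (regular) unclonability. The crucial observation is that the controlled reflection $R^c_{\phi_k} = I - 2\ket{1}\proj{\phi_k}\bra{1}$ is precisely the standard reflection $I - 2\proj{1,\phi_k}$ about the pure state $\ket{1}\ket{\phi_k}$, which can be assembled from a single copy of $\ket{\phi_k}$ by attaching a $\ket{1}$ ancilla. This renders the reflection emulation theorem (\cref{thrm:reflect-jls}) directly applicable with reference state $\ket{\psi} = \ket{1}\ket{\phi_k}$.

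Concretely, suppose $\adv$ is a QPT strong UCSG adversary that, given $m$ copies of $\ket{\phi_k}$ and making $q$ queries to $R^c_{\phi_k}$, produces $m+1$ registers passing $\ketbra{\phi_k}{\phi_k}^{\otimes m+1}$ with non-negligible probability $\eta(\kappa)$ averaged over $k\gets\keyspace_\kappa$. Choose $\ell := q^2\kappa^2$ so that $2q/\sqrt{\ell+1} \leq 2/\kappa$. Applying \cref{thrm:reflect-jls} yields a QPT oracle-free circuit $\badv$ satisfying
$$\mathsf{TD}\!\left(\adv^{R^c_{\phi_k}}(\ket{\phi_k}^{\otimes m}) \otimes (\ket{1}\ket{\phi_k})^{\otimes \ell},\ \badv(\ket{\phi_k}^{\otimes m}\otimes (\ket{1}\ket{\phi_k})^{\otimes \ell})\right) \leq \frac{2}{\kappa}.$$
Since $\badv$ can prepare the $\ket{1}$ ancillas internally from $\ell$ of its input copies, it effectively consumes $m' := m + \ell$ copies of $\ket{\phi_k}$.

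We then define the regular-unclonability attacker to output the $m+1$ registers produced by $\adv$ together with the $\ell$ auxiliary registers (after discarding their $\ket{1}$ control bits), yielding $m+1+\ell = m'+1$ output registers. Its success probability follows by applying the projector $\ketbra{\phi_k}{\phi_k}^{\otimes m+1} \otimes (\ketbra{1}{1}\otimes\ketbra{\phi_k}{\phi_k})^{\otimes \ell}$ to both sides of the trace-distance bound: on the left-hand side this evaluates exactly to $\adv$'s success probability (since the auxiliary register is literally $(\ket{1}\ket{\phi_k})^{\otimes \ell}$), so on the right-hand side it is at least $\eta - 2/\kappa$, which remains non-negligible. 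Hence the constructed attacker breaks unclonability with $m'$ input copies, contradicting the UCSG assumption.

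The main subtlety to watch for is that the textbook JLS statement only bounds trace distance on $\adv$'s own output register, whereas the reduction crucially needs the auxiliary $(\ket{1}\ket{\phi_k})^{\otimes \ell}$ register to be approximately preserved so that it contributes the extra $\ell$ copies needed to hit the $m'+1$ count. The strengthened version of \cref{thrm:reflect-jls} stated in the paper accommodates this, since the JLS emulator applies reflection about the symmetric subspace and does not disturb the auxiliary copies. With this fact in hand, the counting argument goes through and the reduction is complete.
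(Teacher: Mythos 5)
Your proof takes essentially the same route as the paper's: identify $R^c_{\phi_k}$ as a genuine reflection (about $\ket{1}\ket{\phi_k}$), apply the strengthened reflection emulation \Cref{thrm:reflect-jls} to trade oracle access for extra copies of $\ket{\phi_k}$, and then count those extra copies in the output so that the de-oraclized adversary still produces one more state than it was given. You also correctly flag the key point that the JLS emulator does not disturb the auxiliary register, which is precisely the strengthening of \Cref{thrm:reflect-jls} the paper uses.

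The one thing you should fix is the choice $\ell := q^2\kappa^2$. This yields a trace-distance error of $2/\kappa$, and you then claim the residual advantage $\eta - 2/\kappa$ ``remains non-negligible.'' That is not guaranteed: ``non-negligible'' only gives $\eta(\kappa) \geq 1/p(\kappa)$ for infinitely many $\kappa$ and some polynomial $p$, and if $p$ outgrows $\kappa/2$ (e.g.\ $\eta(\kappa) = 1/\kappa^2$), then $\eta(\kappa) - 2/\kappa$ is negative on that subsequence and the bound is vacuous. The paper avoids this by choosing $\ell$ as a function of $\eta$ itself (it sets $\ell = 8q^2/\eta^2$), so the emulation error is $O(\eta)$ and the residual advantage $\Omega(\eta)$ is automatically non-negligible. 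Tying $\ell$ to $\eta$ in this way is the standard move and makes your argument go through without issue.
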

\begin{proof}
    Let $\Phi=\{\ket{\phi_k}\}_{k\in\keyspace}$ be an unclonable state generator. Assume towards contadiction that it is not strongly unclonable. Then, there exist an oracle-aided QPT algorithm $\adv$ and polynomial $m:=m(\kappa)$, %m':=m(\kappa)+1$
     such that
    \begin{equation}\label{eq:uncl-advantage}
        \Expct_{k\gets\keyspace_\kappa}\left[\Tr\left( \ketbra{\phi_k}{\phi_k}^{\otimes m+1}\cA^{R^c_{\phi_k}}(1^\kappa,|\phi_k\rangle^{\otimes m})\right)\right]>\eta(\kappa)
    \end{equation}
    for infinitely many $\kappa$, where $\eta:=\eta(\kappa)$ is a non-negligible function and $R^c_{\phi_k}$ is the controlled reflection oracle about $\phi_k$. Let $q:\NN\to\NN$ be a polynomial bound on the query complexity of $\adv$, i.e. $\adv$ makes at most $q:=q(\kappa)$ queries to its oracle on input $1^\kappa$. Then, due to \cref{thrm:reflect-jls}, we may simulate $\adv$ by a QPT algorithm $\badv$ that does not have access to $R^c_{\phi_k}$ but requires additional copies of $\ket{\phi_k}$.\footnote{In fact,  by \cref{thrm:reflect-jls}, $\badv$ requires additional copies of $\ket{\phi'_k}=\ket{1}\otimes\ket{\phi_k}$, since $R^c_{\phi_k}=R_{\phi'_k}$. However, $\badv$ may easily transform $\ket{\phi_k}$ into $\ket{\phi'_k}$.} More concretely, $\badv$ satisfies the following
    \begin{equation}\label{eq:uncl-advantage-emulator}
        \TD\left(\adv^{\reflect^c_{\phi_k}}(1^\kappa,\ket{\phi_k}^{\otimes m})\otimes\ket{\phi_k}^{\otimes 8q^2/\eta^2}, 
        \badv(1^\kappa,\ket{\phi_k}^{\otimes m}\otimes \ket{\phi_k}^{\otimes 8q^2/\eta^2})\right)\leq \eta/2.
    \end{equation}

    By combining \cref{eq:uncl-advantage,eq:uncl-advantage-emulator}, $\badv$ constitutes a successful attack against the (standard) unclonability of $\Phi$: On input $m+8q^2/\eta^2$ copies of $\ket{\phi_k}$, it produces $m+1+8q^2/\eta^2$ with advantage at least $\eta/2$.
\end{proof}

\iffalse
\begin{proof}[Proof sketch]
Proof schema
\begin{enumerate}
    \item First, use many copies of the unclonable state and the symmetric subspace projection to simulate the projection into the unclonable state. 
    \item Then, use the trace distance guarantee of the simulation to argue that if the game is won with high probability if we use the true projection instead of the simulation, then the same must hold if we use the simulation, upto a small inverse polynomial change in the winning probabilities.
\end{enumerate}    
\end{proof}

\fi

\begin{theorem}[{\cite[Theorem 2]{C:JiLiuSon18}}]\label{thm:PRS-implies-uncl-states}
    PRSGs imply UCSGs.
\end{theorem}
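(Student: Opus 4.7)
The plan is to prove the statement by contradiction: assuming that a PRSG $\{\ket{\phi_k}\}_{k \in \keyspace}$ admits an efficient cloner $\adv$ violating the unclonability property of \Cref{def:unclonable}, we will build a QPT distinguisher that tells apart (many copies of) PRS states from (many copies of) Haar random states, contradicting the pseudorandomness property. This is the standard approach of Ji, Liu, and Song, leveraging the fact that Haar random states are statistically unclonable, while PRS states must support a successful cloner under our hypothesis.

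First, I would suppose that there exist a QPT $\adv$ and a polynomial $m := m(\kappa)$ such that, for infinitely many $\kappa$,
\[
    \Expct_{k \gets \keyspace_\kappa}\left[ \Tr\left( \ketbra{\phi_k}{\phi_k}^{\otimes m+1} \cdot \adv(1^\kappa, \ket{\phi_k}^{\otimes m}) \right) \right] \geq \eta(\kappa),
\]
where $\eta$ is a non-negligible function. I then construct a distinguisher $\mathcal{D}$ that, on input $T := 2m+1$ copies of a state $\ket{\psi}$, feeds the first $m$ copies to $\adv$ to obtain a state $\sigma$ on $m+1$ registers, and then estimates $\Tr(\ketbra{\psi}{\psi}^{\otimes m+1} \sigma)$ using the remaining $m+1$ copies of $\ket{\psi}$ via a register-wise SWAP test (or, equivalently, a projection onto the symmetric subspace of each output register paired with a fresh copy of $\ket{\psi}$). $\mathcal{D}$ accepts if all $m+1$ tests succeed.

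Next, I would compute the two acceptance probabilities of $\mathcal{D}$. When the input is drawn from the PRS, linearity of expectation combined with the hypothesis on $\adv$ implies that $\mathcal{D}$ accepts with probability at least a fixed polynomial function of $\eta$, which remains non-negligible. When the input is Haar random of dimension $d = 2^{n(\kappa)}$ (with $n$ at least $\kappa$ say), I invoke the standard \emph{Haar cloning bound}: for every channel $\Lambda$ that maps $m$ to $m+1$ registers,
\[
    \Expct_{\ket{\psi} \sim \mu_{\text{Haar}}}\!\left[ \bra{\psi}^{\otimes m+1} \Lambda(\ketbra{\psi}{\psi}^{\otimes m}) \ket{\psi}^{\otimes m+1} \right] \;\leq\; \frac{\binom{d+m-1}{m}}{\binom{d+m}{m+1}} \;=\; \frac{m+1}{d+m},
\]
which follows from the fact that $\int \ketbra{\psi}{\psi}^{\otimes n} d\psi$ equals $\binom{d+n-1}{n}^{-1}$ times the symmetric subspace projector together with the monotonicity of this projector under the trace. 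Since $d$ is exponential in $\kappa$, this upper bound is negligible, so $\mathcal{D}$ accepts Haar-random inputs only with negligible probability.

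Combining the two, $\mathcal{D}$ distinguishes $(2m+1)$ copies of a PRS state from $(2m+1)$ copies of a Haar random state with non-negligible advantage, contradicting the pseudorandomness of $\{\ket{\phi_k}\}_k$ (which guarantees indistinguishability for any polynomially many copies). The main technical obstacle to anticipate is the careful quantitative accounting: the register-wise SWAP-test estimator of $\Tr(\ketbra{\psi}{\psi}^{\otimes m+1} \sigma)$ introduces multiplicative factors of $1/2$ per register, but since $m$ is a fixed polynomial this only costs an extra $2^{-(m+1)}$ factor in the PRS-side acceptance, which is still non-negligible. Using the more sophisticated ``projector onto $\ket{\psi}$ via symmetric subspace with $t = \poly(\kappa)$ auxiliary copies'' test avoids even this loss, if desired.
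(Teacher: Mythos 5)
The paper does not actually prove this statement; it imports it verbatim as Theorem 2 of Ji--Liu--Song, so the only comparison available is against their argument, which your proposal correctly identifies at a high level: turn a cloner into a distinguisher against Haar, using the fact that the optimal $m\to m+1$ Haar cloning fidelity is $\frac{m+1}{d+m}$ (your bound is stated correctly). The structure is the right one.

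There is, however, a genuine quantitative gap in the test you use to detect successful cloning. With register-wise SWAP tests, the probability that all $m+1$ tests pass equals $2^{-(m+1)}\sum_{T\subseteq[m+1]}\Tr\bigl(\sigma\,(\ketbra{\psi}{\psi}_T\otimes I_{\bar T})\bigr)$, which creates two problems. First, the coefficient $2^{-(m+1)}$ multiplying the fidelity term $\Tr(\ketbra{\psi}{\psi}^{\otimes m+1}\sigma)$ is \emph{negligible} for a general polynomial $m=m(\kappa)$ (e.g.\ $m=\kappa$), contrary to your claim that it is ``still non-negligible''; that claim only holds for $m=O(\log\kappa)$. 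Second, and independently, the remaining $2^{m+1}-1$ terms (e.g.\ the $T=\emptyset$ term, which equals $1$) do not vanish on the Haar side, since a cloner can trivially make up to $m$ output registers pass their individual tests by forwarding its inputs; so the distinguishing advantage is not lower-bounded by $\eta\cdot 2^{-(m+1)}-\negl$ in the first place, and the Haar-side acceptance is not negligible as you assert. The fix you mention in passing --- replacing the SWAP tests by a projection of the $m+1$ output registers together with $t=\poly(\kappa)$ fresh copies onto the symmetric subspace, which approximates the projector onto $\ketbra{\psi}{\psi}^{\otimes m+1}$ up to an \emph{additive} $O(\poly(m)/t)$ error rather than a multiplicative $2^{-(m+1)}$ loss --- is not an optional refinement but the essential ingredient; without carrying it out (including the Haar-side upper bound for that test), the argument as written does not establish a non-negligible distinguishing advantage.
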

\begin{remark}
    In~\cite[Theorem 5]{C:JiLiuSon18}, it was shown that PRSGs imply strong UCSGs. 
    However, they did not show that UCSGs imply strong UCSGs, which we observe in \Cref{thm:uncl-state-imply-strong-uncl-states}. %\tamer{they show that prs implies strong prs, which is the same spirit of what we want to show, and essentially has the same proof.}
\end{remark}

%Finally, it was shown in 

\begin{corollary}
     For any fixed $O\in\oracle$, strong UCSGs, UCSGs and PRSGs do not exist relative to $O, \qpspace$. 
     Hence, we conclude that strong UCSGs, UCSGs and PRSGs are separated from \qefid pairs (see \Cref{def:efi}).
\end{corollary}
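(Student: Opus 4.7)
The plan is to derive the corollary as an immediate consequence of the chain of implications established just above it, combined with the impossibility of OWSGs relative to $(O,\qpspace)$ from \Cref{thm:break-OWSG} and the existence of \qefid pairs from \Cref{thrm:efi-exist}. Concretely, I would observe the following chain of black-box implications: \Cref{thm:PRS-implies-uncl-states} gives $\text{PRSGs} \Rightarrow \text{UCSGs}$, \Cref{thm:uncl-state-imply-strong-uncl-states} upgrades UCSGs to strong UCSGs, and \Cref{thm:Strong-uncl-states-imply-EVOWSG} gives $\text{strong UCSGs} \Rightarrow \text{OWSGs}$. Since all three implications are fully black-box and, crucially, their proofs (in particular the reflection-emulation argument of \Cref{thm:uncl-state-imply-strong-uncl-states} and the swap-test construction of \Cref{thm:Strong-uncl-states-imply-EVOWSG}) go through unchanged when every primitive is considered relative to the same oracles, these implications lift to the oracle-relative world $(O,\qpspace)$.

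The first step, then, is to argue the contrapositive: if any of PRSGs, UCSGs or strong UCSGs existed relative to $(O,\qpspace)$ for some fixed $O\in\oracle$, the chain above would yield an OWSG relative to $(O,\qpspace)$. But \Cref{thm:break-OWSG} rules out OWSGs relative to $(O,\qpspace)$ for every fixed $O\in\oracle$, a contradiction. This establishes the first sentence of the corollary.

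For the second sentence, the separation from \qefid pairs, I would invoke \Cref{thrm:efi-exist}, which shows that with probability $1$ over the choice of $O\gets\oracle$, exponentially-hard \qefid pairs exist relative to $(O,\qpspace)$. In particular, there exists at least one $O^\star\in\oracle$ for which \qefid pairs exist while, by the preceding paragraph, strong UCSGs, UCSGs and PRSGs do not exist relative to $(O^\star,\qpspace)$. This yields the claimed oracle separation between \qefid pairs on the one hand and each of these three primitives on the other.

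I do not expect any real obstacle: every ingredient is already in place, and the argument is a short composition of existing results. The only mild subtlety to double-check is that the three implication theorems relativize cleanly, i.e., that the generic constructions and reductions used in their proofs do not depend on the oracle structure; inspecting the proofs of \Cref{thm:Strong-uncl-states-imply-EVOWSG,thm:uncl-state-imply-strong-uncl-states,thm:PRS-implies-uncl-states} confirms this, since each is fully black-box in the underlying primitive and treats the oracles $(O,\qpspace)$ merely as additional gates available to all parties.
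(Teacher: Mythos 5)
Your proposal is correct and follows exactly the route the paper takes: chaining \Cref{thm:PRS-implies-uncl-states}, \Cref{thm:uncl-state-imply-strong-uncl-states}, and \Cref{thm:Strong-uncl-states-imply-EVOWSG} into \Cref{thm:break-OWSG}, and then invoking \Cref{thrm:efi-exist} for the separation from \qefid pairs. The paper states this in one line; your added check that the implications relativize is a reasonable (and correct) elaboration of the same argument.
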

The proof is immediate by combining \Cref{thm:PRS-implies-uncl-states,thm:uncl-state-imply-strong-uncl-states,thm:Strong-uncl-states-imply-EVOWSG} with \Cref{thm:break-OWSG}.

\begin{remark}\label{rem:uncl-non-telegraphable-one-wayness}
    An incomparable but related result is~\cite{NZ24} where the authors separated UCSGs (or unclonable states as referred to in~\cite{NZ24}) and non-telegraphable states. Since non-telegraphy seems to imply one-wayness, which implies EFI pairs~\cite{STOC:KhuTom24}, intuitively, it seems that \cite{NZ24,STOC:KhuTom24} together imply a separation between UCSGs and EFI pairs. However, this is not true because: 1) the definition of non-telegraphable and unclonable states considered in~\cite{NZ24}, only allows the adversary to get a single copy of the state in the respective security games, due to which the resulting OWSGs from non-telegraphable states are also single-copy secure, but all existing constructions~\cite{STOC:KhuTom24,BJ24} of EFI pairs from OWSGs require multi-copy security, 2) we consider a multi-copy definition of UCSGs where the cloner in the cloning game can get any polynomial number of copies of the state, hence ruling out single-copy secure UCSGs as shown in~\cite{NZ24} does not rule out all UCSGs as per our definition.
    %$1$-to-$2$ uncl there is a simple proof for the above-mentioned intuitive implications, but while working with the general definition of OWSGs, where the adversary gets multiple copies of the state, the proof does not holds anymore. it is not clear how to formalize some of these intuitions; even the construction of EFI and OWSG shown in~\cite{STOC:KhuTom24} requires a multi-copy secure version of OWSG. %\amit{Self: Go over~\cite{NZ24} to confirm this is true regarding their definitions.}
    
    %non-telegraphable states imply EFI, as shown in a previous paper by Barak and Mark. However, their notion of cloning is pretty restrictive as they only consider 1-to-2 cloning. More importantly, in the definition of non-telegraphable states, they only consider giving a single copy of the state to the adversary. Therefore, with such a definition, it is unclear how non-telegraphable states imply one-way states, where in the one-wayness game the adversary can ask for any arbitrary polynomially many copies.@Tamer, we need to add this result by Barak and Mark as a remark in the uncloanble states section to avoid confusion.
\end{remark}

\fi

\section{Separating Private-key Quantum Money from {\qefid}}\label{sec:separating-qmoney}

%It might be tempting to think that private-key quantum money with mixed states implies OWSG but we note that this is true only if the quantum money is a scheme with pure money states where the verification is the rank-$1$ projection into the money state, as such a special quantum money scheme implies strongly unclonable states and hence one-way states. However, it is unknown whether general quantum money with mixed quantum states implies unclonable states or OWSG. This also becomes evident when we try to use the same attack as the one used to break OWSG in the proof of \Cref{thm:efid-vs-owsg}, and therefore, we have to change the algorithm and use shadow tomography instead of the gentle search algorithm.

In this last section, we prove our second main separation result, namely that between private-key quantum money schemes and \qefid pairs. Here, in contrast to our oracle separation of OWSGs from \qefid pairs that is laid down in the previous sections, we are able to establish a weaker notion of separation, specifically a \emph{fully black-box separation}.

Fully black-box separation means the impossibility of fully black-box constructions. These include any realization of a cryptographic primitive based on another (referred to as the base primitive), where the construction uses the algorithms underlying the base primitive in a black-box way (i.e. independently of their implementation) and, additionally, the security reduction breaks any implementation of the base primitive given a black-box access to an adversary that breaks the construction based on that implementation. For a formal definition, we refer the reader to \Cref{sec:fbb-def}.

To rule out a fully black-box construction of private-key quantum money schemes from \qefid pairs it is sufficient, then, to devise an oracle world where \begin{enumerate*}[label=(\roman*)]
    \item \label{item:efid-exist} there exists a \qefid construction that is secure against polynomial-query adversaries, even if unbounded in runtime, and, on the other hand,
    \item any private-key quantum money scheme may be broken by a polynomial-query attack that is possibly unbounded otherwise.
\end{enumerate*} This is a relaxation compared to an oracle separation where we consider polynomially bounded adversaries, both in queries and runtime (where the ``unboundedness'' may be ``pushed'' entirely to the oracles). For an explanation on why we are able to achieve only a fully black-box separation in the private quantum money case, and the challenges in obtaining an oracle separation, we refer the reader to the technical overview in \Cref{sec:techniques}.

The oracle world through which we separate between private-key quantum money schemes and \qefid pairs consists of an oracle sampled from the distribution $\oracle$ that we defined in \Cref{def:S-oracle}. Recall, the oracle $\oracle$ has been useful to establish our previous separation. Specifically, it provided us with the source of hardness needed to build a \qefid pair that is secure against any $\oracle$-aided adversary that makes a bounded number of queries to its oracle, but is otherwise unbounded; see \Cref{lem:computational-indistinguishability-EFID-pair}. Hence, we already have \Cref{item:efid-exist} in hand (namely the existence of statistically-secure \qefid pairs) and it remains to show how to break any private-key quantum money scheme relative to $\oracle$.

To that end, we will use \emph{shadow tomography}~\cite{Shadow2}, which is a powerful tool that, roughly speaking, enables estimating the outcome of exponentially many POVMs on a given quantum state given only polynomially many copies of that states. In particular, shadow tomography can be seen as a strengthening of the gentle search procedure which we used to break any OWSG in \Cref{thm:break-OWSG}.

We will use the following result from~\cite{Shadow2}.

\begin{theorem}[{\cite[Theorem 2, Problem 1]{Shadow2}}]\label{thm:shadow-tomography-estimates}
    Let $\epsilon,\delta\in \mathbb{R}$ and $D,M\in\NN$.
    There exists $k\in\tilde{O}(\log D\cdot \log^4 M\cdot \log(1/\delta) / \epsilon^4)$, where $\tilde{O}$ notation hides a $\poly(\log\log(M),\log\log(D),\log(1/\epsilon))$ factor, and a quantum algorithm that takes as input $k$ copies of a (possibly mixed) $D$-dimensional quantum state $\rho$, as well as a list of binary-values POVMs $\Pi_1,\ldots, \Pi_M$,
    %each of which accepts $\rho$ with probability $\Tr(\Pi_i\rho)$ and rejects with probability $1-\Tr(\Pi_i\rho)$,
    and outputs real numbers $b_1,\ldots,b_M\in [0,1]$ such that, with probability at least $1 - \delta$, for every $i\in [M]$,
    \ifnum\submission=0
    $$|b_i-\Tr(\Pi_i\rho)|\leq \epsilon.$$
    \else
        $|b_i-\Tr(\Pi_i\rho)|\leq \epsilon$.
    \fi
% Tr (Eiρ) and rejects ρ with probability 1 − Tr (Eiρ), output numbers $b_1,\ldots, b_M \in  [0, 1]$ such that
% |bi − Tr (Eiρ)| ≤ ε for all i, with success probability at least 1 − δ. Do this via a measurement of
% ρ
% ⊗k
% , where k = k (D, M, ε, δ) is as small as possible.
\end{theorem}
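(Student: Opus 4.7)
The plan is to follow Aaronson's shadow tomography strategy, which combines an online learning meta-algorithm for quantum states with a quantum analogue of the ``find if anyone disagrees'' subroutine. The algorithm maintains a hypothesis density matrix $\sigma_t$, initialized at $\sigma_1 = I/D$, and iteratively updates it so that after few rounds $\sigma_T$ approximately agrees with $\rho$ on all of the POVMs $\Pi_1,\dots,\Pi_M$. In each round $t$, one uses polynomially many copies of $\rho$ together with the (classically known) hypothesis $\sigma_t$ to detect the existence of some index $i_t$ with $|\Tr(\Pi_{i_t}\rho)-\Tr(\Pi_{i_t}\sigma_t)|>\epsilon/2$; if no such index exists, the algorithm outputs $b_i:=\Tr(\Pi_i\sigma_t)$ for all $i$, and otherwise it estimates $\Tr(\Pi_{i_t}\rho)$ and feeds the pair $(\Pi_{i_t},\tilde b_{i_t})$ to an online learner that produces $\sigma_{t+1}$.

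The first technical ingredient I would invoke is a \emph{Quantum OR bound} (the HLM-style lemma from~\cite{HLM17} and its refinement by Aaronson--Rothblum), which says that given a list of two-outcome POVMs $\{\Pi_i\}$ together with two real numbers $c>s$, one can, using $O(\log M/(c-s)^2)$ copies of $\rho$, distinguish between ``there exists $i$ with $\Tr(\Pi_i\rho)\ge c$'' and ``for all $i$, $\Tr(\Pi_i\rho)\le s$,'' while at the same time producing the witnessing index $i$ when the former holds, and doing so \emph{gently} so the surviving copies of $\rho$ remain usable. Applying this to the shifted POVMs $\Pi_i':=\tfrac12(I+\Pi_i-\Tr(\Pi_i\sigma_t)I)$ (so that $\Tr(\Pi_i'\rho)$ encodes the signed discrepancy) gives the desired ``find a disagreement or certify agreement'' subroutine using only $\tilde O(\log M/\epsilon^2)$ copies per round.

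The second ingredient is an online learning regret bound, specifically the quantum matrix multiplicative weights / hedging algorithm of Hazan--Kale, which shows that if at each round $t$ the learner is told a POVM $\Pi_{i_t}$ together with its true expectation under $\rho$ (up to $\epsilon/4$ additive error), then after $T=O(\log D/\epsilon^2)$ rounds the hypothesis $\sigma_{T+1}$ is $\epsilon$-close to $\rho$ on \emph{every} POVM in the input list (since we could only have made a mistake when a POVM in the list witnessed a large discrepancy). This bounds the number of iterations by $T = O(\log D/\epsilon^2)$, and each round spends $\tilde O(\log M/\epsilon^2)$ copies for OR-finding plus $O(1/\epsilon^2)$ copies for amplitude estimation of $\Tr(\Pi_{i_t}\rho)$, giving a total copy count of $\tilde O(\log D\cdot \log^4 M / \epsilon^4)$ after accounting for the amplification of the OR step to succeed over all $T$ rounds with probability $1-\delta$ via a union bound (the $\log^4 M$ factor comes from pushing the OR failure probability to $1/\mathrm{poly}(T,\log(1/\delta))$ and from the amplification of its soundness/completeness gap).

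The main obstacle is the \emph{gentleness} of the OR subroutine. Naively, each invocation of the OR test consumes or disturbs its input copies, so running $T$ rounds sequentially against the same copies would destroy them long before the learning stabilizes. The fix, due to Aaronson, is to use a Marriott--Watrous style coherent test together with the gentle measurement lemma so that, when the test accepts, the post-measurement state is $O(\sqrt{1-p})$-close in trace distance to the input; calibrating the acceptance threshold $c$ close to $1$ (by first reducing to a ``large discrepancy'' regime via repetition inside each POVM) keeps the cumulative disturbance bounded by $O(\epsilon)$ across all $T$ rounds. Once this gentleness bookkeeping is set up, chaining the $T$ rounds and union-bounding the failure probabilities of OR and of the amplitude estimations yields the claimed bound of $\tilde O(\log D\cdot \log^4 M\cdot \log(1/\delta)/\epsilon^4)$ copies.
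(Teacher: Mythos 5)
This theorem is not proved in the paper at all: it is imported verbatim as \cite[Theorem 2, Problem 1]{Shadow2} and used as a black box in the attack on private-key quantum money, so there is no in-paper proof to compare against. Your sketch is a broadly faithful reconstruction of the proof in the cited source: the online-learning loop with a hypothesis state updated by matrix multiplicative weights (mistake bound $O(\log D/\epsilon^2)$), combined with a gentle quantum OR / search subroutine ($\tilde O(\log M/\epsilon^2)$ copies per round) to find a disagreeing POVM, is exactly Aaronson's argument, and your final accounting $\tilde O(\log D\cdot\log^4 M\cdot\log(1/\delta)/\epsilon^4)$ matches the stated bound. One bookkeeping point is slightly off: in the actual proof each round consumes a \emph{fresh} batch of copies (which is precisely why the total is the product of rounds and copies-per-round), so gentleness is not needed to preserve copies \emph{across} rounds as you suggest; it is needed \emph{within} a single round, so that the OR test can both detect a discrepancy and then identify the witnessing index $i_t$ by repeated gentle applications to the same batch without destroying it. With that correction, and noting that the regret bound is from Aaronson--Chen--Hazan--Kale--Nayak rather than Hazan--Kale alone, your outline is an accurate account of the external result the paper relies on.
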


{\renewcommand{\QQ}{\mathbf{M}}

Equipped with \Cref{thm:shadow-tomography-estimates}, we are able to demonstrate a statistical attack that breaks any private-key quantum money scheme relative to any oracle $O\in \oracle$.

\begin{theorem}\label{thm:statistical-break-quantum-money}
For any $O\in\cO$ and any private-key quantum money scheme $\QQ$ relative to $O$ that satisfies correctness (as defined in \Cref{def:quantum-money}), there exists a polynomial-query oracle-aided adversary $\adv$ that breaks the unforgeability of $\QQ$.
\end{theorem}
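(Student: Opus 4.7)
The plan is to mount a statistical, query-efficient attack based on two carefully arranged applications of shadow tomography (\Cref{thm:shadow-tomography-estimates}) over the keyspace $\{0,1\}^\kappa$ of $\QQ$. The first application computes, from the received money state $\$_{k^*}^{\otimes m}$, a ``fingerprint'' $\{b_k^*\}_{k\in\{0,1\}^\kappa}$, where each $b_k^*$ estimates the probability that $\$_{k^*}$ is accepted by $\verify(k,\cdot)$. The second application computes, from the all-zero state, an analogous two-dimensional fingerprint $\{b_{k',k}\}_{k',k\in\{0,1\}^\kappa}$ estimating the probability that $\mint(k')$ produces a state accepted by $\verify(k,\cdot)$. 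The adversary then brute-forces a candidate key $k'$ whose row $\{b_{k',k}\}_k$ approximately matches $\{b_k^*\}_k$, and outputs sufficiently many copies of $\mint^O(k')$. The intuition is that any such $k'$ must, in particular, match $b^*_{k^*}$, which is close to $1$ by correctness, and hence money minted under $k'$ passes verification under $k^*$ with high probability. Existence of a good $k'$ is guaranteed since the choice $k' = k^*$ itself satisfies the matching condition.

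Before applying \Cref{thm:shadow-tomography-estimates}, whose statement concerns oracle-free POVMs, I would first invoke \Cref{prop:reflect-emulate-efid} to obtain oracle-free circuits $\widetilde{\mint}$ and $\widetilde{\verify}$ that emulate $\mint^O$ and $\verify^O$ with error at most $1/\kappa$, by consuming a polynomial number of copies of each state $\ket{S_\lambda-}$, where $\lambda$ ranges over the polynomially many query lengths used by $\mint$ and $\verify$. These resource states are themselves prepared with $O(\kappa)$ queries to $O$ per copy via the procedure from \Cref{cor:phase-estimation}, which fails only with exponentially small probability. Any POVM that the attack submits to shadow tomography is then built as $\widetilde{\verify}$ and/or $\widetilde{\mint}$ acting on the input register together with the required resource states, which must be packaged together into the input state of shadow tomography.

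In the first round, the adversary applies \Cref{thm:shadow-tomography-estimates} with $M = 2^\kappa$ POVMs $\{\Pi_k^{(1)}\}_{k}$, where $\Pi_k^{(1)}$ runs $\widetilde{\verify}(k,\cdot)$, on input consisting of a copy of $\$_{k^*}$ tensored with freshly prepared copies of each $\ket{S_\lambda-}$, producing the estimates $\{b_k^*\}_k$. In the second round, the adversary applies shadow tomography with $M = 2^{2\kappa}$ POVMs $\{\Pi_{k',k}^{(2)}\}_{k',k}$, where $\Pi_{k',k}^{(2)}$ first runs $\widetilde{\mint}(k')$ on the all-zero register and then applies $\widetilde{\verify}(k,\cdot)$, on input consisting of an all-zero state tensored with fresh $\ket{S_\lambda-}$ copies, yielding $\{b_{k',k}\}_{k',k}$. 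Both rounds require only $\poly(\kappa)$ copies of their respective inputs by the parameters of \Cref{thm:shadow-tomography-estimates}. The adversary then searches for $k'$ with $|b_{k',k} - b_k^*| \le O(1/\kappa)$ for every $k$, and outputs $m + \Theta(\kappa)$ freshly minted copies $\$_{k'} \gets \mint^O(k')$. Since by correctness $b^*_{k^*} \approx 1$, matching forces $b_{k',k^*} \approx 1$, and a standard Chernoff bound then ensures at least $m+1$ of the output copies pass $\verify(k^*,\cdot)$ with overwhelming probability, breaking unforgeability.

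The main technical difficulty will be the quantitative parameter balancing: one must align four sources of error -- the reflection-emulation error from \Cref{prop:reflect-emulate-efid}, the shadow tomography precision in each of the two rounds, and the correctness gap of $\QQ$ -- so that their combined effect is strictly bounded away from $1$. This will force the number of copies of each $\ket{S_\lambda-}$ packaged into each shadow tomography input to grow polynomially with the total number of $O$-queries that appear across all POVM evaluations performed during both rounds. Since the number of POVMs in each round contributes only a $\polylog$ factor to the sample complexity of \Cref{thm:shadow-tomography-estimates}, while each single POVM evaluation makes polynomially many $O$-queries, the required polynomial is well-defined and the overall query complexity to $O$ remains polynomial, yielding the desired fully-black-box attack.
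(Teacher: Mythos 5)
Your attack is essentially the paper's: de-oraclize $\mint$ and $\verify$ via reflection emulation, prepare the resource states $\ket{S_\lambda-}$ by the projection procedure of \Cref{cor:phase-estimation}, run one shadow-tomography pass over $\{\widetilde{\verify}(k,\cdot)\}_k$ on the given money state and a second pass over the doubly-indexed POVMs that mint under $k'$ and verify under $k$, then match fingerprints and re-mint under the matching key. This is exactly the structure of the paper's adversary (its algorithms $\estver$, $\estvm$, and the combined mint-then-verify POVM $\VM$).

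The one genuine slip is in the final amplification step. The correctness notion in \Cref{def:quantum-money} only requires $\Pr[\verify(k,\$_k)=1]\geq\mu(\kappa)$ for an \emph{inverse-polynomial} $\mu$ (and only on average over $k\gets\kgen$), so you are not entitled to $b^*_{k^*}\approx 1$; the fingerprint matching only guarantees that states minted under $k'$ pass verification under $k^*$ with probability roughly $\mu-O(\mu/\kappa)$. Outputting $m+\Theta(\kappa)$ copies then yields an expected number of accepted registers of about $\mu\cdot(m+\Theta(\kappa))$, which is far below the required $m+1$ when $\mu$ is small, so Chernoff does not save you. The unforgeability game permits outputting any polynomial number $m'>m$ of registers as long as at least $m+1$ pass, and the paper exploits this by setting $m'=\lceil 2\kappa m/(\mu(1-10/\kappa))\rceil$; all of your error tolerances (emulation error, tomography precision, matching threshold) must likewise be calibrated relative to $\mu$ (the paper uses $\mu/\kappa$-scale errors throughout) rather than to absolute $O(1/\kappa)$ quantities. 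With that rescaling your argument goes through and coincides with the paper's proof.
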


Interestingly, we stress that our statistical attack does not make any queries to the verification oracle. \Cref{thm:statistical-break-quantum-money} in combination with \Cref{thrm:efi-exist}, gives us the following corollary: in any fully black-box construction of private quantum money schemes from \qefid pairs, the reduction either has complexity greater than $2^{\sec/51}$ or success probability smaller than $2^{-\sec/51}$ (see \Cref{sec:fbb-def} for formal definitions). \ifnum\submission=1 We give a proof to this implication in \Cref{sec:qmoney-proofs}.\fi

\begin{restatable}[Formal Restatement of \Cref{thm:efid-vs-money}]{corollary}{efidqmoney}
    \label{cor:efid-qmoney}
    Private-key quantum money schemes are $2^{\sec/51}$-separated from \qefid pairs.
\end{restatable}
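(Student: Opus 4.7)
The plan is to derive \Cref{cor:efid-qmoney} by the standard black-box argument, combining the positive result on \qefid pairs (\Cref{thrm:efi-exist}) with the statistical attack on private-key quantum money schemes (\Cref{thm:statistical-break-quantum-money}). By \Cref{thrm:efi-exist}, with probability $1$ over $O\gets\oracle$ there exist $(2^{\sec/50},2^{-\sec/50})$-\qefid pairs relative to $(O,\qpspace)$; I would fix one such oracle $O\in\oracle$ and let $(D_0^O,D_1^O)$ denote the corresponding \qefid pair.

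Suppose, for contradiction, that there is a fully black-box construction of a private-key quantum money scheme $\QQ^{(\cdot)}$ from \qefid pairs whose security reduction $R$ achieves query/time complexity at most $2^{\sec/51}$ and forgery-to-distinguishing success at least $2^{-\sec/51}$ (this is the contrapositive of ``$2^{\sec/51}$-separated'' as spelled out in \Cref{sec:fbb-def}). Instantiating the construction with $(D_0^O,D_1^O)$ yields a private-key quantum money scheme $\QQ^{(D_0^O,D_1^O)}$ relative to $(O,\qpspace)$. By \Cref{thm:statistical-break-quantum-money}, there is a polynomial-query oracle-aided adversary $\adv$ that breaks the unforgeability of $\QQ^{(D_0^O,D_1^O)}$ with non-negligible advantage (indeed, the attack there uses no verification queries and only $\poly(\sec)$ queries to $O$). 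The black-box reduction $R^{\adv}$ must then distinguish $D_0^O$ from $D_1^O$ with probability at least $2^{-\sec/51}$.

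Now I would bound the total number of queries that $R^{\adv}$ makes to $O$. Since $R$ treats $\adv$ as a black box, each invocation contributes at most $q_{\adv}(\sec)\in\poly(\sec)$ queries to $O$, and $R$ invokes $\adv$ at most $2^{\sec/51}$ times in addition to issuing at most $2^{\sec/51}$ of its own oracle queries. Hence the overall query complexity of $R^{\adv}$ to $O$ is at most $2^{\sec/51}\cdot(\poly(\sec)+1)\ll 2^{\sec/50}$ for all sufficiently large $\sec$, while its distinguishing advantage is at least $2^{-\sec/51}\gg 2^{-\sec/50}$. This contradicts the $(2^{\sec/50},2^{-\sec/50})$-security of the \qefid pair relative to $O$, finishing the proof.

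The main subtlety, and where I would have to be careful, is aligning the quantitative parameters with the definition of ``$T$-separation'' formalized in \Cref{sec:fbb-def}: specifically, ensuring that the notion accounts for query complexity (not only runtime) of $R$, and that the black-box simulation of $\adv$ by $R$ indeed induces the additive query accounting used above. The statistical (unbounded-time) nature of $\adv$ causes no issue because the \qefid security we invoke is itself statistical, restricting only the number of $O$-queries made by the distinguisher; the $\qpspace$ oracle is a ``free'' resource on both sides of the argument and is inherited by $R^{\adv}$ when breaking the \qefid pair.
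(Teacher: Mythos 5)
Your proposal is correct and follows essentially the same route as the paper: instantiate the hypothetical fully black-box construction with the $(2^{\sec/50},2^{-\sec/50})$-hard \qefid pair of \Cref{thrm:efi-exist}, feed the polynomial-query statistical forger of \Cref{thm:statistical-break-quantum-money} to the reduction, and observe that the composed distinguisher makes only $2^{\sec/51}\cdot\poly(\sec)=o(2^{\sec/50})$ queries to $O$ while achieving advantage $2^{-\sec/51}\gg 2^{-\sec/50}$, contradicting \Cref{lem:computational-indistinguishability-EFID-pair}. Your closing remarks on the parameter bookkeeping and on $\qpspace$ being free on both sides are exactly the points the paper's proof also handles.
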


\submitorfull{}{efid-vs-qmoney}

The remaining of this section is dedicated to the proof of \Cref{thm:statistical-break-quantum-money}.

\subsection{Proof of \texorpdfstring{\Cref{thm:statistical-break-quantum-money}}{Theorem 6.2}}

Fix $O\in \cO$. Let $\QQ=(\kgen,\mint,\verify)$ be a private-key quantum money scheme relative to $O$. %with keyspace $\keyspace=\{\keyspace_\kappa\subseteq\bin^\kappa\}_\kappa$, i.e., 
Let $\kappa$ is the security parameter for $\QQ$. Let $n:=n(\kappa)$ denote the size (in qubits) of a money state output by $\mint$ on input key of length $\kappa$. Let $\mu:=\mu(\kappa)\in[0,1]$ be an inverse polynomial function representing the correctness of $\QQ$. Since $\mint$ and $\verify$ are QPT algorithms, they only make polynomially many queries to $O$. Let $q:=q(\kappa)$ be a polynomial bound on the query complexity of both $\mint$ and $\verify$. Let $\overline{\sec}:=\overline{\sec}(\kappa)$ be the polynomial bound on the length of the queries made by $\mint$ and $\verify$. That is, for any $\kappa\in\NN$ and any 
$k\in\bin^\kappa$, %\mor{$k\in\keyspace_\kappa$? In the following, there are several similar typos}\amit{Actually, we meant $\bin^\kappa$ only, because it might be hard to find out or test whether a key is in the keyspace $\keyspace_\kappa$, so we do the attack for every string of length $\kappa$. If you think it is okay to assume that the entire $\keyspace_\kappa$ is easy to find or it is easy to test if a string of length $\kappa$ is in $\keyspace_\kappa$ then we can change every occurrence of $\bin^\kappa$ with $\keyspace_\kappa$ (I personally think it is okay to assume so, but would like to know your opinion on it.). By doing the attack for every string of length $\kappa$, we mean that the shadow tomography estimation is done for every potential key, i.e., every $k\in \bin^\kappa$, because to win the unforgeability game it is not important that the money state is generated from a key in $\keyspace_\kappa$, and similarly $\Pi_k$ for every $k\in \bin^\kappa$. $\Lambda_{k,k'}$ (and respectively $\Pi_k$) makes sense for every $k,k\in \bin^\kappa$ (respectively for every $k\in \bin^\kappa$) because $\mint$ and $\verify$ as quantum circuits should work on every input string of size $\kappa$. %wlog we can assume that for every key $k'$, $\mint$ outputs some state of the right size and on any key $k$ outputs an accept or failure because the $\mint$ circuit can be assumed to take as input a key of size $\kappa$ and outputs a quantum state of fixed size $n$ and hence should work on all inputs of the size $\kappa$ to output a $n$-qubit state, and the $\verify$ circuit is a circuit with a fixed input size and a binary valued measurement at the end. %, i.e., for every key $k$ if for some $k,k'$ the measurement $\Lambda_{k,k'}$ aborts then we either won't consider it or replace the outcome to be failure because in that case one of $k,k'$ is clearly outside of $\keyspace_\kappa$. }\amit{Okay on a second thought, the simplest fix would be to think of $\bin^\kappa$ as the fixed keyspace. This is because $\keyspace_\kappa$ is nothing special (it is not like the support for the $\keygen$ algorithm) and we can always extend the algorithms $\mint$ and $\verify$ as circuits that work on all strings $k\in \bin^\kappa$. }\amit{Changed the definition to $\bin^\kappa$, see \Cref{def:quantum-money}}
 $\mint(1^\kappa)$ and $\verify(k,\cdot)$ make at most $q(\kappa)$ queries to $O$, where each query is of length at most $\overline{\sec}(\kappa)$. 
Let 
\begin{equation}\label{eq:tau-def}
    \tau:=\bigotimes_{\sec=1}^{\overline{\sec}}\ketbra{S_\sec -}{S_\sec -}^{\otimes 2q^2\kappa^2/\mu^2}.
\end{equation}
%From now on, we will use $\tau$ to denote the state $\left(\bigotimes_{\sec=1}^{\overline{\sec}}\ketbra{S_\sec -}{S_\sec -}^{\otimes 2q^2\kappa^2}\right)$.

We invoke reflection emulation again to ``de-oraclize'' the algorithms $\mint$ and $\verify$: By \Cref{prop:reflect-emulate-efid}, since $O$ consists of reflection oracles and all queries made to $O$ are of length at most $\overline{\secp}$, there exist QPT algorithms $\widetilde{\mint}$ and $\widetilde{\verify}$ such that for every $\kappa\in \NN$, key $k\in \bin^\kappa$ and state $\rho$,\footnote{Note we define $\widetilde{\mint}$ to simulate $\mint$ while outputting the state $\tau$ that it receives as input. By \Cref{prop:reflect-emulate-efid}, the state $\tau$ does not get disturbed by too much. For the simulation of $\verify$, however, we do not care about preserving the state $\tau$, and therefore, we omit it from the output.}
\begin{align}\label{eq:mint-to-tilde-qmoney}
    &\TD[\mint^{O}(k)\otimes\tau,\widetilde{\mint}(k,\tau)]  \leq \frac{\mu}{\kappa},\\
    \label{eq:verify-to-tilde-qmoney}
    &\text{and}\quad \left| \Pr[\verify^{O}(k,\rho)=1] -\Pr[\widetilde{\verify}(k,\rho,\tau)=1] \right| \leq \frac{\mu}{\kappa}.
\end{align}
%\amit{$\mu$ change: was $1/\kappa$}
%Let the input registers of $\widetilde{}$
%Let $D,T$ be the output registers of $\tilde{\mint}(\cdot,\cdot)$.
Let $\VM$ denote the algorithm that takes as input two keys $k,k'\in\bin^\kappa$, for some $\kappa\in\NN$, and a $2\overline{\sec}q^2\kappa^2/\mu^2$-register state $\tau$ (which will be of the form of \Cref{eq:tau-def}), and behaves as follows:
\begin{description}
    \item[] $\VM(k,k',\tau)$:
    \if\submission=0
    \begin{enumerate}
        \item Compute $(\rho,\tau')\gets\widetilde{\mint}(k',\tau)$.
        \item Output $\widetilde{\verify}(k,\rho,\tau')$.\footnote{Note that the states $\rho$ and $\tau$ can be entangled.}
    \end{enumerate}
    \else
    \begin{enumerate*}
        \item Compute $(\rho,\tau')\gets\widetilde{\mint}(k',\tau)$.
        \item Output $\widetilde{\verify}(k,\rho,\tau')$.\footnote{Note that the states $\rho$ and {$\tau'$} can be entangled.}
    \end{enumerate*}
    \fi
\end{description}

Our attack against $\QQ$ consists of two invocation of shadow tomography, as formalized in \Cref{thm:shadow-tomography-estimates}. Let $m:=m(\kappa)=\kappa^{10}\cdot n(\kappa)/{\mu(\kappa)^4}$, which is polynomial since $\mu$ is inverse polynomial in $\kappa$. %\amit{$\mu$ changes: last phrase added}

First, we consider the collection of POVM elements $\{\Pi_k\}$ where, for any $k\in\bin^\kappa$, $\Pi_k$ corresponds to that $\widetilde{\verify(k,\cdot)}$ accepts, i,e., outputs $1$. By \Cref{thm:shadow-tomography-estimates}, there exists an algorithm $\estver$ that, for any $\kappa\in\NN$ and any $n(\kappa)$-qubit state $\rho$, takes as input $\left(\rho\otimes \tau\right)^{\otimes m(\kappa)}$%(in fact, we require $\tilde{O}(\frac{\kappa}{(1/\kappa)^4}\kappa^4\otimes n(\kappa))\ll m$ copies)
 and outputs estimates $\{b_\V(k,\rho)\}_{k\in \bin^\kappa}$ such that with probability $1-\frac{1}{2^\kappa}$, for every $k\in \bin^\kappa$,
\begin{equation}\label{eq:estimate-shadow-tomography-private-quantum-money}
|b_\V(k,\rho)-\Pr[{\widetilde{\verify}}(k,\rho, \tau))=1]|\leq \frac{\mu(\kappa)}{\kappa}.
\end{equation}%\amit{$\mu$ changes: was $1/\kappa$}

Second, we consider the collection of POVM elements $\{\Lambda_{k,k'}\}$ where, 
for any $(k,k')\in\{\bin^\kappa\times\bin^\kappa\}_{\kappa}$, $\Lambda_{k,k'}$ corresponds to that $\VM(k,k',\cdot)$ outputs $1$. By \Cref{thm:shadow-tomography-estimates} again, there exists an algorithm $\estvm$ that, on input $\tau^{\otimes m}$ outputs estimates $\{b_{\VM}(k,k')\}_{k,k'\in \bin^\kappa}$, such that with probability $1-\frac{1}{2^\kappa}$, for every $k,k'\in \bin^\kappa$,
\begin{equation}\label{eq:estimate-shadow-tomography-vm-private-quantum-money}
|b_{\VM}(k,k')-\Pr[\VM(k,k',\tau))=1]|\leq \frac{\mu(\kappa)}{\kappa}.
\end{equation}

Using $\estver$ and $\estvm$, we propose the following query-efficient forger $\adv$ against $\QQ$. 
For any $\kappa\in\NN$, on input $m(\kappa)$ copies of an $n(\sec)$-qubit money state $\$^*$, 
$\adv$ behaves as follows:
\begin{enumerate}
    \item For $\sec=1,\dots,\overline{\sec}$, $\adv$ runs $E^O(1^\sec)$ (from \cref{cor:phase-estimation}) $6q^2\kappa^2m(\kappa)/\mu(\kappa)^2$ times and obtains $3m$ copies of $\ketbra{S_\sec-}{S_\secp-}^{\otimes 2q^2 \kappa^2/\mu^2}$.
    Overall, this results in the state $\tau^{\otimes 3m}$, where $\tau$ is as defined in \Cref{eq:tau-def}.
    \item $\adv$ runs $\estver$ on $(\$^*\otimes \tau)^{\otimes m}$ and gets back estimates $\{b_\V(k,\$^*)\}_{k\in \bin^\kappa}$.\label{it:estimator-ver}
    \item $\adv$ runs $\estvm$ on $\tau^{\otimes 2m}$ and gets back estimates $\{b_{\VM}(k,k')\}_{k,k'\in \bin^\kappa}$.\label{it:estimator-vm}
    \item \label{step:find} Find the lexicographically first $k'$ such that 
    \begin{equation}\label{eq:constraint-finder}
    |b_{\V}(k,\$^*)-b_{\VM}(k,k')|\leq 5\mu(\kappa)/\kappa
    \end{equation}
    for all $k\in\bin^\kappa$. If such a $k'$ does not exist, abort. 
    \item Let $m'=\ceil{2{\kappa}m/\left(\mu\left(1-{10}/{\kappa}\right)\right)}$\footnote{Here $\ceil{\cdot}$ denotes the ceiling function on real numbers.}. Run $\mint^O(k')$ for $m'$ times and output all resulting states.\label{it:final-step}
    
 %    \item Using semi-definite programming or $\epsilon$-net brute force search for $\epsilon=\frac{1}{\kappa}$, $\adv$ computes a state $\tilde{\$}$ such that for every $k\in \bin^\kappa$
 %    \begin{equation}\label{eq:qmoney-sdp trick}
 % |\tilde{b}_k-\Pr[{\widetilde{\ver}}_k(\tilde{\$}))=1]|\leq \frac{1}{\kappa},
 %    \end{equation}
 %    This step requires $(2\kappa)^n$ time.
 %    \item Output $\tilde{\$}^{m'}$ where $m'=\frac{2 m}{\left(1-\frac{4}{\kappa}\right)}$.
\end{enumerate}

\Cref{thm:statistical-break-quantum-money} follows by the following lemma, which we fully prove in \Cref{sec:qmoney-attack-proof}.

\begin{restatable}{lemma}{qmoneyattack}\label{lem:qmoney-attack}
    The algorithm $\adv$ described above breaks the unforgeability of the arbitrary private-key quantum money scheme $\QQ$.
\end{restatable}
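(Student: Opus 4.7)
The plan is to prove, via careful error accounting, that the key $k'$ returned in Step \ref{step:find} yields freshly-minted money states that pass verification under the true key $k^*$ with probability at least $\mu(1 - O(1/\kappa))$; the lemma then follows by a Chernoff bound on the $m'$ independent mints in Step \ref{it:final-step}. I will work throughout under a ``good event'' $G$ on which: (a) the preparation of $\tau^{\otimes 3m}$ in Step 1 succeeds (by \Cref{cor:phase-estimation} and a union bound over the $O(\overline{\sec} q^2 \kappa^2 m / \mu^2)$ calls to $E^O$, $G$ fails here with probability at most $\poly(\kappa)/2^\kappa$), and (b) both shadow tomography runs deliver estimates within additive error $\mu/\kappa$ (each failing with probability at most $1/2^\kappa$ by \Cref{thm:shadow-tomography-estimates}, since we chose $m$ to accommodate $\epsilon = \mu/\kappa$ and $\delta = 1/2^\kappa$).

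First, I would show that on $G$, the true key $k^*$ itself satisfies the constraint \eqref{eq:constraint-finder} for every $k$, so Step \ref{step:find} does not abort. Since $\$^* = \mint^O(k^*)$ is precisely the input to which $\widetilde{\mint}(k^*,\tau)$ is emulated, \eqref{eq:mint-to-tilde-qmoney} gives $\TD(\$^* \otimes \tau,\ \widetilde{\mint}(k^*,\tau)) \le \mu/\kappa$, and hence for every $k$,
\[
\left|\Pr[\widetilde{\verify}(k,\$^*,\tau) = 1] - \Pr[\VM(k,k^*,\tau) = 1]\right| \le \mu/\kappa.
\]
Combining this with the $\mu/\kappa$ accuracy of each estimator yields $|b_\V(k,\$^*) - b_{\VM}(k,k^*)| \le 3\mu/\kappa < 5\mu/\kappa$, so $k^*$ witnesses non-abort and some $k'$ is returned.

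Next, for any $k'$ satisfying the constraint I would lower-bound the per-mint success probability. Specializing \eqref{eq:constraint-finder} to $k = k^*$ and unpacking the estimators and emulators in both directions:
\[
b_\V(k^*,\$^*) \ge \Pr[\verify^O(k^*,\$^*) = 1] - 2\mu/\kappa \ge \mu - 2\mu/\kappa,
\]
using correctness of $\QQ$, while on the other side
\[
\Pr[\verify^O(k^*,\mint^O(k')) = 1] \ge b_{\VM}(k^*,k') - 3\mu/\kappa
\]
by \eqref{eq:mint-to-tilde-qmoney}, \eqref{eq:verify-to-tilde-qmoney}, and the estimator accuracy. Chaining these through $|b_\V(k^*,\$^*) - b_{\VM}(k^*,k')| \le 5\mu/\kappa$ gives
\[
p := \Pr[\verify^O(k^*,\mint^O(k')) = 1] \ge \mu - 10\mu/\kappa = \mu\bigl(1 - 10/\kappa\bigr).
\]

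Finally, Step \ref{it:final-step} produces $m' = \lceil 2\kappa m / (\mu(1-10/\kappa))\rceil$ independent fresh mints under $k'$, each accepted by $\verify^O(k^*, \cdot)$ as an independent Bernoulli trial with success probability at least $p$. The expected number of accepted registers is $m'p \ge 2\kappa m$, so a Hoeffding/Chernoff bound shows fewer than $m$ accept only with probability $\exp(-\Omega(\kappa m))$; together with the $\poly(\kappa)/2^\kappa$ failure probability of $G$, the attack succeeds with overwhelming probability, and all oracle queries remain polynomially bounded as required. The main subtlety I expect is the bookkeeping: four distinct $\mu/\kappa$-sized slacks (two estimator errors and two emulation errors on each side) must be threaded through a single chain of inequalities so that the slack budget $5\mu/\kappa$ in \eqref{eq:constraint-finder} is simultaneously (i) large enough that $k^*$ is guaranteed to satisfy it and (ii) small enough that any $k'$ satisfying it still yields $p \ge \mu(1 - O(1/\kappa))$, and in particular one must track the entangled reuse of the shared $\tau$-registers inside $\VM$ to ensure the emulation bound of \eqref{eq:mint-to-tilde-qmoney} can be applied when passing from $\widetilde{\mint}(k',\tau)$ back to $\mint^O(k')\otimes\tau$.
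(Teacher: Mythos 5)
Your proposal is correct and follows essentially the same route as the paper's proof: establish a good event for the $\ket{S_\sec-}$ preparation and the two shadow-tomography runs, show $k^*$ itself satisfies the $5\mu/\kappa$ constraint so Step 4 does not abort, chain the estimator and emulation errors at $k=k^*$ to get $\Pr[\verify^O(k^*,\mint^O(k'))=1]\geq\mu(1-10/\kappa)$ for the returned $k'$, and finish with a Chernoff bound over the $m'$ independent mints. The error bookkeeping and the handling of the possibly-entangled $\tau$ registers inside $\VM$ match the paper's argument, and the slack budget you derive is consistent with the paper's choice of $m'$.
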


\submitorfull{}{qmoney-attack}

\ifnum\anonymous=1
\else
\paragraph{Acknowledgments.}
We want to thank Or Sattath for the helpful discussion on the specification of the $\qpspace$ machine oracle.

    \BeforeBeginEnvironment{wrapfigure}{\setlength{\intextsep}{0pt}}
    \begin{wrapfigure}{r}{100px}
        %\centering
        \includegraphics[width=100px]{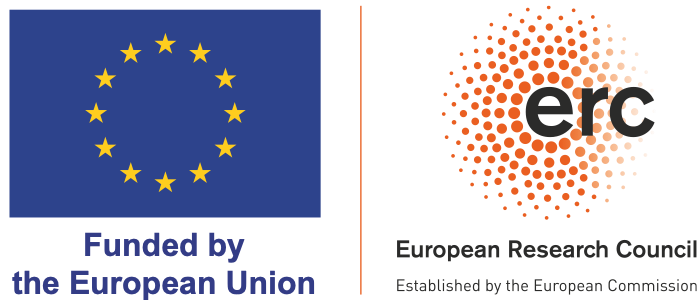}
    \end{wrapfigure}
            
    Amit Behera was funded by the Israel Science Foundation (grant No. 2527/24) and the European Union (ERC-2022-COG, ACQUA, 101087742). Views and opinions expressed are however those of the author(s) only and do not necessarily reflect those of the European Union or the European Research Council Executive Agency. Neither the European Union nor the granting authority can be held responsible for them.

    Giulio Malavolta is supported by the European Research Council through an ERC Starting Grant (Grant agreement No.~101077455, ObfusQation).

    Tomoyuki Morimae is supported by
    JST CREST JPMJCR23I3, 
    JST Moonshot R\verb|&|D JPMJMS2061-5-1-1, 
    JST FOREST, 
    MEXT QLEAP, 
    the Grant-in Aid for Transformative Research Areas (A) 21H05183,
    and 
    the Grant-in-Aid for Scientific Research (A) No.22H00522.

    Tamer Mour is supported by European Research Council (ERC) under the EU’s Horizon 2020 research and innovation programme (Grant agreement No. 101019547)
            
\fi

\ifnum\submission=0
\bibliographystyle{alpha} 
\else
\bibliographystyle{splncs04}
\fi
\bibliography{abbrev3,crypto,reference}

\ifnum\submission=1%I do not know why but it was \if0 initially, I changed to \submission for the eurocrypt submission.
\newpage
\vspace{2em}
\begin{center}
\textbf{\Large Supplementary Materials}
\end{center}
\fi

\appendix

\ifnum\submission=1
\section{Complementary Definitions}

\subsection{Defining Fully Black-Box Separation of Private-key Quantum Money Schemes from \qefid pairs}
\else

\section{Defining Fully Black-Box Separation of Private-key Quantum Money Schemes from \qefid pairs}

\fi

\label{sec:fbb-def}

We hereby define a fully black-box construction of private-key quantum money schemes from \qefid pairs. 
Roughly speaking, such a construction is called fully black-box if it makes only a black-box used of the underlying \qefid pair
and, additionally, the security reduction breaks the underlying \qefid pair
while making only black-box access to the presumed adversary against the quantum money scheme.

\begin{definition}[Fully Black-box Construction of Private-key Quantum Money Schemes from \qefid Pairs]\label{def:fbb-money}
Let $q:\NN\to\NN$ and $\epsilon:\NN\to[0,1]$. A \emph{$(q,\epsilon)$-fully black-box construction of private-key quantum money schemes
from \qefid pairs}, with key space $\keyspace=\{\keyspace_\kappa\}_{\kappa\in\NN}$, is a triple of polynomial-query oracle-aided algorithms $(\kgen,\mint,\verify)$ that satisfy the syntax of a private-key quantum money scheme (see \Cref{def:quantum-money}) and an oracle-aided reduction $\reduction$ satisfying the following properties:
\begin{itemize}
    \item \textbf{Construction Correctness:} For any unitary oracle $O$, $(\kgen,\mint,\verify)$ satisfy correctness relative to $O$. That is, there exists an inverse-polynomial function $\mu:\NN\to[0,1]$ such that, for any $\kappa\in\NN$,
    $$\Pr[\verify^O(k,\$_k);k\gets\kgen^O(1^\kappa),\$_k\gets\mint^O(k)]\geq\mu(\kappa).$$
    \item \textbf{Black-box Security Reduction:} For any unitary oracle $O$, any $O$-aided \qefid pair $(D_0^O,D_1^O)$ that is efficiently sampleable and statistically far as required by \Cref{def:efi}, and any non-uniform oracle-aided quantum adversary $\adv$, if there exist polynomials $m,m':\NN\to\NN$ where $m'>m$ such that, for infinitely many $\kappa\in\NN$,
    \begin{align*} 
        \Pr&[\exists S\subseteq [m'(\kappa)], |S|>m, \verify^O(k,\$_i)=1\forall i\in S;\\
        &k\gets \kgen^O(1^\kappa), {\$}_k\gets \mint^O(k), \$_{1,\cdots,m'(\kappa)}\gets\adv^{\verify^O(k,\cdot)}(1^\kappa,\$_k^{\otimes m(\kappa)})]>\frac{1}{2},
    \end{align*}
    (where $\$_{1,\cdots,m'(\kappa)}$ is a state on $m'(\kappa)$ registers and $\$_i$ denotes its $i^{th}$ register), then $\reduction$ breaks the computational indistinguishability of $(D^O_0,D^O_1)$ with advantage $\epsilon$. That is, for infinitely many $\sec\in\NN$,
    \[
        \left| \Pr_{x\gets D^O_0}[\reduction^{\adv,O}(x)=1] - \Pr_{x\gets D^O_1}[\reduction^{\adv,O}(x)=1] \right|\geq \epsilon(\sec).
    \]
    \item \textbf{Reduction Efficiency:} For any $\sec\in\NN$ and $y\in\bin^\sec$, $\reduction^{f,\adv}(y)$ makes at most $q(\sec)$ queries to the oracles $O$ and $\adv$.
\end{itemize}

\end{definition}

We define a fully black-box $\alpha$-separation to embody the impossibility of any fully black-box construction abiding a trade-off (parameterized by $\alpha$) between the complexity of the underlying reduction and its success probability. A larger value of $\alpha$ gives stronger separation and, in particular, superpolynomial $\alpha$ indicates the impossibility of a reduction that is both polynomial time and has non-negligible advantage, as typically required in the traditional cryptographic setting.

\begin{definition}[Black-box Separation of Private-key Quantum Money Schemes From \qefid Pairs]\label{def:fbb-sep}
    We say that private-key quantum money schemes are \emph{$\alpha$-separated} from \qefid pairs, for $\alpha(\sec)>1$, if for any $(q,\epsilon)$-fully black-box construction of private-key quantum money schemes from \qefid pairs, it holds that either
    \begin{enumerate}
        \item $q(\sec)>O(\alpha(\sec))$, or
        \item $\epsilon(\sec)\leq O(1/\alpha(\sec))$.
    \end{enumerate}
\end{definition}

\ifnum\submission=1

\subsection{Additional Related Quantum Cryptography Primitives} \label{sec:more-defs}

\fi

\section{From Absolute-Gap Distinguisher to Positive-Gap Distinguisher}\label{app:abs-to-pos}

\abstopos*

The proof follows by the same reduction as in~\cite[Corollary 32]{STOC:KQST23} based on Yao's distinguishing/predictor lemma~\cite{Yao82} (also see~\cite{BG11}). We repeat the proof below.

\begin{proof}

We define the distinguisher $\badv^O$ to behave as follows upon receiving an input $x$:
\begin{enumerate}
    \item Sample $c\gets\{0,1\}$.
    \item Sample $x'\gets D^O_c$ and run $\adv^O(x')$ to get an outcome $d$. 
    \item Run $\adv^O(x)$ to get an outcome $e$.
    \item Output $c\oplus d\oplus e$.
\end{enumerate}

For any $O\in\oracle$, we define 
\begin{align*}
a(O)\coloneqq \Pr_{x\gets D^O_0}[\adv^O(x)=1] && \text{and} && b(O)\coloneqq \Pr_{x\gets D^O_1}[\adv^O(x)=1]. 
\end{align*}

It is easy to see that, for any fixed $O\in\oracle$, $\Pr[c\oplus d=0]=(1+a(O)-b(O))/{2}$ 
and $\Pr[c\oplus d=1]=(1+b(O)-a(O))/{2}$. Hence we get,
\begin{align*}
    &\Pr_{x\gets D^O_0}[\badv^O(x)=1]-\Pr_{x\gets D^O_1}[\badv^O(x)=1]\\
    &=\Pr[c\oplus d=0]\cdot \left(a(O)-b(O)\right) + \Pr[c\oplus d=1]\cdot \left(b(O)-a(O)\right)\\
    &=\frac{1+a(O)-b(O)}{2}\cdot \left(a(O)-b(O)\right) + \frac{1+b(O)-a(O)}{2}\cdot \left(b(O)-a(O)\right)\\
    &=\left(a(O)-b(O)\right)^2\\
    &=|a(O)-b(O)|^2
\end{align*}

For a last step, we apply the Cauchy-Schwarz inequality as follows
\begin{align*}
    \Expct_{O\gets\oracle}\left[\Pr_{x\gets D_0}[\badv(x)=1] - \Pr_{x\gets D_1}[\badv(x)=1]\right]& = \Expct_{O\gets\oracle}[|a(O)-b(O)|^2]\\
    & \geq \Expct_{O\gets\oracle}[|a(O)-b(O)|]^2\\
    & = \delta^2.
\end{align*}
\end{proof}

\ifnum\submission=1
\input{complementary-proofs}

\else
\section{Proof of Generalized Reflection Emulation}\label{sec:jls-proof-full}

\reflectjls*

\begin{proof}
%where $T\coloneqq T_1\|T_2\|\cdots T_k$ is the concatenation of all the $T_i$ registers. 
Assume without loss of generality that $\adv^{Q,R_{\psi}}$ is just a unitary followed by the discarding of some registers, and that the input state is pure.\footnote{If the input state is a mixed state, then we can instead consider a purification of the input state, and instead consider that $A$ acts on the entire purified state while acting as identity on the purification registers.} We define $\badv$ as follows. Let $T$ denote the input register where the copies of $\ket{\psi}$ reside and let $\ket{\theta}_{T}\coloneqq \ket{\psi}^{\otimes \ell}$. $\badv$ behaves the same as $\adv$ does, except that every query to the oracle $R_{\psi}$ on some register $D$ is replaced by applying the reflection about the symmetric subspace, i.e., $\reflect_\symd$, on the registers $D$ and $T$. Here, $\CC^N$ represents the Hilbert space in which the state $\ket{\psi}$ resides and $\symd$ is the symmetric subspace over the $\ell+1$ registers (i.e., the $\ell$ registers of $T$ and the input register $D$) with respect to $\CC^N$.

Clearly, the runtime of $\badv$ is polynomial in $\ell$, $q$, and the runtime of $\adv$, because implementing the reflection about the symmetric subspace can be done in time polynomial in the number of the number of registers, and the number of qubits for each register~\cite[Page 17, 18]{C:JiLiuSon18} and~\cite{BBD+97}.
%\mor{Implementing the reflection about the symmetric subspace can be done in poly?}\amit{Yes I have added a citation.}

We will show that the intermediate state of $\adv$ and $\badv$, right after the first $R_\psi$-query made by $\adv$ or (resp.) the simulation thereof is made by $\badv$, are statistically close. A bound on the distance between the final outcome of the two algorithms is then implied by inducting the same argument over all oracle queries.

Let $\ket{\phi}$ be the intermediate state just before the first oracle query to $R_\psi$ or the simulated query under the algorithms $\adv$ or, respectively, $\badv$ (up to this point the algorithms are identical and so is their intermediate state). Let $D$ be the query register on which $\adv$ queries $R_{\psi}$ and $W$ denote the rest of the registers.
We can write $\ket{\phi}$ as $ \sum_{s}c_s\ket{s}_W\otimes \ket{\phi^s}_D$ for some pure states $\ket{\phi^s}$ and amplitudes $c_s$.  
%\mor{and (not necessarily) normalized computational basis state $|s\rangle$?}\amit{Yeah I meant $c$ times this state. I have added a $c$ now.}
%\mor{I mean $\sum_s c_s\ket{s}_W\otimes\ket{\phi^s}_D$. No?}

Define

\begin{align*}
\ket{\Psi^s_\adv}&\coloneqq \reflect_{\psi}(\ket{\phi^s}_D)\otimes \ket{\theta}_{T},\ \text{and} \\
\ket{\Psi^s_\badv}&\coloneqq \reflect_\symd(\ket{\phi^s}_D\otimes\ket{\theta}_{T}). 
\end{align*}

Clearly, the state of all the registers, including the $T$ register under the algorithms $\adv$ and $\badv$ are, respectively,
\begin{align*}\ket{\tilde{\Psi}_\adv}=\left(\sum_{s}c_s\ket{s}_W\otimes \ket{\Psi^s_\adv}_{D,T}\right) && \text{and}
 && \ket{\tilde{\Psi}_\badv}=\left(\sum_{s}c_s\ket{s}_W\otimes \ket{\Psi^s_\badv}_{D,T}\right).
\end{align*}

Note that,
\begin{align}\label{eq:bound-on-inner-product}
|\braket{\tilde{\Psi}_\adv}{{\tilde{\Psi}_\badv}}|& =\left|\sum_s |c_s|^2 \braket{\Psi^s_\adv}{\Psi^s_\badv}\right|.
\end{align}
%where $s_{min}=\arg\min_s |\bra{\Psi^s_\adv}\ket{\Psi^s_\badv}|$.

In the proof of~\cite[Theorem 4]{C:JiLiuSon18}, the following was shown:
\begin{claim}\label{claim:JLS-proof-copied}
    For any pre-query state $\ket{\phi}_D$, the corresponding output states $\ket{\Psi_\adv}_{D,T}$ and $\ket{\Psi_\badv}_{D,T}$ defined as
\begin{align*}
\ket{\Psi_\adv}\coloneqq R_{\psi}(\ket{\phi})\otimes \ket{\theta}_{T} && \text{and}
 &&
\ket{\Psi_\badv}\coloneqq \reflect_\symd(\ket{\phi}\otimes\ket{\theta}_{T}), 
\end{align*}
satisfy, $\braket{\Psi_\adv}{\Psi_\badv}\in \RR$ and
\begin{align*}
\braket{\Psi_\adv}{\Psi_\badv}\geq 1-\frac{2}{{\ell+1}},
\end{align*}
which is non-negative since $\ell\geq 4$.
% \mor{$\langle \Psi_\adv|\Psi_\badv\rangle$ is always real?}\amit{Yes. I have added it explicitly now.}
\end{claim}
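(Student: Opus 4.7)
My plan is to reduce this to a direct computation in a carefully chosen basis. First, I would decompose $\ket{\phi} = \alpha \ket{\psi} + \beta \ket{\phi'}$ where $\ket{\phi'}$ is some unit vector orthogonal to $\ket{\psi}$ and $|\alpha|^2 + |\beta|^2 = 1$. This splits the analysis into two pieces, since both the $R_{\psi}$ channel on register $D$ and the symmetric-subspace reflection on $D,T$ act linearly. In particular, using $R_\psi = I - 2\proj{\psi}$, I get
\[
\ket{\Psi_\adv} \;=\; -\alpha\,\ket{\psi}^{\otimes \ell+1} \;+\; \beta\,\ket{\phi'}\otimes\ket{\psi}^{\otimes \ell}.
\]
For $\ket{\Psi_\badv}$, I note that the reflection about the symmetric subspace is naturally $\reflect_{\symd} = I - 2\Pi_{\symd}$ (so that it acts as $-1$ on the symmetric subspace, matching the behavior of $R_\psi$ on the $+1$-eigenstate $\ket{\psi}^{\otimes \ell+1}$). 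The state $\alpha\,\ket{\psi}^{\otimes \ell+1}$ already lies in $\symd$, and contributes $-\alpha\,\ket{\psi}^{\otimes \ell+1}$ after reflection, matching $\ket{\Psi_\adv}$ exactly on this part.

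The heart of the calculation is the ``$\beta$-piece''. Letting $\ket{s_i}\coloneqq \ket{\psi}^{\otimes i}\otimes\ket{\phi'}\otimes\ket{\psi}^{\otimes \ell-i}$ for $i=0,\dots,\ell$, the symmetric-subspace projector averages over permutations and yields
\[
\Pi_{\symd}\!\left(\ket{\phi'}\otimes\ket{\psi}^{\otimes \ell}\right) \;=\; \frac{1}{\ell+1}\sum_{i=0}^{\ell}\ket{s_i},
\]
so that
\[
\reflect_{\symd}\!\left(\ket{s_0}\right) \;=\; \frac{\ell-1}{\ell+1}\,\ket{s_0} \;-\; \frac{2}{\ell+1}\sum_{i=1}^{\ell}\ket{s_i}.
\]
Putting the two pieces together gives an explicit expansion of $\ket{\Psi_\badv}$ in the (not necessarily orthonormal, but pairwise orthogonal) family $\{\ket{\psi}^{\otimes \ell+1},\ket{s_0},\ldots,\ket{s_\ell}\}$, using that $\langle \phi'|\psi\rangle = 0$ makes the $\ket{s_i}$ mutually orthogonal and orthogonal to $\ket{\psi}^{\otimes \ell+1}$.

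Taking the inner product with $\ket{\Psi_\adv} = -\alpha\,\ket{\psi}^{\otimes \ell+1} + \beta\,\ket{s_0}$, all cross terms vanish by orthogonality, and only the diagonal contributions survive:
\[
\braket{\Psi_\adv}{\Psi_\badv} \;=\; |\alpha|^2 \;+\; |\beta|^2\cdot\frac{\ell-1}{\ell+1} \;=\; 1 - \frac{2}{\ell+1}\bigl(1-|\alpha|^2\bigr) \;\geq\; 1-\frac{2}{\ell+1}.
\]
This quantity is manifestly real and nonnegative (for $\ell \geq 1$, in fact). The main obstacle I anticipate is pure bookkeeping: one must be careful about the convention for $\reflect_{\symd}$ (the convention that flips the sign on $\symd$, dual to $R_\psi$ flipping the sign on $\ket{\psi}$), and about handling the symmetric projector's combinatorial factor $\frac{1}{\ell+1}$ correctly so that the cancellations work out in the promised $1-\frac{2}{\ell+1}$ form.
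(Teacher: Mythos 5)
Your proof is correct, and it reaches exactly the paper's intermediate formula $\braket{\Psi_\adv}{\Psi_\badv}=\frac{\ell-1}{\ell+1}+\frac{2}{\ell+1}|\braket{\psi}{\phi}|^2$, just organized differently. The paper works at the operator level: it expands $\Pi_{\symd}$ as the average $\frac{1}{(\ell+1)!}\sum_\pi W_\pi$ over permutation operators, computes the matrix elements $\bra{x}\otimes\bra{\theta}\reflect_\symd\ket{y}\otimes\ket{\theta}$ for arbitrary computational basis states by splitting the sum into permutations with $\pi(1)=1$ and $\pi(1)\neq 1$, and thereby derives the compressed-operator identity $(I\otimes\bra{\theta})\reflect_\symd(I\otimes\ket{\theta})=\frac{\ell-1}{\ell+1}I-\frac{2\ell}{\ell+1}\ketbra{\psi}{\psi}$, after which the inner product is evaluated as a trace against $\ketbra{\phi}{\phi}R_\psi$. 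You instead decompose the specific input state as $\alpha\ket{\psi}+\beta\ket{\phi'}$ and track the action of $\reflect_\symd$ on the two resulting vectors directly, using the mutual orthogonality of $\ket{\psi}^{\otimes\ell+1}$ and the $\ket{s_i}$ to kill the cross terms. The underlying combinatorics is identical (counting the $\ell!$ permutations that fix the distinguished register versus the rest), so the two arguments are really the same calculation in different coordinates; the paper's operator identity is marginally more general (it immediately handles mixed or entangled query registers, which is relevant since in the surrounding proof the query register $D$ is entangled with the workspace $W$), while your state-level version is more transparent about where the $1-\frac{2}{\ell+1}$ comes from. One small caveat: your decomposition implicitly assumes $\ket{\phi}$ is neither parallel nor orthogonal to $\ket{\psi}$ when choosing $\ket{\phi'}$, but both edge cases ($\beta=0$ or $\alpha=0$) degenerate gracefully, so nothing is lost.
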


We defer the proof of the above claim to the sequel. Combining \Cref{claim:JLS-proof-copied} with \Cref{eq:bound-on-inner-product} we conclude that 
\begin{align}\label{eq:bound-on-inner-product-2}
|\braket{\tilde{\Psi}_\adv}{{\tilde{\Psi}_\badv}}|& 
=\left|\sum_s |c_s|^2 \langle\Psi^s_\adv|\Psi^s_\badv\rangle\right|= \sum_s |c_s|^2 |\braket{\Psi^s_\adv}{\Psi^s_\badv}|\geq \min_s\braket{\Psi^s_\adv}{\Psi^s_\badv}=\braket{\Psi^{s_{\min}}_\adv}{\Psi^{s_{\min}}_\badv},
\end{align}
%\mor{$\langle\Psi_\adv^s|\Psi_\badv^s\rangle$ is always positive?}\amit{Yes by \Cref{claim:JLS-proof-copied}.}
where $s_{\min}=\arg\min_s |\braket{\Psi^s_\adv}{\Psi^s_\badv}|$.

Hence,
\begin{align*}
   \TD\left[\ket{\tilde{\Psi}_\adv},\ket{\tilde{\Psi}_\badv}\right]
   &=\sqrt{1-|\braket{\tilde{\Psi}_\adv}{\tilde{\Psi}_\badv}|^2}\\
   &\leq\sqrt{1-|\braket{\Psi^{s_{\min}}_\adv}{\Psi^{s_{\min}}_\badv}|^2}&\text{By \Cref{eq:bound-on-inner-product-2}.}\\
   &\leq \sqrt{1-\left(1-\frac{2}{\ell+1}\right)^2}&\text{By \Cref{claim:JLS-proof-copied}}\\
    &\leq \sqrt{1-(1-2\cdot\frac{2}{\ell+1})}=\sqrt{\frac{4}{\ell+1}}=\frac{2}{\sqrt{\ell+1}}.
   %&\leq \frac{2}{\sqrt{\ell+1}}.&\text{By \Cref{claim:JLS-proof-copied}.}
\end{align*}
Let $\ket{\tilde{\Psi}^q_\adv}$ and $\ket{\tilde{\Psi}^q_\badv}$ denote the final states of all registers (including the $T$ registers) of algorithms $\adv$ and $\badv$ before discarding any registers.  Then, by inducting
%\mor{it is helpful to add more explanation}\amit{Sure, I will do that.}\amit{I have added some more explanation in the above equations. If you think that the induction statement is not clear, then I think it is best to have q-hybrids instead where in the $j^{th}$ hybrid $B$ applies the simulation through reflection oracle up till the $j^{th}$ query and then on $(j+1)^{th}$ query onwards, it queries the oracle just as $A$ does. Then the above analysis will show that the final state of $B$ after all the queries in two neighboring hybrids is at most $\frac{2}{\sqrt{\ell+1}}$.}\amit{Another approach could be to state this theorem using a single oracle query and then have a corollary such as \Cref{cor:inductive-emulation}. Let me know what you think is better.}
on the number of oracle queries, %\amit{@Tomoyuki, JLS also does not go through a rigorous proof of the induction, which was informal but it did make sense to me, so I feel it is okay to not fully flesh out the induction proof.}
we conclude that if $\ket{\tilde{\Psi}^q_\adv}$ and $\ket{\tilde{\Psi}^q_\badv}$ denote the state of all registers including the $T$ registers after $q$ queries and just before discarding any registers, under algorithms $\adv$ and $\badv$ respectively, then,
\begin{align*}
\TD\left[\ket{\tilde{\Psi}^q_\adv},\ket{\tilde{\Psi}^q_\badv}\right]\leq \frac{2q}{\sqrt{\ell+1}}.
\end{align*}
%\mor{How this is obtained via the induction?}\amit{The same argument as for \Cref{cor:inductive-emulation} since we have a bound on the trace distance between the overall states. Since~\cite{C:JiLiuSon18} also do not give a full proof of the induction, I also removed it. Is that okay?}
Since discarding registers can only reduce the trace distance between two pure states, we conclude that
\begin{align}
\TD\left[\adv^{R_{\psi},Q}(\ket{\phi})\otimes\ket{\psi}^{\otimes \ell}, \badv^Q\left(\ket{\phi}\otimes \ket{\psi}^{\otimes \ell}\right)\right]\leq \frac{2q}{\sqrt{\ell+1}}.
\end{align}

We now recall the proof of \Cref{claim:JLS-proof-copied} for completeness.
First, note that $\reflect_\symd$ can be written as 
\begin{align}\nonumber
\reflect_\symd&=I-2\cdot\mathsf{Proj}_{\symd}\\
&=I-2\cdot\frac{1}{(\ell+1)!}\sum_{\pi\in \mathpzc{S}_{\ell+1}}W_\pi,\label{eq:reflect-symd-permutations}
\end{align}
where $\mathpzc{S}_{\ell+1}$ is the symmetric group over $\ell+1$ elements, 

and $W_\pi:=\sum_{x_1,\ldots,x_{\ell+1}\in \{0,1,\ldots,N-1\}}\ket{x_{\pi^{-1}(1)},\ldots,x_{\pi^{-1}(\ell+1)}}\bra{x_1,\ldots,x_{\ell+1}}$. For a proof of the last equality, see~\cite[Proposition 6]{Har13}.
Next, for any computational basis state $\ket{x}\ket{y}$, note that
\begin{align}\nonumber
    &\bra{x}\otimes\bra{\theta}\reflect_\symd\ket{y}\otimes\ket{\theta}\\\nonumber
    &=\bra{x}\otimes\bra{\theta}\left(I-2\cdot\frac{1}{(\ell+1)!}\sum_{\pi\in \mathpzc{S}_{\ell+1}}W_\pi\right)\ket{y}\otimes\ket{\theta}\\\nonumber
    &=\braket{x}{y}-\frac{2}{(\ell+1)!}\sum_{\pi:\pi(1)= 1}\bra{x}\otimes\bra{\theta}W_\pi\ket{y}\otimes\ket{\theta}-\frac{2}{(\ell+1)!}\sum_{\pi:\pi(1)\neq 1}\bra{x}\otimes\bra{\theta}W_\pi\ket{y}\otimes\ket{\theta}\\\nonumber
    &=\braket{x}{y}-\frac{2}{(\ell+1)!}\sum_{\pi:\pi(1)= 1}\bra{x}\otimes\bra{\psi}^{\otimes \ell}W_\pi\ket{y}\otimes\ket{\psi}^{\otimes\ell}-\frac{2}{(\ell+1)!}\sum_{\pi:\pi(1)\neq 1}\bra{x}\otimes\bra{\psi}^{\otimes \ell}W_\pi\ket{y}\otimes\ket{\psi}^{\otimes \ell}\\\nonumber %&\text{since $\ket{\theta}=\ket{\psi}^{\otimes \ell}$}\\
    &=\braket{x}{y}-\frac{2}{(\ell+1)!}\sum_{\pi:\pi(1)= 1}\braket{x}{y}-\frac{2}{(\ell+1)!}\sum_{\pi:\pi(1)\neq 1} \braket{\psi}{y}\braket{x}{\psi}\\\nonumber
    &=\braket{x}{y}-\frac{2\ell!}{(\ell+1)!}\braket{x}{y}-\frac{2((\ell+1)!-\ell!)}{(\ell+1)!} \braket{\psi}{y}\braket{x}{\psi}\\
    &=\frac{\ell-1}{\ell+1}\braket{x}{y}-\frac{2\ell}{\ell+1} \braket{\psi}{y}\braket{x}{\psi}.\label{eq:theta-x-y-jls-proof-elaborated}
\end{align}
Therefore,
we conclude that
\begin{align}\nonumber
    (I\otimes \bra{\theta}) \reflect_\symd (I\otimes \ket{\theta})
    &=(\sum_{x}\ketbra{x}{x}\otimes \bra{\theta}) \reflect_\symd (\sum_{y}\ketbra{y}{y}\otimes \ket{\theta})\\\nonumber
    &=\sum_{x,y}\ket{x}\left(\bra{x}\otimes\bra{\theta}\reflect_\symd\ket{y}\otimes\ket{\theta}\right)\bra{y}\\\nonumber
    &=\sum_{x,y}\left(\frac{\ell-1}{\ell+1}\braket{x}{y}-\frac{2\ell}{\ell+1} \braket{\psi}{y}\braket{x}{\psi}\right)\ketbra{x}{y}&\text{By \Cref{eq:theta-x-y-jls-proof-elaborated}}\\\nonumber
    &=\frac{\ell-1}{\ell+1}\sum_x \ketbra{x}{x} - \frac{2\ell}{\ell+1}\sum_{x,y} \braket{\psi}{y}\braket{x}{\psi}\ketbra{x}{y}\\
    &=\frac{\ell-1}{\ell+1}I - \frac{2\ell}{\ell+1}\ketbra{\psi}{\psi}.\label{eq:JLS-elaborate-operator-product}
\end{align}
Hence,
\begin{align}\nonumber
    \braket{\Psi_\adv}{\Psi_\badv}&=\Tr[(\ket{\phi}\otimes \ket{\theta})(\bra{\phi}\otimes \bra{\theta})\left((R_{\psi}\otimes I)\reflect_\symd\right)]\\\nonumber
    &=\Tr[\ketbra{\phi}{\phi}R_{\psi}\left(I\otimes \bra{\theta}\right)\reflect_\symd\left(I\otimes \ket{\theta}\right)]&\text{By Cyclicity of trace,}\\\nonumber
    &=\Tr[\ketbra{\phi}{\phi}\left(I-2\ketbra{\psi}{\psi}\right)\left(\frac{\ell-1}{\ell+1}I - \frac{2\ell}{\ell+1}\ketbra{\psi}{\psi}\right)]&\text{By \Cref{eq:JLS-elaborate-operator-product}}\\\nonumber
    &=\Tr[\ketbra{\phi}{\phi}\left(\frac{\ell-1}{\ell+1}I-\frac{2\ell}{\ell+1}\ketbra{\psi}{\psi}-\frac{2(\ell-1)}{\ell+1}\ketbra{\psi}{\psi}+\frac{4\ell}{\ell+1}\ketbra{\psi}{\psi}\right)]\\\nonumber
    &=\Tr[\ketbra{\phi}{\phi}\left(\frac{\ell-1}{\ell+1}I+\frac{2}{\ell+1}\ketbra{\psi}{\psi}\right)]\\\nonumber
    &=\frac{\ell-1}{\ell+1} + \frac{2}{\ell+1}|\braket{\psi}{\phi}|^2\\
    &\geq \frac{\ell-1}{\ell+1} =1-\frac{2}{\ell+1}.\label{eq:inner-prod-jls-proof-elaborate}
\end{align}
\begin{comment}
Therefore,
\begin{align*}
    \TD\left[\ket{\Psi_\adv}_{D,T},\ket{\Psi_\badv}_{D,T}\right]&=\sqrt{1-\left|\braket{\Psi_\adv}{\Psi_\badv}\right|^2}\\
    &\leq \sqrt{1-\left(1-\frac{2}{\ell+1}\right)^2}&\text{By \Cref{eq:inner-prod-jls-proof-elaborate}}\\
    &\leq \sqrt{1-(1-2\cdot\frac{2}{\ell+1})}=\sqrt{\frac{4}{\ell+1}}=\frac{2}{\sqrt{\ell+1}}.
\end{align*}
\end{comment}
This concludes the proof of \Cref{claim:JLS-proof-copied}. 
% \mor{This is different from the claim of \cref{claim:JLS-proof-copied},
% $\langle\Psi_\cA|\Psi_\cB\rangle\ge 1-\frac{2}{\sqrt{\ell+1}}$.
% This is shown in the above \cref{eq:inner-prod-jls-proof-elaborate}, but $\sqrt{}$ factor is different.
% }\amit{Sure I have commented out the part after "Therefore." The $\sqrt{}$ factor you are referring to was by mistake present in the bound of \Cref{claim:JLS-proof-copied}. Does it make sense now?}
%\amit{Copied proof of JLS ends here. @Tomoyuki, the full proof is here which also elaborates the proof of JLS in full generality. I feel this whole proof might be too long and not so interesting. Tamer proposed an alternate template at the beginning of this section. If you agree with it, then we can just comment out this whole proof, or if not, we can keep this whole proof but move it to the appendix after we have verified that it is okay. I do not think this part which is a repeat from JLS deserves so much space in the main text.}

\end{proof}

\fi

\end{document}